\documentclass[runningheads]{llncs}

\usepackage{syntax}
\usepackage{amsmath}
\usepackage{amssymb}
\usepackage{mathtools}
\usepackage{bussproofs}
\usepackage[table]{xcolor}
\usepackage{todonotes}
\usepackage{listings}
\usepackage{tikz,tikzsymbols}
\usepackage[inline]{enumitem}
\usepackage{abs}
\usepackage{jmllistings}
\usepackage{soul}
\usepackage[hidelinks]{hyperref}
\usepackage{tikz}
\usetikzlibrary{tikzmark}
\lstset{language=Java,columns=flexible,morekeywords={open,close,read,write}}


\lstdefinelanguage{Async}{
basicstyle=\small\normalfont\ttfamily,
numbers=left,
frame=lines,
mathescape=true,
numberstyle=\tiny,
comment=[l]{//},
commentstyle={\color{gray}\ttfamily},
identifierstyle=\color{black},
keywords={if, while, return, open,close,read,write,skip},
keywordstyle={\color{blue}\bfseries},
morecomment=[s]{/*}{*/},
morestring=[b]",
morestring=[s]{"""*}{*"""},
sensitive=true,
stringstyle={\color{brown}\ttfamily},
}

\newcommand{\async}[1]{\lstinline[language=Async]{#1}}
\newcommand{\xasync}[1]{\text{\lstinline[language=Async]{#1}}}

\newcommand{\LVar}{\ensuremath{\mathsf{LVar}}}

\newcommand{\RVar}{\ensuremath{\mathsf{RecVar}}}
\newcommand{\stateFml}[1]{\lceil#1\rceil}
\newcommand{\chop}{\ensuremath*\!*}
\newcommand{\concat}{\ensuremath\cdot}

\newcommand{\muFml}[2]{\ensuremath{\mu#1.#2}}

\newcommand{\interp}{\ensuremath{\mathsf{I}}}

\newcommand{\upP}[1]{\upl#1\upr}
\newcommand{\eval}[2]{\ensuremath{[\![#1]\!]_{#2}}}

\newcommand{\locEval}[2]{\ensuremath{[\![#1]\!]_{#2}}^{L}}
\newcommand{\globEval}[2]{\ensuremath{[\![#1]\!]_{#2}}^{G}}

\newcommand{\evalPhi}[4]{\ensuremath{[\![#1]\!]^{#2}_{#3,#4}}}
\newcommand{\evalPhiS}[1]{\ensuremath{\evalPhi{#1}{
  }{\rho}{o}}}

\newcommand{\evalPhiNoArgBig}[1]{\ensuremath{\left[\!\!\left[#1\right]\!\!\right]}}
\newcommand{\evalPhiNoArg}[1]{\ensuremath{[\![#1]\!]}}

\newcommand{\singleton}[1]{\ensuremath{\langle#1\rangle}}
\newcommand{\chopT}{\ensuremath{\underline{\chop}}}
\newcommand{\States}{\ensuremath{\Sigma}}
\newcommand{\Traces}{\ensuremath{\mathsf{Traces}}}

\newcommand{\noEvent}[1]{\stackrel{#1}{\ensuremath{\cdot\cdot}}}

\newcommand{\Todo}[2][]{%
\ifthenelse{\equal{#1}{done}}
  {
    \todo[backgroundcolor=green!30,inline]{#2}
  }
    {
      \ifthenelse{\equal{#1}{warning}}
      {
        \todo[backgroundcolor=orange!30,inline]{#2}
      }
      {
        {
          \ifthenelse{\equal{#1}{ongoing}}
          {
            \todo[backgroundcolor=blue!30,inline]{#2}
          }
          {
            \todo[backgroundcolor=red!30,inline]{#2}
          }
        }
      }
    }
}

\usepackage[nonumberofruns]{xassoccnt}

\NewTotalDocumentCounter{TodosCnt}
\NewTotalDocumentCounter{TodosDone}



\newcommand{\pushEv}{\ensuremath{\mathsf{push}}\xspace}
\newcommand{\popEv}{\ensuremath{\mathsf{pop}}\xspace}
\newcommand{\callEv}{\ensuremath{\mathsf{call}}\xspace}
\newcommand{\retEv}{\ensuremath{\mathsf{ret}}\xspace}
\newcommand{\startEv}{\ensuremath{\mathsf{start}}\xspace}

\newcommand{\invocEv}{\ensuremath{\mathsf{invoc}}\xspace}

\newcommand{\upl}{\ensuremath{\{}}
\newcommand{\upr}{\ensuremath{\}}}
\newcommand{\update}{\ensuremath{\mathcal{U}}}

\newcommand{\judge}[2]{\ensuremath{#1:#2}}

\newcommand{\finiteNoEv}[1]{\ensuremath{\overset{\begin{subarray}{l}#1\end{subarray}}}{\cdot\cdot\cdot}}

\newcommand{\turnstyle}{\vdash}

\newcommand{\ruleName}[1]{{\ensuremath{\mathsf{#1}}}}

\newcommand{\seq}[2]{#1\turnstyle#2}
\newcommand{\semseq}[2]{#1\models#2}

\newcommand{\sequent}[2]{\seq{\Gamma\ifx#1\relax\else,#1\fi} 
                             {#2}}
\newcommand{\semsequent}[2]{\semseq{\Gamma\ifx#1\relax\else,#1\fi} 
                                   {#2}}



\newcommand{\adfrac}[2]{\genfrac{}{}{}{0}
  {\begin{array}{l}#1\end{array}}
  {\begin{array}{l}#2\end{array}}}


\newcommand{\seqRule}[3]{\mbox{\small\ruleName{#1}}\ 
  \adfrac{#2}{#3}}





\newcommand{\Simple}[1]{\mbox{\lstinline{#1}}}
\newcommand{\code}[1]{\mbox{\lstinline|#1|}}



\newcommand{\mycomment}[1]{}



\newcommand{\methods}[1]{\method(#1)}
\newcommand{\Gi}{\ensuremath \mathcal{G}}


\newcommand{\last}{\ensuremath \mathrm{last}}

\newcommand{\cont}[1]{\ensuremath \mathrm{K}(#1)}

\newcommand{\contP}[2]{\ensuremath \mathrm{K}^{#1}(#2)}
\newcommand{\contF}[2]{\ensuremath \mathrm{K}^{#1}(#2)}
\newcommand{\zero}{\ensuremath \circ}
\renewcommand{\zero}{\mbox{\bottle}}


\newcommand{\trueSem}{\ensuremath \mathrm{t\!t}}
\newcommand{\falseSem}{\ensuremath \mathrm{f\!f}}

\newcommand{\valB}[2]{\ensuremath \mathrm{val}_{#1}(#2)}

\newcommand{\valP}[2]{\ensuremath \mathrm{val}_{\sigma}^{#1}(#2)}
\newcommand{\valPC}[3]{\ensuremath \mathrm{val}_{\sigma}^{#1,#2}(#3)}
\newcommand{\valBP}[3]{\ensuremath \mathrm{val}_{#1}^{#2}(#3)}

\newcommand{\valDnoargs}{\ensuremath \mathrm{val}_{\sigma}}
\newcommand{\chopSem}{\underline{\mathbin{\ast\ast}}}
\newcommand{\updatestate}[3]{#1[#2 \mapsto #3]}
\newcommand{\chopTrSem}[2]{\ensuremath #1 \chopSem #2}
\newcommand{\concatTr}[2]{\ensuremath #1\cdot #2}
\newcommand{\consTr}[2]{\ensuremath #1 \curvearrowright #2}
\newcommand{\cons}{\ensuremath \curvearrowright}

\newcommand{\UNUSED}[1]{}
\newcommand{\OUTDATED}[1]{}

\DeclareMathOperator{\method}{procedures}

\DeclareMathOperator{\lp}{lookup}

\newcommand{\lookup}[1]{\lp(#1)}


\newcommand{\ifStmt}[2]{\Simple{if} \ #1 \ \{~ #2 ~\}\xspace}

\newcommand{\assignStmt}[2]{#1=#2\xspace}

\newcommand{\returnStmt}[1]{\xasync{return}\ #1\xspace}
\newcommand{\openStmt}[1]{\Simple{open}(#1)\xspace}
\newcommand{\closeStmt}[1]{\Simple{close}(#1)\xspace}
\newcommand{\readStmt}[1]{\Simple{read}(#1)\xspace}
\newcommand{\writeStmt}[1]{\Simple{write}(#1)\xspace}



\newcommand{\semPair}[2]{#1,\,K(#2)}

\newcommand{\semSeq}[3]{#1 \overset{#3}{\to} #2}


\newcommand{\inline}{\mathrm{inline}}


\newcommand{\event}[2]{\mathsf{#1}(#2)}
\newcommand{\invocEvP}[2]{\mathsf{invoc}_{#1}(#2)} 
\newcommand{\callEvP}[2]{\mathsf{call}_{#1}(#2)}
\newcommand{\runEv}{\mathsf{run}}
\newcommand{\pushEvP}[2]{\mathsf{push}_{#1}(#2)}
\newcommand{\popEvP}[2]{\mathsf{pop}_{#1}(#2)}

\newcommand{\openEvP}[2]{\mathsf{open}_{#1}(#2)} 
\newcommand{\closeEvP}[2]{\mathsf{close}_{#1}(#2)}
\newcommand{\readEvP}[2]{\mathsf{read}_{#1}(#2)} 
\newcommand{\writeEvP}[2]{\mathsf{write}_{#1}(#2)}

\newcommand{\evTrioPV}[3]{\textit{ev}_{#1}^{#3}({#2})} 








%
\newcommand{\ev}[1]{\textit{ev}(#1)} 
\newcommand{\evp}[2]{\textit{ev}^{#1}(#2)} 
\newcommand{\stateorevent}{t}

\newcommand{\many}[1]{\overline{#1}}

\newcommand{\gcondrule}[4]{ 
  \begin{equation}
    \label{eq:rule#1}
    \textsc{({#2})}\,
  \begin{array}{c} 
    #3 \\[1pt] 
    \hline\\[-7pt]
    #4 
  \end{array} 
 \end{equation}
    }

\newcommand*{\opf}{\event{op}{f}}

\newcommand*{\clf}{\event{cl}{f}}
\newcommand*{\workf}{\event{w}{f}}


\usepackage{calc}
\usepackage{xspace}

\usepackage{todonotes}

\usepackage{wrapfig}

\newcommand{\assumeAndContinue}[3]{\ensuremath{\assume{#1} \ | \ #2 \ | \ \continue{#3}}}

\newcommand{\assumeAndContinueShort}[3]{\ensuremath{\assume{#1} \big|\big. #2 \big|\big. \continue{#3}}}
\newcommand{\assume}[1]{\ensuremath{<\hspace{-5pt}<\!#1}}

\newcommand{\continue}[1]{\ensuremath{#1\!>\hspace{-5pt}>}}

\newcommand{\GlobalJudge}[2]{\ensuremath{#1:_G#2}}

\title{Context-aware Trace Contracts}


\author{Reiner H\"ahnle\inst{1} \and Eduard Kamburjan\inst{2} \and Marco Scaletta\inst{1}}

\institute{Technical University of Darmstadt, Germany,
 \email{<firstName>.<lastName>@tu-darmstadt.de} \and University of Oslo, Norway, \email{eduard@ifi.uio.no}}

\begin{document}

\maketitle


\begin{abstract}
  The behavior of concurrent, asynchronous procedures depends in
  general on the call context, because of the global protocol that
  governs scheduling. This context cannot be specified with the
  state-based Hoare-style contracts common in deductive verification.
  Recent work generalized state-based to trace contracts, which permit
  to specify the internal behavior of a procedure, such as calls or state
  changes, but not its call context.
  In this article we propose a program logic of context-aware trace
  contracts for specifying global behavior of asynchronous
  programs.
  We also provide a sound proof system that addresses two challenges:
  To observe the program state not merely at the end points of a
  procedure, we introduce the novel concept of an observation
  quantifier. And to combat combinatorial explosion of possible call
  sequences of procedures, we transfer Liskov's principle of
  behavioral subtyping to the analysis of asynchronous procedures.
  
\end{abstract}
%

\section{Introduction}\label{sec:introduction}


Contracts \cite{Larch93,Meyer92} are a cornerstone of the
rely-guarantee paradigm for verification \cite{Jones81}, as it enables
the decomposition of a program along naturally defined boundaries.  A
well-established example are procedure contracts which encapsulate the
behavior of the execution of a single procedure in terms of pre- and
post-condition.\footnote{Additional specification elements, such as
  frames or exceptional behavior, can be considered as syntactic sugar
  to achieve concise post-conditions.}  Traditionally, a pre-condition
describes the state at the moment a procedure is called, and the
post-condition describes the state at the moment the procedure
terminates. The contract-based approach to deductive verification
permits to verify a program in a \emph{procedure-modular} manner and
so makes it possible to conduct correctness proofs of sequential
programs of considerable complexity and size in real programming
languages \cite{TimSort17,HaehnleHuisman19}.

A concurrent setting poses a completely different challenge, because
concurrently executing procedures may interfere on the state, causing
a myriad of possibly different behaviors, even for small
programs. Contracts in the presence of fine-grained concurrency tend
to be highly complex, because they have to encode substantial parts of
the invariants of the whole system under
verification~\cite{BBBB12}. It was rightly argued since long that a
suitable \emph{granularity} of interference is key to arrive at
manageable specifications of concurrent programs~\cite{Jones95b}.

\paragraph{Main Contribution.}
  
The present paper constitutes an effort to generalize the
contract-based approach to the verification for concurrent
programs. Generalization involves two aspects: First, it is necessary
to specify sets of \emph{traces} of a verified program, not merely
sets of pre-/post states, to be able to refer to the context of a
program and to internal events, such as other procedure
calls. Trace-based logics, such as temporal logic, are standard to
specify concurrent programs. Here, we use a recent trace logic
\cite{DBLP:journals/corr/abs-2211-09487} that is exogenous
\cite{Pnueli77} (i.e., allows judgments involving explicit programs)
and can characterize procedure calls.
Second, and this is our central contribution, not merely the procedure
under verification is specified by traces, but
\emph{pre-/postconditions are generalized to traces}. This permits to
specify the context in which a contract is supposed to be applied
without ghost variables or other auxiliary constructs. As we are going
to show, this approach admits a \emph{procedure-modular} deductive
verification system for concurrent programs, where the correctness of
each procedure contract implies the correctness of the conjunction of
all contracts, i.e.\ of the whole program.

\paragraph{Setting.}

There is an important limitation: our current approach does not work
for preemptive concurrency, but is targeted at the \emph{active
  object} paradigm \cite{ActiveObjects17}. With asynchronous
procedures and syntactically explicit suspension points, active
objects are a good trade-off between usability and verifiability, and
they feature appropriate \emph{granularity} to render contract-based
specification useful.

Earlier research \cite{DinBH15,DinOwe15} showed that active objects
are amenable to deductive verification, however, that work suffered
from limitations:
Asynchronous procedure calls, even with explicit suspension points,
cannot be encapsulated in a big-step abstraction using
pre-/post-conditions to describe state.  Instead, a specification must
at least partially describe the possible \emph{traces} resulting from
procedure execution, including (procedure-)\emph{internal} events,
such as synchronization, to reason about concurrency.
Because of this, the state-based approach of \cite{DinBH15,DinOwe15}
turned out to be problematic in two aspects: First, it necessitates
the use of ghost variables, in this case for recording event histories
during symbolic execution.  This, in turn, requires to reason about
histories as a data structure, hindering proof search
automation. Second, \emph{specification} of a procedure's context is
done in terms of \emph{state}.
This makes it impossible to specify the history and future wherein a
procedure is expected to operate correctly.
For example, a procedure relying on a given resource may require that
certain operations to prepare that resource were completed once it
starts. Dually, it might expect that its caller cleans up
afterwards. In particular, properties that stipulate the existence of
global traces, such as liveness, cannot be expressed with state-based
contracts. The context-aware trace contracts we define below do not
require ghost variables and they let one specify the history and
future of a called procedure.


\paragraph{Approach.}

The sketched limitations of state-based approaches to specification of
concurrent programs suggest to explore \emph{trace-based}
specification contracts.
We present \emph{context-aware trace contracts} (CATs). Syntactically,
these are formulas of a logic for symbolic traces
\cite{DBLP:journals/corr/abs-2211-09487}, generalizing first-order
pre- and post-conditions. One immediate consequence is that trace
elements, such as events, become first-class citizens and need not be
modeled with ghost variables. The possible traces of a given procedure
$m$ are specified with a CAT $C_m$ in the form of a judgment $m:C_m$,
meaning that all possible traces of $m$, \emph{including their
  context}, are described by $C_m$.

Our main contribution is to fashion these contracts as
\emph{context-aware}. This means that a CAT for a procedure $m$
consists of three parts: A generalized pre-condition describing the
assumed \emph{trace} up to the moment when $m$ starts; the possible
traces produced by executing $m$; and a post-condition that again is a
trace describing the assumed operations taking place after termination
of $m$.  The generalization of pre-/post-conditions to traces requires
careful examination of allocation of guarantees: A procedure
\emph{guarantees} only its \emph{post-state} but it \emph{assumes} the
system continues in a certain fashion---this is, however, not
guaranteed by the procedure, but by the \emph{caller}.

Let us illustrate CATs with an example. Consider a procedure
\texttt{work} that operates on a file. Its pre-condition is that the
file was opened.  Its internal specification is that the file may be
read or written to, but nothing else.  Its post-condition is that (A)
in the final state of \texttt{work}, a flag indicating that it has
finished is set, and that (B) after \texttt{work} terminated, the file
will be closed.  While the procedure assumes that the file has been
opened upon start, itself it can only guarantee (A), while it is the
caller's obligation to ensure (B).

\paragraph{Summary of Contributions and Structure.}

Our main contribution is the generalization of state-based procedure
contracts to the context-aware, trace-based CAT model that permits not
merely to specify the behavior of a procedure, but also its context.  We
apply the CAT theory to a simplified active object concurrency model,
where trace specification involves communication events, and the proof
calculus must keep track of when a procedure may start execution and when
its pre-condition must hold.

To render specification and verification practical, we add two further
ingredients: the novel concept of an \emph{observation quantifier}
lets one record the (symbolic) value of a program variable at a given
point in a CAT. This can be seen as a generalization of \texttt{old}
references in state-based contract languages \cite{JML-Ref-Manual} and
lets one compare the symbolic values of program variables at different
points within a CAT. Second, to reduce the specification and
verification effort, we generalize Liskov's behavioral subtyping
principle \cite{LiskovWing94} to CATs.

%

We first discuss the state of the art in Sect.~\ref{sec:sota}, before
we introduce our programming model in Sect.~\ref{sec:semantics}.  The
CAT concept is based on a trace logic, described in
Sect.~\ref{sec:logic}, and CATs themselves are described in
Sect.~\ref{sec:spec}.  The proof calculus is given in
Sect.~\ref{sec:calc}, before we give an example in Sect.~\ref{sec:case}, describe the Liskov principle in
Sect.~\ref{sec:liskov} and conclude in Sect.~\ref{sec:conc}.


\section{State of the Art}\label{sec:sota}

\paragraph{Traces and Contracts.}

Specifying traces in logic has a long tradition, for example, using
Linear Temporal Logic (LTL) \cite{Wolper83} for events. We focus here
on trace logics used for deductive verification.  Our use of traces
for internal specification is based on work by Bubel et
al.~\cite{DBLP:journals/corr/abs-2211-09487}, which is following a
line of research going back to Dynamic Logic with Co-Inductive
traces~\cite{BubelCHN15} that focused on a sequential while language
and is not connected to contracts.

Dynamic Trace Logic (DTL)~\cite{bern} is an extension of dynamic logic
that uses LTL formulas as post-conditions. DTL was investigated for a
fragment of Java without concurrency and implemented as a prototype in
th KeY system. 
It only specifies changes of the state and no events, and is targeting
internal behavior of methods.

ABSDL~\cite{DinBH15,DinOwe15} uses first-order logic and a ghost
variable to keep track of events. As discussed in the introduction,
the use of a ghost variable to encode trace properties in state
predicates leads to complex specification patterns.  ABSDL was
implemented in KeY-ABS~\cite{DinBH15} for the Active Object language
ABS~\cite{ABSFMCO10}.  As specification, however, it only supports
object invariants, procedure contracts must be encoded. This
specification principle does not scale to complex systems and
protocols and, as it keeps track of events, but not of states, it
cannot handle specifications for internal state change. The
specifications are also procedure-local: one cannot express global
system properties or even liveness.

Behavioral Program Logic (BPL)~\cite{DBLP:conf/tableaux/Kamburjan19}
is a parametric logic with trace-based semantics, that has been
instantiated for a calculus that supports behavioral procedure
contracts~\cite{DBLP:series/lncs/KamburjanDHJ20}.  A behavioral
procedure contract supports limited specification of context, by
specifying which procedures are allowed to run before execution
starts, but it does so in a dedicated specification pattern called
\emph{context set}, which is not uniform for trace logic.  The
approach is implemented in the Crowbar tool~\cite{crowbar} for ABS and
based on \emph{Locally Abstract, Globally Concrete (LAGC)}
semantics~\cite{DBLP:journals/corr/abs-2202-12195,DHJPT17}, a
bilayered trace semantics that differentiates between local traces of
statements and global traces of programs.  Local traces are
parameterized with a concurrent context and combined into a concrete
global trace once that context becomes known due to scheduling
decisions. In the end, a set of concrete traces is produced for a
given program.

\paragraph{Context and Contracts.}

With \emph{context} we mean the execution trace before and after a
procedure is executed, as specified from the perspective of the
procedure. In the following, we discuss approaches that view traces
from a similar perspective.  Behavioral contracts for Active Objects
are discussed above.

Session types are a typing paradigm for specification and verification
of event traces in concurrent systems~\cite{Honda08}, which was
adapted to active objects~\cite{DBLP:conf/icfem/KamburjanDC16}.
Session types have a projection mechanism that generates a local
specification from a global one.  Projection is re-interpreted as
generation of proof obligations within BPL, where \emph{global}
soundness relies on the \emph{implicit} context of the global
specification~\cite{DBLP:conf/tableaux/Kamburjan19,DBLP:conf/ifm/KamburjanC18}.

Typestate~\cite{DBLP:conf/oopsla/AldrichSSS09,DBLP:conf/ecoop/DeLineF04}
is an approach, where a trace of procedure calls is specified at the
object level. In terms of the file example in the introduction, the
order of opening, writing/reading, and closing a file would be part of
the specification of the file class.  Typestate has recently been
integrated in deductive
verification~\cite{DBLP:journals/corr/abs-2209-05136}, but is
complementary to contracts: it specifies traces from the point of view
of the entities that are the target of events, while contracts specify
traces from the point of view of the origin of events. In particular,
it presupposes object-like structures. As we can see later, this is
unnecessary for contracts, which merely require \emph{procedures}.

In this article, we focus on specifying the \emph{temporal} context of
a procedure call, i.e., the preceding and subsequent events.  Orthogonal
to this, and not our concern here, is the \emph{spatial} context,
where one specifies the relation between different parts of the heap
memory, such as in Separation
Logic~\cite{DBLP:conf/concur/OHearn04,DBLP:conf/lics/Reynolds02},
Dynamic Frames \cite{Kassios11}, Permission Logic \cite{MSS16},
etc. These are \emph{state}-based formalisms. To combine them with
CATs is future work.


\section{Program Semantics}\label{sec:semantics}

\newcommand{\acronym}{\texttt{Async}\xspace}

\subsection{The \acronym Language}
\label{sec:async-language}

\subsubsection{Syntax}

We define the \acronym language, a small, imperative language that
features asynchronous as well as synchronous procedure calls, and a
tree-like concurrency model.  It slightly simplifies typical Active
Object languages \cite{ActiveObjects17}, but it is close enough to
expose the challenges of trace-based semantics and contracts for
languages with cooperative scheduling.  We permit recursion only for
synchronous calls, which is sufficient in practice.
For the sake of being able to present relevant examples, we add a
small domain-specific language extension with file operations (second
line of statement rule in Fig.~\ref{fig:lang-syntax}).

\begin{figure}[b!t]
  $\begin{array}[t]{r@{\hspace{2pt}}r@{\hspace{2pt}}l@{\hspace{2pt}}r@{\hspace{2pt}}r@{\hspace{2pt}}l}
  P\in\mathit{Prog}&::=&\overline{M}\,\{d \ s\} & e \in\mathit{Expr}&::=& f \mid \cdots\\ 
  M\in\mathit{ProcDecl} & ::= & m()\,\{s; \ \xasync{return}\}  
  & d\in\mathit{VarDecl}  & ::= &\varepsilon \mid x; \ d 
  \\ 
  s\in\mathit{Stmt} &::= & \multicolumn{4}{l}{\xasync{skip} \mid  x = e \mid m()  \mid \, !m() \mid  \xasync{if}(e)\{s\} \mid s;s \mid }\\ 
  &&\xasync{open}(f) \mid \xasync{close}(f) \mid \xasync{read}(f) \mid \xasync{write}(f)
  \end{array}$
  \caption{Syntax of \acronym}
  \label{fig:lang-syntax}
\end{figure}

\begin{definition}[Syntax]
  The syntax of programs $P$ and their elements is given by the
  grammar in Fig.~\ref{fig:lang-syntax}.
  A program $P$ consists of a set of procedures given by $\methods{P}$
  and an init block that declares the global variables and the initial
  statement to start execution.  The global lookup table $\Gi$ is
  defined by
  $\Gi= \many{\langle m()\,\{s;\xasync{return}\} \rangle}_{m\in
    \methods{P}}\ $.
  Let $\mathsf{PVar}$ be the set of program variables with typical
  value $x$.
  Let $m$ range over procedure names and $f$ over file descriptors.
\end{definition}

There are no type annotations and no local variables in \acronym. A
procedure $M$ has a name $m$, and a procedure body for execution.
Each procedure ends with a $\returnStmt\!\!$ statement, serving merely as
a syntactic marker to simplify the semantics of process termination:
There are neither return values, nor procedure parameters, all of
which can be encoded as global variables.
A statement $s$ is either a standard imperative construct, like
assignment or sequential composition, a synchronous call $m()$, an
asynchronous call $!m()$ or a file operation for opening, closing,
reading and writing a file\footnote{We do not add the value to be
  written as a parameter, again for simplicity. This can be easily
  modelled with a global variable, if desired.}. We underspecify the
set of expressions, but require that file identifiers are literals. We
assume expression evaluation to be total.

\lstset{language=Async}

To guide the presentation and motivate our approach we use the
following running example.
\begin{example}\label{ex:files}
  The following program writes to two files using the \lstinline{do}
  procedure.  This procedure opens the file stored in the global
  variable \lstinline{file}, asynchronously issues its closing, and
  then calls \lstinline{operate}.

  \begin{lstlisting}[language=Async,numbers=none]
    do() { open(file); !closeF(); operate(); return; }
    operate() { write(file); return; }
    closeF() { close(file); return; }
    {file; file = "file1.txt"; do(); file = "file2.txt"; do(); }
  \end{lstlisting}

\subsubsection{Concurrency Model}

Before we formalize the semantics of \acronym, we point out its
cooperative tree-like scheduling for concurrency. By cooperative, we
mean that a process is preemption-free: Once a procedure starts, it
runs until the end of its code before another procedure can be
scheduled. This is standard in both Actor and Active Object languages
\cite{ActiveObjects17}. By tree-like, we mean that all asynchronously
called procedures by a process $p$, are guaranteed to run
\emph{directly after} $p$ terminates. This ensures that from the point
of view of the caller of $p$, these processes are hidden and do not
interleave with other caller processes.
\begin{example}
  \label{ex:concurr-only-async}
  Consider a program where procedure \async{m} asynchronously calls
  \async{m1} and \async{m2}, while \async{m1} asynchronously calls
  \async{m3} and \async{m4}.
  \begin{lstlisting}[language=Async,numbers=none]
    m()  { !m1(); !m2(); return }
    m1() { !m3(); !m4(); return }
    { m() }
  \end{lstlisting}

  \noindent The tree-like semantics ensures the following scheduling
  constraints:
  \begin{itemize}
  \item The processes for \async{m1} and \async{m2} run directly
    after the one for \async{m} terminates, and before any other
    process is scheduled (of a potential caller of \async{m}).
  \item The processes for \async{m3} and \async{m4} run directly
    after the one for \async{m1} terminates, and before any other
    process is scheduled.
  \item Assume that \async{m1} is scheduled before \async{m2},
    then the processes for \async{m3} and \async{m4} run before
    the one for \async{m2}.
  \end{itemize}
\end{example}

Synchronous calls are handled via inlining and are a special case of
this model.
\begin{example}
  Consider the following program:
  \begin{lstlisting}[language=Async,numbers=none]
    m1() { !m3(); return }
    m2() {...}
    m3() {...}
    { m1(); m2() }
  \end{lstlisting}

  \noindent
  Since the body of \async{m1} is inlined before the one of
  \async{m2}, under the tree-like semantics all the procedures called
  in \async{m1} (synchronous or not) run before \async{m2}.
  Therefore, is ensured that \async{m3} runs before \async{m2}.
\end{example}

\begin{example}
  The tree-like concurrency model guarantees that in
  Example~\ref{ex:files}, procedure \lstinline{operate} is executed
  before \lstinline{closeF}, because the two calls occur in the same
  scope, and the first one is synchronous while the second is
  asynchronous.
\end{example}

\subsection{States and Traces}

We define the program semantics formally, following mostly
\cite{DBLP:journals/corr/abs-2211-09487}, except for file operations,
some aspects of call identifier management, and tree-like
concurrency. First, we require some technical definitions.
%
\begin{definition}[State, State Update]
  A \emph{state} $\sigma\in\Sigma$ is a partial mapping
  $\sigma: \mathsf{PVar} \rightharpoonup \mathsf{Val}$
  from variables to values. The notation $\sigma[x\mapsto v]$
  expresses the \emph{update} of state $\sigma$ at $x$ with value $v$
  and is defined as $\sigma[x\mapsto v](y)=v$ if $x=y$ and
  $\sigma[x\mapsto v](y)=\sigma(y)$ otherwise.
\end{definition}

There is a standard evaluation function $\valDnoargs$ for expressions,
for example, in a state $\sigma=[x\mapsto0, y\mapsto1]$ we have
$\valB{\sigma}{x + y}=\valB{\sigma}{x} + \valB{\sigma}{y}=0+1=1$.

Call scopes inside events keep track of active and called processes
for procedures.  This simplifies the semantics as one does not need an
explicit process pool or stack frames.
\begin{definition}[Scope] 
   A \emph{(call) scope} is a pair $scp=(m,id)$, where $m$ is a
  \emph{procedure name} and $id$ is a \emph{call identifier}.
\end{definition}

Events keep track of side effects, in particular they keep track of
asynchronous calls, process scheduling and termination, and file
interactions.
\begin{definition}[Event Marker]\label{def:event} 
  Let $m$ be a procedure name, $id$ a call identifier, and $scp$ a
  scope.  Event markers $ev$ are defined by the
  grammar:
  \begin{align*}
    ev ::=\; & \callEvP{}{m,id} \mid \invocEvP{}{m,id} \mid \retEv(id) \mid \pushEvP{}{scp} \mid \popEvP{}{scp} \mid \\
             & \openEvP{}{f} \mid\closeEvP{}{f}\mid\readEvP{}{f}\mid \writeEvP{}{f}
  \end{align*}    
\end{definition}

We denote with $\ev{\many{e}}$ a generic event marker over expressions
$\many{e}$.  Event markers $\callEvP{}{m,id}$, $\invocEvP{}{m,id}$,
and $\retEv(id)$ are associated with a synchronous call, an
asynchronous call, and a return statement, respectively.
Event markers $\pushEvP{}{scp}$ and $\popEvP{}{scp}$ are associated
with the start (process activation after a call) and end of a
computation in a procedure (termination of a process after the return
statement has been executed) in scope $scp$, respectively.  These
events are similar to, but simpler than the ones in
\cite{DHJPT17,DinOwe15} due to the absence of futures, call
parameters, and return values.  In addition, in
\cite{DHJPT17,DinOwe15} \emph{invocation reaction} events are used
only for asynchronous calls, while we use $\pushEv$ for synchronous
calls as well to achieve greater uniformity.

%
We define dedicated event markers $\openEvP{}{f}$, $\closeEvP{}{f}$,
$\readEvP{}{f}$, and $\writeEvP{}{f}$ associated with operations on a
file $f$. These are a domain-specific extension of the framework,
added for the benefit of having examples.
\begin{definition}[Trace]
  A \emph{trace}~$\tau$ is defined by the following rules (where
  $\varepsilon$ denotes the empty trace):
  \[
    \begin{array}{l@{\;::=\;}l@{\qquad}l@{\;::=\;}l}
      \tau & \varepsilon~|~\consTr{\tau}{t} & \stateorevent & \sigma~|~\ev{\many{e}}
    \end{array}
  \]
  We define a \emph{singleton trace} as
  $\singleton{\sigma} = \varepsilon \cons \sigma$.  When an event
  $\evp{}{\many{e}}$ is generated we need to uniquely associate it
  with the state $\sigma$ in which it was generated.  To do so we
  define the corresponding \emph{event trace}
  $\evTrioPV{\sigma}{\many{e}}{}=\consTr{\langle\sigma\rangle}{\consTr{\ev{\valB{\sigma}{\many{e}}}}{\sigma}}$.
\end{definition}

Traces are finite sequences over events and states, where every event
is encapsulated in an event trace triple.  Events do not change a
state. States and events do not need to be constantly alternating,
there can be arbitrarily many state updates between the occurrence of
two events.


Sequential composition ``\lstinline|r;s|'' of statements is
semantically modeled as trace composition, where the trace from
executing \lstinline|r| ends in a state from which the execution trace
of \lstinline|s| proceeds.  Thus the trace of \lstinline|r| ends in
the same state as where the trace of \lstinline|s| begins.  This
motivates the \emph{semantic chop} ``$\chopSem$'' on
traces~\cite{HalpernShoham91,HarelKP80,NakataUustalu15} that we often
use, instead of the standard concatenation operator
``$\concatTr{}{}$''.
%
%
\begin{definition}[Semantic Chop on Traces] 
  Let $\tau_1,\,\tau_2$ be non-empty and finite traces. 
  The \emph{semantic chop} $\tau_1 \chopSem \tau_2$ is defined
  as
  $ \chopTrSem{\tau_1}{\tau_2} = \concatTr{\tau}{\tau_2} $, where
  $\tau_1=\consTr{\tau}{\sigma}$,
  $\tau_2=\concatTr{\langle\sigma'\rangle}{\tau'}$, and
  $\sigma=\sigma'$.
  When $\sigma\neq\sigma'$ the result is undefined.
\end{definition}

\begin{example}
  Let $\tau_1=\singleton{\sigma}\cons\sigma[\code{x}\mapsto1]$ and
  $\tau_2=\singleton{\sigma[\code{x}\mapsto1]}\cons\sigma[\code{x}\mapsto1,
  \code{y}\mapsto2]$, then $\tau_1 \chopSem \tau_2$ =
  $\singleton{\sigma}\cons\sigma[\code{x}\mapsto1]\cons\sigma[\code{x}\mapsto1,
  \code{y}\mapsto2]$.
\end{example}
%

Our trace semantics evaluates an individual statement
``locally''. Obviously, it is not possible to fully evaluate composite
statements in this manner. Therefore, local semantic rules perform one
evaluation step at a time and defer evaluation of the remaining
statements, which are put into a \emph{continuation}, to subsequent
rule applications. Syntactically, continuations are simply statements
$s$ wrapped in the symbol $K$. To achieve uniform definitions we
permit the case that no further evaluation is required (it has been
completed) and use the ``empty bottle'' symbol for this case.

\begin{definition}[Continuation Marker] 
  Let $s$ be a program statement, then $\contP{}{s}$ is a
  \emph{continuation marker}. The empty continuation is denoted with
  $\contP{}{\zero}$ and expresses that nothing remains to be
  evaluated.
\end{definition}

\subsection{Semantics of \acronym}
\label{sec:semantics-async}

The semantics of \acronym is two-layered: a local semantics for
small-step evaluation of a single process, and a global semantics for
the evaluation of the whole state, in particular scheduling and other
concurrency operations.

As our language is locally deterministic\footnote{Evaluation of a
  single process, in a known context.}, we define local small-step
evaluation $\valP{}{s}$ of a statement $s$ in state $\sigma$ to return
a \emph{single} trace: The result of $\valP{}{s}$ is of the form
$\concatTr{\tau}{\contF{}{s'}}$, where $\tau$ is an initial
(small-step) trace of $s$ and $\contF{}{s'}$ contains the remaining,
possibly empty, statement $s'$ yet to be evaluated.
  
To distinguish different calls of the same procedure, we generate
fresh call identifiers.  This cannot be done locally, so the most
recently used call identifier is passed as a ``counter'' $id$ to the
local evaluation rules. Yet another context parameter of the local
evaluation rules is the identifier of the currently executing scope
$\mathit{cId}$. It is passed down from a global rule. Both parameters
appear as superscripts in $\valPC{id}{\mathit{cId}}{s}$.
%
\begin{definition}[Small-Step Local Evaluation]
  The local evaluation rules defining $\valPC{id}{\mathit{cId}}{s}$
  are given in Fig.~\ref{fig:local}.
\end{definition}

\begin{figure}[bt]
  \begin{align*}
    &\valPC{id}{\mathit{cId}}{\xasync{skip}} = \  
      \concatTr{\langle\sigma\rangle}{\contF{}{\zero}} \\
    &\valPC{id}{\mathit{cId}}{\assignStmt{x}{e}}  = 
      \concatTr{\consTr{\langle\sigma\rangle}{\updatestate{\sigma}{x}{\valP{}{e}}}}{\contF{}{\zero}}\ \\ 
    &\valPC{id}{\mathit{cId}}{\xasync{return}}
      = \ \concatTr{\retEv_{\sigma}(\mathit{cId})}{\contF{}{\zero}}\\
    &\valPC{id}{\mathit{cId}}{\xasync{if}~e~\{ ~s~\}}  =  \begin{cases}
      \concatTr{\langle\sigma\rangle}{\contF{}{s}}, \ \text{if} \ \valBP{\sigma}{}{e} = \trueSem\\
      \concatTr{\langle\sigma\rangle}{\contF{}{\zero}}\, \ \text{otherwise}
    \end{cases}\\
    &\valPC{id}{\mathit{cId}}{r;s}
    =  \ \concatTr{\tau}{\contF{}{r'; s}}, \ \text{where} \ \valPC{id}{\mathit{cId}}{r}=\concatTr{\tau}{\contF{}{r'}}\ \text{and} \ \zero; s \rightsquigarrow s \\
    &\valPC{id}{\mathit{cId}}{m()} = \ \concatTr{\callEvP{\sigma}{m,id + 1}}{\contF{}{\zero}}\  \notag \\ 
    &\valPC{id}{\mathit{cId}}{!m()} = \ \concatTr{\invocEvP{\sigma}{m,id +1}}{\contF{}{\zero}} \notag\\ 
    &\valPC{id}{\mathit{cId}}{\openStmt{f}} = \ \concatTr{\openEvP{\sigma}{f}}{\contF{}{\zero}}\qquad 
      \valPC{id}{\mathit{cId}}{\closeStmt{f}} = \ \concatTr{\closeEvP{\sigma}{f}}{\contF{}{\zero}} \notag\\ 
    &\valPC{id}{\mathit{cId}}{\readStmt{f}} = \ \concatTr{\readEvP{\sigma}{f}}{\contF{}{\zero}} \qquad 
      \valPC{id}{\mathit{cId}}{\writeStmt{f}} = \ \concatTr{\writeEvP{\sigma}{f}}{\contF{}{\zero}} \notag
  \end{align*}
  \caption{Local Program Semantics}
  \label{fig:local} 
\end{figure}

The rules for asynchronous calls, synchronous calls, and return emit
suitable event traces. The difference between $\callEv$ and $\invocEv$ is
that the former directly triggers execution of a procedure in the
trace composition rules below.  In both cases a new call identifier is
generated based on $id$, therefore any two calls, no matter whether
synchronous or not, always have different call identifiers.

The rule for sequential composition assumes empty leading
continuations are discarded, the remaining rules are straightforward.
Local evaluation of a statement $s$ yields a small step $\tau$ of $s$
plus a continuation~$\contF{}{s'}$. Therefore, traces can be extended
by evaluating the continuation and stitching the result to
$\tau$. This is performed by trace \emph{composition rules} that
operate on configurations of the form $\tau,\,\cont{s}$.  To formulate
the composition rules we need to introduce auxiliary structures to
keep track of scopes and the call tree.

We define \emph{schematic traces} that allow us to characterize
succinctly sets of traces (not) containing certain events via
matching. The notation $\finiteNoEv{\many{ev}}$ represents the set of
all non-empty, finite traces \emph{without} events of type
$ev\in\many{ev}$.  Symbol $\finiteNoEv{}$ is shorthand for
$\finiteNoEv{\emptyset}$.
With $\tau_1 \finiteNoEv{\many{ev}} \tau_2$ we denote the set of all
well-defined traces $\tau_1 \chopSem \tau \chopSem \tau_2$ such that
$\tau\in\finiteNoEv{\many{ev}}$.

Maintaining the stack of call scopes is handled by the composition
rules with the help of events $\pushEv(scp)$ and $\popEv(scp)$ that
are added to the generated trace $\tau$.  To find the current call
scope in $\tau$, one simply searches for the most recent pushed scope
that was not yet popped:
\begin{definition}[Current Call Scope, Most Recent Call Identifier] 
  \label{def:lastev-currctx}
  Let $\tau$ be a non-empty trace. The \emph{current call scope} is
  defined as
  \[
    \mathit{currScp}(\tau) =
    \left
      \{
      \begin{array}{l@{\quad}l}
        scp & \tau\in\,\finiteNoEv{} \,\pushEvP{\sigma}{scp} \finiteNoEv{\pushEv,\popEv}\\[-3pt]
        \mathit{currScp}(\tau') & \tau\in\tau' \chopSem\, \pushEvP{\sigma}{scp}  \finiteNoEv{} \popEvP{\sigma'}{scp}\finiteNoEv{\pushEv,\popEv}
      \end{array}
    \right.
  \]
  The \emph{most recent call identifier} of a trace is retrieved with
  \[
    id(\tau) = \mathsf{max}\{i \mid \tau \in  \finiteNoEv{} \invocEvP{}{\_,i}\finiteNoEv{} \text{ or } \tau \in  \finiteNoEv{} \callEvP{}{\_,i}\finiteNoEv{}  \} \quad .
  \]

  If $\tau$ is not empty then we define the function
  $last(\tau)= last(\tau'\cons \sigma) = \sigma$, for some state
  $\sigma$ and possibly empty trace $\tau'$.
\end{definition}

To retrieve the most recent call identifier it suffices to consider
events that introduce fresh call identifiers, i.e.  $\callEv$ and
$\invocEv$.

To define the processes in a given trace that are eligible for
scheduling, we define the call tree of a trace that records the
dependencies of the call scopes.  Each node in the call tree is a
scope in the given trace, where an edge $(v_1, v_2)$ denotes that
$v_1$ called or invoked $v_2$.

\begin{definition}[Call Tree]
  \label{def:call-tree}
  A call tree for a trace $\tau$ is an ordered tree 
  \[(V(\tau), E(\tau), <)\] with vertices
  $ V(\tau) = \{ (m,i) \mid \tau \in  \finiteNoEv{} \invocEvP{}{m,i}\finiteNoEv{} \text{ or } \tau \in  \finiteNoEv{} \callEvP{}{m,i}\finiteNoEv{}  \}$,
  edges
  \begin{align*}
    E(\tau)  =& \{ 
                (\mathit{currScp}(\tau'),(m, i)) \mid 
                (\tau \in \tau' \chopSem \callEvP{\sigma}{m, i}\finiteNoEv{}) \\
              & \ \ \quad\quad\qquad\qquad\qquad\vee (\tau \in \tau' \chopSem \invocEvP{\sigma}{m, i}\finiteNoEv{}) \}
  \end{align*}

  \noindent and order $(m_1,i_1) < (m_2,i_2) \iff i_1 < i_2 \ .$
  We define the set of \emph{idle nodes}
  $V_{\mathit{idle}}(\tau)\subseteq V(\tau)$ as those asynchronous
  calls having not yet started to execute:
  \[
    V_{\mathit{idle}}(\tau) = \{ (m,i) \mid \tau \in \finiteNoEv{} \invocEvP{}{m,i} \finiteNoEv{\pushEvP{}{m,i}}\}
  \]
  We also define a function to retrieve all children of a given call
  scope:
  \[
    \mathit{children}(scp,\tau) = \{child \mid (scp,child) \in E(\tau)\}
  \]
\end{definition}


Observe that $V_{\mathit{idle}}$ can never coincide with $V$, since
the main scope is never idle, i.e.
$(init, 0) \notin V_{\mathit{idle}}$.

To define the tree-like semantics mentioned in
Sect.~\ref{sec:async-language}, we introduce implicit barriers for the
execution of asynchronously called procedures: a procedure that was
invoked in scope $scp$ must be scheduled before the scope $scp$ is
exited.  We do not allow pending procedure invocations in a closed
call scope.  This is formalized in the following definition.
\begin{definition}[Schedule Function]
  \label{def:schedule-function}
  Given a trace $\tau$ we define
  $$
  \mathit{schedule}(\tau) = \mathit{children}(\mathit{currScp}(\tau), \tau) \cap V_{\mathit{idle}}(\tau)
  $$
\end{definition}

The above scheduling function realizes tree-like concurrency, but can
be easily adapted to other concurrency models. For example,
$\mathit{schedule}(\tau) = \{ \mathsf{min}(V_{\mathit{idle}}(\tau))
\}$ defines a deterministic scheduler (as does any instantiation that
returns a singleton or empty set), while
$\mathit{schedule}(\tau) = V_{\mathit{idle}}(\tau)$ is the usual fully
non-deterministic scheduler.


\begin{figure}
\begin{minipage}{0.6\textwidth}
\begin{align*}
  \tau = \ & \callEvP{\sigma}{init,0}\chopSem \pushEvP{\sigma}{(init,0)} \\
  &\chopSem \callEvP{\sigma}{\mathtt{m},1}\chopSem \pushEvP{\sigma}{(\mathtt{m},1)}\\
  &\chopSem \invocEvP{\sigma}{\mathtt{m1},2}\chopSem \invocEvP{\sigma}{\mathtt{m2},3}\\
  &\chopSem \pushEvP{\sigma}{(\mathtt{m1},2)} \\
  &\chopSem \invocEvP{\sigma}{\mathtt{m3},4}\chopSem \invocEvP{\sigma}{\mathtt{m4},5}\\
  &\chopSem\retEv(2)
\end{align*}
\end{minipage}
\begin{minipage}{0.3\textwidth}
\begin{tikzpicture}[sibling distance=3em, level distance = 3em,
  nonActivated/.style={rectangle,draw,fill=black!20},
  activated/.style={rectangle,draw,fill=green!20},
  ]
  \node[activated] {main,0}
  child { node[activated] {\texttt{m},1}
  child { node[activated] {\texttt{m1},2} 
    child { node[nonActivated] (m3)  {\texttt{m3},4} }
    child { node[nonActivated] {\texttt{m4},5}}  
  }
  child { node[nonActivated] {\texttt{m2},3}}
  };
     \node (rect) at (m3) [color=blue, dashed, yshift=-.2cm, xshift=.5cm, draw,thick,minimum width=2.2cm,minimum height=1.4cm] (r) {};
     \node (rect) at (m3) [xshift=.7cm, yshift=1.3cm, draw,thick,minimum width=3cm,minimum height=4.7cm] (rBlack) {};
     \node (rect) at (r.south) [color=blue, yshift=.3cm] (r1) {\scriptsize $schedule(\tau)$};
     \node (rect) at (rBlack.north) [fill=white, draw,thick] (r) {\scriptsize Call tree of $\tau$};
\end{tikzpicture}
\end{minipage}
  \caption{Trace and call tree for an incomplete execution of program from Example~\ref{ex:concurr-only-async}}
  \label{fig:call-tree}
\end{figure}
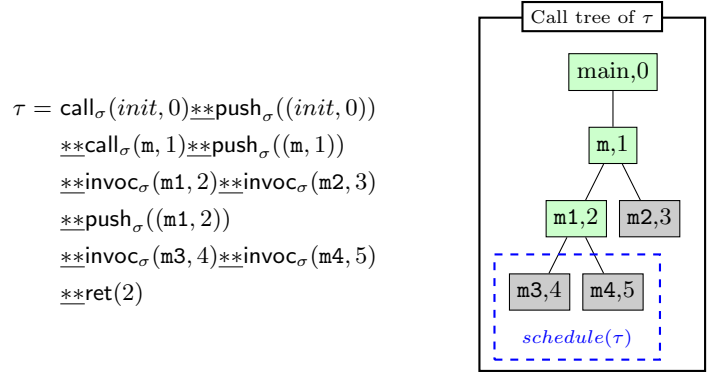

\begin{example}
  Consider an incomplete execution of the program in
  Example~\ref{ex:concurr-only-async}, with \async{m1} scheduled
  before \async{m2} and ending with the return statement of
  \async{m1}.  We show in Fig.~\ref{fig:call-tree} on the left the
  trace $\tau$ generated by the composition rules that we introduce
  later in Def.~\ref{def:composition-rules}, as well as its call tree
  on the right.  As described in Example~\ref{ex:concurr-only-async},
  \async{m3} and \async{m4} are executed before \async{m2()}. At this
  point, \async{m3} and \async{m4} can be scheduled according to the
  tree semantics, but not \async{m2}.
  \label{ex:call-tree}
\end{example}
\begin{definition}[Trace Composition Rules]
  \label{def:composition-rules}
  One global evaluation step of a configuration $\tau,\,\cont{s}$ is
  given by the \emph{trace composition rules} in
  Fig.~\ref{fig:global}.
\end{definition}

\begin{description}
\item[Rule {\normalfont$\textsc{Progress}$}.] 
  This rule uses the local small-step semantics
  $\valPC{\mathit{id}(\tau)}{i}{s}$ to evaluate $s$ in the
  continuation in the currently active process $i$. The local
  evaluation function is passed the most recent call id
  $\mathit{id}(\tau)$ and the id $i$ of the current call scope.
  It is triggered whenever the current trace does not end in a call
  event (which would require starting the execution of a synchronously
  called procedure) and no return event for the current scope has been
  yet generated (which would require scheduling of asynchronously
  called procedures).  The first two premises exclude these options,
  respectively.
  
\item[Rule {\normalfont$\textsc{Call}$}.] 
  This rule schedules the execution of a synchronously called
  procedure: If the trace ends in a call event, then the body $s'$ of
  the procedure $m$ that has just been called is \emph{prepended} to
  the current continuation $s$.
\item[Rule {\normalfont$\textsc{Run}$}.] 
  This rule schedules the execution of an asynchronously called
  procedure: If the current scope $id$ finished execution of its
  return statement (presence of return event with $id$ in $\tau$), but
  was not de-scheduled (no pop event emitted yet), and at least one of
  its remaining asynchronous calls (to $m$) is not scheduled, then the
  body $s'$ of $m$ is \emph{prepended} to the body of the continuation
  $s$ and the new scope is pushed on the current trace.
\item[Rule {\normalfont$\textsc{Return}$}.]  This rule de-schedules a
  process, which is the case if it has terminated, and has no
  asynchronous calls left to execute.
\end{description}


\begin{figure}[bt]
\gcondrule{progress-rule}{Progress}{
  \tau \not\in \finiteNoEv{}\,\callEvP{\sigma}{\_,\_} \qquad\qquad \tau \not\in \finiteNoEv{}\retEv_\sigma(i)\finiteNoEv{} \\
  \sigma = \last(\tau)  \qquad currScp(\tau)=(m,i)\qquad 
  \valPC{\mathit{id}(\tau)}{i}{s} = \concatTr{\tau'}{\cont{s'}}
}{
  \tau, \cont{s} \to \chopTrSem{\tau}{\tau'}, \cont{s'}
}

\gcondrule{call-rule}{Call}{
  \tau \in \finiteNoEv{}\,\callEvP{\sigma}{m,id}\qquad 
  \lookup{m, \Gi} = m() \ \{s';~\xasync{return}\}
}{
    \tau, \cont{s} \to \chopTrSem{\tau}{\pushEvP{\sigma}{(m,id)}}, \cont{s';~\xasync{return};~s}
}

\gcondrule{void-ret-rule1}{Run}{
  \mathit{currScp}(\tau) = (\_,id) \quad \lookup{m, \Gi} = m() \ \{s';~\xasync{return}\} \\
  \tau \in \finiteNoEv{} \retEv_\sigma(id)\finiteNoEv{\popEvP{\sigma}{\_,id}}
  \qquad
  (m,id_2) \in \mathit{schedule}(\tau) \qquad \sigma = \last(\tau)
}{
     \tau, \cont{s} \to \tau \chopSem \pushEvP{\sigma}{(m,id_2)}, \cont{s';~\xasync{return};~s}
}  

\gcondrule{void-ret-rule2}{Return}{
  \tau \in \finiteNoEv{} \retEv_\sigma(id)\finiteNoEv{\popEvP{\sigma}{m,id}}
  \qquad
  \mathit{currScp}(\tau) = (m,id)
  \quad
  \mathit{schedule}(\tau) = \emptyset
}{
     \tau, \cont{s} \to \tau \chopSem \popEvP{\sigma}{(m,id)}, \cont{s}
}  

\caption{Global Small-Step Semantics of \acronym}
\label{fig:global}
\end{figure}

We say that a rule is applied with a call identifier $i$, if it is
applied in a state $\tau,\cont{s}$ with
$\mathit{currScp}(\tau) = (\_,i)$.

Finally, we define the traces of a program, as well as the big-step
denotational semantics of statements and programs.
\begin{definition}[Program Trace]\label{def:local}
  Given a statement $s$ (with implicit lookup table) and a state
  $\sigma$, the \emph{traces} of $s$ are all the possible maximal
  sequences obtained by repeated application of composition rules
  starting from $\singleton{\sigma}, \cont{s}$.  If finite, a maximal
  sequence has the form
  $$
  \singleton{\sigma}, \cont{s} \to \cdots \to \tau, \cont{\zero}\enspace,
  $$
  also written
  $\singleton{\sigma}, \cont{s} \overset{*}{\to} \tau, \cont{\zero}$.

  Let $\tau$ be a trace and $(\_,i) = \mathit{currScp}(\tau)$. A
  maximal sequence without using the rule {\normalfont$\textsc{Run}$}
  with $i$ is denoted
  \[
    \tau, \cont{s}\overset{\times}{\to} \tau', \cont{\zero}
  \]
\end{definition}

The traces defined by $\overset{\times}{\to}$ are those, where the
asynchronously called procedures of the outermost statement $s$ are
not resolved, but those of synchronously called methods within $s$
are.

We define two program semantics, the global one represents the
execution of a program taking into account the execution of
asynchronous calls.  The local one represents the execution of a
program without doing so.
\begin{definition}[Program Semantics]
  \label{def:glob-sem}
  The \emph{global semantics} of a statement~$s$ is only defined for
  terminating statements as
  \[
    \globEval{s}{\tau}= \{\tau' \mid \tau, \cont{s} \overset{*}{\to} \tau \chopSem \tau', \cont{\zero} , \  schedule(\tau \chopSem\tau') = \emptyset \}
  \]

  The \emph{local semantics} of a statement $s$ is defined, again for
  terminating statements, as
  \[
    \locEval{s}{\tau}= \{\tau' \mid \tau, \cont{s} \overset{\times}{\to}  \tau \chopSem \tau', \cont{\zero} \}
  \]

  The semantics of a program $P$ with initial block $\{d; s\}$ is
  defined as follows, assuming no procedure is called
  ``$\mathit{init}$''. State $\sigma_d$ initializes all variables in
  $d$ to default values.
\begin{align*}
  \globEval{&P}{d}=\\ 
  &\{\tau \mid \callEvP{\sigma_d}{\mathit{init},0} \chopSem \pushEvP{\sigma_d}{\mathit{init},0}, \cont{s; \xasync{return}} \overset{*}{\to} \tau, \cont{\zero} , \  schedule(\tau) = \emptyset \}
\end{align*}

\end{definition}

The local and global semantics are connected\footnote{The split
  between local and global is inspired by the LAGC semantics for
  Active Objects~\cite{DHJPT17}.  There are some technical differences
  between our semantics and LAGC, most prominently that both our local
  and global semantics are only defined on concrete traces: We do not
  evaluate symbolically.} as follows.
Recall that we permit only synchronous recursive calls and only
terminating programs have a semantic value.

\begin{proposition}
  \label{prop:semantics-composition}
  Let $s$ be a statement without synchronous method calls, $s'$ any
  procedure body and $\tau$ a trace, then:
  \[
    \globEval{s;s'}{\tau} = \bigcup_{\tau' \in \locEval{s}{\tau}} \eval{s}{\tau}^{L} \chopSem
     \globEval{s'}{\tau \chopSem \tau'}\enspace.
  \]
  A special case is
  $\globEval{s}{\tau} = \bigcup_{\tau' \in \locEval{s}{\tau}} \eval{s}{\tau}^{L} \chopSem 
  \globEval{\zero}{\tau \chopSem  \tau'}\enspace.$
\end{proposition}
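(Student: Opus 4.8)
The plan is to show that every global derivation of $s;s'$ from $\tau$ factors through a configuration $\tau\chopSem\tau',\cont{s'}$ with $\tau'\in\locEval{s}{\tau}$, after which it is precisely a global derivation of $s'$ from $\tau\chopSem\tau'$; the two inclusions then follow by cutting, respectively pasting, derivations at that point. I would first dispatch the routine side facts needed throughout: $\chopSem$ is associative (and composes along shared end states) wherever it is defined, and the four composition rules $\textsc{Progress}$, $\textsc{Call}$, $\textsc{Run}$, $\textsc{Return}$ are mutually exclusive on any fixed configuration.

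The crux is a structural lemma about derivations starting from $\tau,\cont{s;s'}$: \emph{as long as the leading statement $s$ has not been fully consumed}---i.e.\ the continuation still has the shape $\cont{s_j;s'}$ with $s_j\neq\zero$---\emph{the only applicable rule is $\textsc{Progress}$, and one such step affects $s_j$ exactly as the stand-alone step on $s_j$ does}, since the local rule for sequential composition is $\valPC{\mathit{id}}{\mathit{cId}}{r;s}=\concatTr{\tau}{\cont{r';s}}$ with $\zero;s\rightsquigarrow s$. Excluding the other rules is where the hypotheses enter. Because $s$ contains no synchronous call, its prefix emits no call event, so $\textsc{Call}$ never fires; in particular no scope is pushed and the current scope stays equal to the current scope of $\tau$ for the whole prefix. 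Because the grammar admits \texttt{return} only at the very end of a procedure body, $s$ emits no return event for that scope, which blocks $\textsc{Run}$ and $\textsc{Return}$. (If $\tau$ does not already have this form, the bookkeeping rules that must be applied first are common to both the derivation of $s;s'$ and the $\overset{\times}{\to}$-derivation of $s$, so we may reduce to the case where $\textsc{Progress}$ is enabled at $\tau$.) Consequently the current scope and the running call-identifier counter evolve in lockstep in the two derivations, the emitted sub-traces coincide, and the derivation of $s;s'$ reaches $\tau\chopSem\tau',\cont{s'}$; deleting the trailing $s'$ from every continuation turns its $s$-prefix into a maximal $\overset{\times}{\to}$-derivation $\tau,\cont{s}\overset{\times}{\to}\tau\chopSem\tau',\cont{\zero}$ (at $\cont{\zero}$ no rule is enabled, again because $s$ made no synchronous call and emitted no return), i.e.\ $\tau'\in\locEval{s}{\tau}$.

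Granting this lemma, the inclusions are short. For $\subseteq$, an element $\tau_3\in\globEval{s;s'}{\tau}$ is witnessed by a maximal derivation ending in $\tau\chopSem\tau_3,\cont{\zero}$ with $\mathit{schedule}(\tau\chopSem\tau_3)=\emptyset$; the lemma factors it as $\tau,\cont{s;s'}\overset{*}{\to}\tau\chopSem\tau',\cont{s'}\overset{*}{\to}\tau\chopSem\tau_3,\cont{\zero}$ with $\tau'\in\locEval{s}{\tau}$, and on writing $\tau_3=\tau'\chopSem\tau_2$ the suffix is exactly a maximal derivation from $\tau\chopSem\tau',\cont{s'}$ to $\cont{\zero}$ with $\mathit{schedule}(\tau\chopSem\tau'\chopSem\tau_2)=\emptyset$, i.e.\ $\tau_2\in\globEval{s'}{\tau\chopSem\tau'}$, so $\tau_3\in\{\tau'\}\chopSem\globEval{s'}{\tau\chopSem\tau'}$. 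For $\supseteq$, given $\tau'\in\locEval{s}{\tau}$ and $\tau_2\in\globEval{s'}{\tau\chopSem\tau'}$, I lift the $\overset{\times}{\to}$-derivation of $s$ step by step to a $\textsc{Progress}$-only prefix $\tau,\cont{s;s'}\overset{*}{\to}\tau\chopSem\tau',\cont{s'}$ (the premises of $\textsc{Progress}$ depend only on the trace, which is unaffected by carrying $s'$ along in the continuation), append the derivation witnessing $\tau_2$, and note that the concatenation is still maximal (its final configuration is that of the $s'$-derivation, where no rule applies) and ends with $\mathit{schedule}=\emptyset$, so $\tau'\chopSem\tau_2\in\globEval{s;s'}{\tau}$. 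The stated special case is obtained by instantiating $s'$ with the empty continuation $\zero$ (equivalently, $s;s'=s$): the argument degenerates, with the factor $\globEval{\zero}{\tau\chopSem\tau'}$ carrying out exactly the post-$s$ scheduling of the asynchronous calls issued by $s$ and the final pop of the current scope.

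The step I expect to be the main obstacle is the structural lemma, specifically its scope bookkeeping: one must be certain that while $s$ runs no scope is pushed or popped, so that the current scope and the call-identifier counter feed $\valPC{\mathit{id}(\tau)}{i}{\cdot}$ the same arguments in the $s;s'$-derivation as in the stand-alone derivation of $s$; this is exactly what makes the two sub-traces literally equal rather than merely analogous, and it is precisely where the absence of synchronous calls in $s$, together with the grammatical placement of \texttt{return}, is used. Everything else---associativity and compatibility of $\chopSem$, mutual exclusivity of the composition rules, and determinism of sync-call-free local evaluation on a clean initial trace---is routine and can be handled by short auxiliary lemmas or inline remarks.
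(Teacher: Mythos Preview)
The paper does not actually supply a proof of this proposition; it is stated immediately after Def.~\ref{def:glob-sem} and then used as a black box in the appendix (in the soundness arguments for rules \textsf{Assign}, \textsf{Return}, \textsf{AsyncCall}, \textsf{Open}, and in the justification of Prop.~\ref{prop:composition-global}). So there is no ``paper's own proof'' to compare against, and your proposal is filling a gap the authors left.

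Your strategy is the right one and would go through. The structural lemma you isolate---that while the leading $s$ is being consumed only \textsc{Progress} can fire, because the absence of synchronous calls in $s$ prevents any $\callEv$ event (blocking \textsc{Call}) and the absence of a $\retEv$ event for the current scope blocks \textsc{Run}/\textsc{Return}---is precisely the mechanism that makes the factorisation work, and the cut/paste of derivations at the intermediate configuration $\tau\chopSem\tau',\cont{s'}$ gives both inclusions cleanly. Your identification of the scope-bookkeeping invariant (current scope and call-id counter evolve identically in the two derivations) as the load-bearing step is accurate.

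One point deserves a small repair. You rely on the grammar placing \texttt{return} only at the end of a procedure body to argue that $s$ emits no return event; this is literally true of the $\mathit{Stmt}$ nonterminal, but the paper itself applies this proposition with $s=\texttt{return}$ in the soundness proof of rule \textsf{Return} (via the special case with $s'=\zero$). In that instance the single \textsc{Progress} step on $s$ does produce $\retEv_\sigma(\mathit{oId})$, and the continuation is already $\cont{\zero}$, so the factorisation point is reached immediately and the subsequent $\globEval{\zero}{\tau\chopSem\tau'}$ absorbs all the scheduling. Your lemma therefore should be phrased as ``only \textsc{Progress} applies until the continuation first becomes $\cont{s'}$'' rather than relying on $s$ never emitting a return; the argument is otherwise unchanged.
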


Our trace contracts are used to specify functional properties, but are
also able to express generic properties.  To keep formalities
manageable, we do not introduce exceptions and verify
exception-freedom, but merely specify what it means that the sequence
of file operations is correct.

\begin{definition}[File Correctness]\label{def:file}
  A trace $\tau$ is \emph{file correct}, if for every file descriptor
  $f$ occurring in $\tau$ and any position $i$ in $\tau$ with an event
  $\writeEvP{\sigma}{f}$, $\readEvP{\sigma}{f}$ or
  $\closeEvP{\sigma}{f}$ for some $\sigma$, there is a position $j$
  with in $\tau$ with the event $\openEvP{\sigma'}{f}$ for some
  $\sigma'$ and for no $k$ with $j<k<i$, there is a
  $\closeEvP{\sigma''}{f}$ event at position $k$ for some $\sigma''$.
\end{definition}



\section{A Logic for Trace-based Specification}\label{sec:logic}

The trace logic is based on~\cite{DBLP:journals/corr/abs-2211-09487},
however, instead of modeling program variables as non-rigid symbols in
trace formulas, we define a quantifier to bind values of program
variables to logical variables at a specific position in the trace.

\subsection{Syntax}

Formulas are constructed over a set~$\LVar$ of first-order observation
variables and a set~$\RVar$ of recursion variables.
\begin{definition}
  Let $P$ range over first-order predicates, $X$ over recursion
  variables $\RVar$, $x$ over program variables $\mathsf{PVar}$, and
  $y$ over logical variables $\LVar$.
  %
  The syntax of the logic is defined by the following grammar:
\begin{align*}
    \Phi \>::=\> & \stateFml{P} \mid X \mid \mathit{Ev} \mid
                   \Phi\wedge\Phi \mid \Phi\vee\Phi \\ 
                 & \mid \Phi\concat\Phi \mid \Phi\chop\,\Phi \mid 
                   (\muFml{X}{\Phi}) \mid \mho\,x~\mathbf{as}~y.\Phi
   \end{align*}
We forbid any occurrence of recursion variables $X$ in the scope of $\mho$.
\end{definition}

Events $\mathit{Ev}$ related to the beginning, return, and end of a
procedure $m$ in scope $(m,i)$ have the form $\startEv(m,i)$,
$\retEv(i)$, $\popEvP{}{m,i}$, respectively.
Events $\mathit{Ev}$ related to manipulation of a file $f$ have the
form $\openEvP{}{f}$, $\closeEvP{}{f}$, $\readEvP{}{f}$, or
$\writeEvP{}{f}$.
It is not necessary that events in the trace logic exactly follow the
events in the trace semantics. In fact, a certain degree of
abstraction is usually desirable. The event structure $\mathit{Ev}$ of
our logic is parameterizable. The semantics of events in the logic
relative to events in traces is defined in
Sect.~\ref{sec:semantics-logic}.

The novel aspect of our logic is the \emph{observation quantifier}
$\mho$. It addresses the problem that, unlike in state-based
Hoare-style contracts, in the asynchronous setting it is necessary to
observe the value of program variables at arbitrary points inside a
trace specification. This could be achieved with non-rigid variables,
but to control their visibility is technically complex already in the
sequential case \cite{DBLP:series/lncs/10001}. An intuitive version of
scoping is provided by quantifiers of the form
$\mho\,x~\mathbf{as}~y.\Phi$, where a \emph{logical} variable $y$
observes the value of a \emph{program} variable $x$ at the position in
the trace, where the quantifier occurs. This observed snapshot is
available within the scope $\Phi$ of the quantifier, but not outside.

\begin{example}[Notation for generic finite traces]
  The expressive power of fixed points can be used to define
  transitive closure. Let the predicate $NoEv(\many{ev})$ be true in
  any state that is not one of the events occurring in a set of events
  $\many{ev}$. We define the trace formula
  \[
    \noEvent{\many{ev}}\quad=\quad\mu
    X.\left(NoEv(\many{ev})\,\lor\,NoEv(\many{ev})\cdot X\right)
  \]
  that characterizes all non-empty, finite traces that do not contain
  an event occurring in $\many{ev}$. If $\many{ev}=\emptyset$, we
  simply write ``$\noEvent{}$''.

  Finally, we take the convention to omit writing the $\chop$ operator
  between $\noEvent{\many{ev}}$ and any adjacent trace formula. For
  example, the expression ``$\stateFml{P}\noEvent{}$'' denotes all
  finite traces that begin with a state, where $P$ is true. Another
  useful pattern is $\Phi\noEvent{\many{ev}}\Psi$, which expresses
  that any finite trace \emph{not} involving an event in $\many{ev}$
  may occur between the traces specified by $\Phi$ and $\Psi$.

  The $\noEvent{}$ notation is the syntactic equivalent of
  $\finiteNoEv{}$ defined in Sect.~\ref{sec:semantics-async} and
  extremely useful to write concise specifications. Thanks to the
  expressive power of fixed points, it is \emph{definable} in our logic.
\end{example}
  

\subsection{Semantics}
\label{sec:semantics-logic}

We use a fixed\footnote{This can be generalized as usual, if needed.}
first-order interpretation~$\interp$ for predicate and function
symbols, an environment $o:\LVar\rightarrow \mathsf{PVar}\times\Sigma$
for observation quantifiers, and a recursion variable assignment
$\rho:\RVar\rightarrow 2^{\Traces}$ that maps each recursion variable
to a set of traces.  The (finite-trace) semantics~$\evalPhiS{\Phi}$ of
formulas as a set of traces is inductively defined in
Fig.~\ref{fig:semantics-formulas}.

The observation environment $o$ records for a logic variable $y$
introduced by an observation quantifier the program variable $x$ and
program state $\sigma$ it keeps track of. Hence, for a given $y$ one
can construct a first-order variable assignment
$\beta: \mathsf{LVar} \rightarrow \mathsf{Val}$ via the following definition:
\[
\beta(o)(y) = \sigma(x)\text{ with }o(y) = (x,\sigma)
\]
\begin{figure}
  \footnotesize
  \begin{align*}
    \evalPhiS{\stateFml{P}} &= \langle\Sigma\rangle  \text{ if } \interp, \beta(o) \models P \qquad \qquad
    \evalPhiS{\stateFml{P}} = \emptyset \text{ if } \interp, \beta(o) \not\models P\\
    \evalPhiS{X} &= \rho (X)  \\
    \evalPhiS{\startEv(m,i)} &= \{\callEvP{\sigma}{m,i} \chopSem \pushEvP{\sigma}{(m,i)} \>|\> \sigma\in\States\} \\
    \evalPhiS{\retEv(i)} &= \{\retEv_{\sigma}(i)  \>|\> \sigma\in\States\} \qquad \qquad
    \evalPhiS{\popEvP{}{m,i}} = \{\popEvP{\sigma}{({m},{i})} \>|\> \sigma\in\States\} \\
    \evalPhiS{\Phi_1\wedge\Phi_2} &= \evalPhiS{\Phi_1}\cap\evalPhiS{\Phi_2} \qquad \qquad
    \evalPhiS{\Phi_1\vee\Phi_2} = \evalPhiS{\Phi_1}\cup\evalPhiS{\Phi_2} \\
    \evalPhiS{\Phi_1\concat\Phi_2} &= \{\tau_1\concat\tau_2 \>|\> \tau_1\in\evalPhiS{\Phi_1}  \wedge \tau_2\in\evalPhiS{\Phi_2}\} \\
    \evalPhiS{\Phi_1\chop\,\Phi_2} &= \{\tau_1\,\chopT\,\tau_2 \>|\> \tau_1\in\evalPhiS{\Phi_1} \wedge \tau_2\in\evalPhiS{\Phi_2}\} \\
\evalPhiS{\mho\, x~\mathbf{as}~y.\Phi} &= \{\langle\sigma \rangle \chopSem \tau ~|~ \tau \in \evalPhiNoArg{\Phi}_{\rho, o[y \mapsto (x,\sigma)]} \wedge \sigma \in \Sigma\}\\
    \evalPhiS{\muFml{X}{\Phi}} &= \sqcap\, \{ F  \>|\> \evalPhi{\Phi}{}{\rho [X \mapsto F]}{o} \sqsubseteq F \}
  \end{align*}
  \caption{Semantics of Trace Formulas}
  \label{fig:semantics-formulas}
\end{figure}
Our language does not include equality over logical observation
variables.
%
\begin{definition}
  The semantics of our logic is given in
  Fig.~\ref{fig:semantics-formulas}, where $\sqsubseteq$ and $\sqcap$
  denote point-wise set inclusion and intersection, respectively, and
  $\interp, \beta \models P$ is first-order satisfiability under
  interpretation $\interp$ and variable assignment $\beta$.
\end{definition}

By a \emph{trace formula} we mean a formula of our logic that is
closed with respect to both first-order and recursion variables.
Since the semantics of a trace formula does not depend on variable
assignments, we sometimes use $\evalPhiNoArg{\Phi}$ to denote
$\evalPhiS{\Phi}$ for arbitrary $\rho$ and $o$.

\begin{example}
  The following formula describes all traces, where the value of
  program variable $x$ decreases after a call to $\mathtt{decr}$ with
  call identifier $1$:
  \[
    \noEvent{} \mho\, x~\mathbf{as}~y_1.\left( \startEv(\mathtt{decr},1)\noEvent{}\retEv(1)\chop\,\mho\, x~\mathbf{as}~y_2.\stateFml{y_1 > y_2}\right)\noEvent{}
  \]
\end{example}


\section{Contracts}\label{sec:spec}

\subsection{The Concept of Trace-aware Contracts}

Our goal is to generalize contracts, where the pre- and
post-conditions are state formulas, to contracts, where initial and
trailing \emph{traces} may occur.

For example, consider a procedure $\mathtt{operate}$ that works on a
file. First it prepares the file in some way, then it computes
something with the data in it, finally it tidies the file up. In
addition, $\mathtt{operate}$ assumes the file was opened before it
starts and that someone takes care to close it after it finishes.

The internal actions of $\mathtt{operate}$ are specified as a
\emph{trace} over suitable events.  However, its pre- and
post-conditions are also \emph{traces}, as they do not specify the
moment \emph{when} the procedure is called and when it terminates, but
operations performed \emph{sometime before} and \emph{after}.  The
\emph{global} trace has the following shape:
\begin{equation}
  \mathit{assume}\chop\overbrace{\mathit{work}}^{\mathtt{operate}}\chop\mathit{continue}\label{eq:1}  
\end{equation}

In the following we abbreviate the trace formulas with their first
letter, that is $a$ for $\mathit{assume}$, etc.
\emph{State} pre- and post-conditions are the states where the assume
and work traces (work and continue) overlap, so we can
refine~(\ref{eq:1}) into:
\begin{equation}
  \mathit{assume}\chop\stateFml{Pre}\chop\mathit{work}\chop\stateFml{Post}\chop\mathit{continue}\label{eq:2}  
\end{equation}

Formula $\stateFml{Pre}$ describes the states a caller must be in when
$\mathtt{operate}$ is called, which are the states that
$\mathtt{operate}$ can assume to be started in. Dually,
$\stateFml{Post}$ describes the states $\mathtt{operate}$ ensures upon
finishing, which are the states the caller expects to be in, after the
call to $\mathtt{operate}$. We say that the trace formulas assume and
continue are the \emph{context} described by the above trace
specification.  Contexts pose a challenge to modularity: For a
Hoare-style contract, the caller is responsible for the pre-condition,
while the callee is responsible for the post-condition. The temporal
dimension of the procedure (first call, then termination) and the
temporal dimension of the contract (first pre-condition, then
post-condition) coincide.  This is not the case for the trace
contexts: The trace \textit{continue} occurs \emph{after} the call,
but must be established by the caller.  Moreover, the context is not
local to the call site of the procedure, it describes arbitrary
actions before and after the call.

Consider Fig.~\ref{fig:callcall}, with procedure
$\mathtt{m}_\mathtt{orig}$, containing a synchronous call to
\texttt{m}, which in turn contains another synchronous call to
\texttt{ma}.  From the perspective of \texttt{ma} the post-condition
($\theta_{c_\mathtt{ma}}$, using the notation we introduce in
Def.~\ref{def:contract}) describes the actions of the caller
(procedure \texttt{m}) \emph{and the complete call stack}, i.e., the
caller's callers such as $\mathtt{m}_\mathtt{orig}$, which are unaware
of the call to \texttt{ma}.

\begin{figure}
  \includegraphics[width=\textwidth]{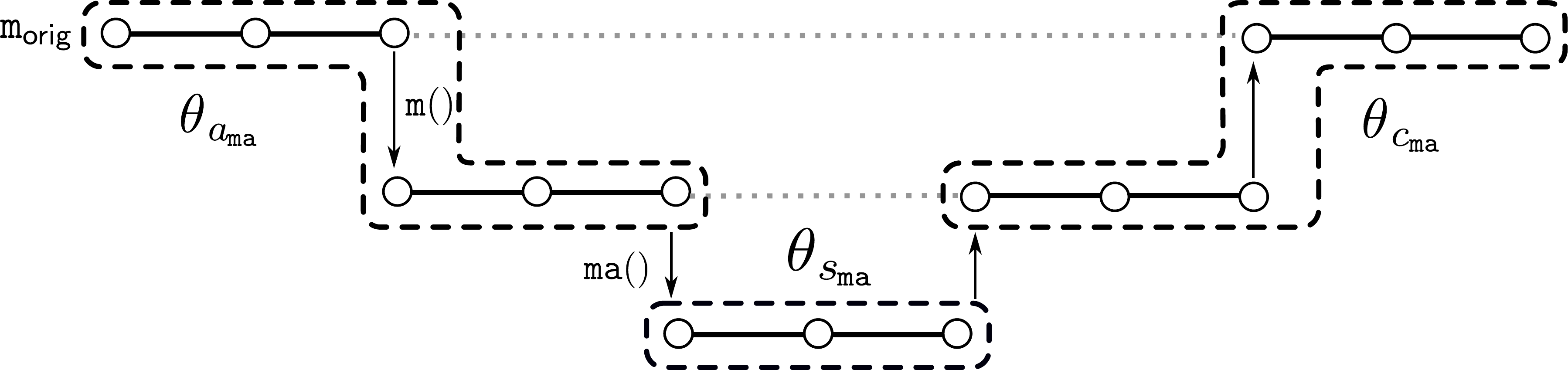}
  \caption{Scope of the post-condition}
  \label{fig:callcall}
\end{figure}

Before we turn our attention to the solution of these difficulties in
verification, let us formalize the syntax and semantics of a trace
contract.
%
%
%


\subsection{Formal Trace Contracts}

\begin{definition}[Trace Contract,
  Pre-/Post-Trace]\label{def:contract}
  Let $\mho_{\many{x,y}}$ denote a possibly empty list of
  observational quantifiers over program variables $\many{x}$ and
  logic variables $\many{y}$.  A \emph{trace contract} $C_m$ for a
  procedure $m$ has the form
  \[
    C_m = \assumeAndContinue{\theta'_{a_m} \concat \mho_{\many{x_1,y_1}}.\stateFml{q_{a_m}}}{\stateFml{q_{a_m}} \concat \, \theta'_{s_m} \concat \mho_{\many{x_2,y_2}}.\stateFml{q_{c_m}}}{ \stateFml{q_{c_m}}\concat \, \theta'_{c_m}}\enspace,
  \]
  where $\stateFml{q_{a_m}}$, $\stateFml{q_{c_m}}$ are first-order
  predicates\footnote{$\stateFml{q_{a_m}}$ and
      $\stateFml{q_{c_m}}$ correspond to $\stateFml{Pre}$ and
      $\stateFml{Post}$ above. Of course, it is redundant that these
      formulas occur twice in $C_m$, but we want each part of a
      trace contract to be readable on its own.} and $\theta'_{a_m}$,
    $\theta'_{s_m}$, $\theta'_{c_m}$ are trace formulas.
    We call
    $\theta_{a_m} = \theta'_{a_m} \concat
    \mho_{\many{x_1,y_1}}.\stateFml{q_{a_m}}$ the \emph{pre-trace},
    $\theta_{s_m} = \stateFml{q_{a_m}} \concat \, \theta'_{s_m}
    \concat \mho_{\many{x_2,y_2}}.\stateFml{q_{c_m}}$ the
    \emph{internal behavior}, and
    $\theta_{c_m} = \stateFml{q_{c_m}}\concat \, \theta'_{c_m}$ the
    \emph{post-trace} of the contract.

    We impose the following restrictions that express that all
    observation variables in the pre-trace are bound, all free logical
    variables in the internal behavior are bound by the observation
    variables of the pre-trace, and all free logic variables in the
    post-trace are bound by the observation variables of the pre-trace
    or internal behavior:
  \begin{itemize}
  \item $\mathsf{fv}(\theta_{a_m}) = \emptyset$
  \item $\mathsf{fv}(\theta_{s_m}) \subseteq \many{y_1}$
  \item $\mathsf{fv}(\theta_{c_m}) \subseteq \many{y_1} \cup \many{y_2}$
  \end{itemize}
\end{definition}

A possible contract of procedure $\mathtt{operate}$ in
Example~\ref{ex:files} is as follows.  It expects the file was opened,
has not been closed or opened again, and has not been written to yet.
Then $\mathtt{operate}$ ensures not to open or close it, abstracting
away from the actual work.  Finally, the contract stipulates that the
file will be closed by one of the callers, while not having been
opened, closed, or written to.
\[
  \assumeAndContinue{\noEvent{}\mho\, \mathtt{file}~\mathbf{as}~f.~\opf  \noEvent{\opf, \clf, \workf}}{\noEvent{\opf, \clf }}{\noEvent{\opf, \clf, \workf}\clf \noEvent{}}
\]
\end{example}

We classify trace contracts according to the trace formulas they
contain.  A contract is context-aware if it has a non-trivial pre- or
post-trace:
%
\begin{definition}[Context-Aware Contract]
  Let $C_m$ be a trace contract as in Def.~\ref{def:contract}:
  \[
    C_m = \assumeAndContinue{\theta'_{a_m} \concat \mho_{\many{x_1,y_1}}.\stateFml{q_{a_m}}}{\stateFml{q_{a_m}} \concat \, \theta'_{s_m} \concat \mho_{\many{x_2,y_2}}.\stateFml{q_{c_m}}}{ \stateFml{q_{c_m}}\concat \, \theta'_{c_m}\enspace,}
  \]
  Contract $C_m$ is \emph{context-aware} if at least one of
  $
    \theta'_{a_m} \not\equiv\,\noEvent{}$
    or $\theta'_{c_m} \not\equiv\,\noEvent{}
  $ holds.
  Contract $C_m$ is a \emph{state contract} if 
  $
    \theta'_{s_m} \equiv\,\noEvent{}
  $.
  Otherwise, it is a \emph{proper trace contract}.
  
\end{definition}

Thus, a Hoare-style contract is a non context-aware state contract,
while the contracts in~\cite{DBLP:journals/corr/abs-2211-09487} are
non context-aware proper trace contracts.
The previously introduced cooperative
contracts~\cite{DBLP:series/lncs/KamburjanDHJ20} are context-aware
state contracts, however, with a non-uniform treatment of the context.

As a final note, before we turn to the technical machinery behind
trace contracts, we stress that they naturally extend to asynchronous
communication.  Consider Fig.~\ref{fig:acallcall}, with procedure
$\mathtt{m}_\mathtt{orig}$, containing a synchronous call to
\texttt{m}, which now contains two \emph{asynchronous} calls to
\texttt{ma} and \texttt{mb}.  Analogous to Fig.~\ref{fig:callcall},
the post-trace of \texttt{ma} ($\theta_{c_\mathtt{ma}}$) describes the
actions of the caller (\texttt{m}) and the complete call stack,
including the \emph{asynchronous} callers, and subsequently running
methods such as \texttt{mb}.

\begin{figure}
\includegraphics[width=\textwidth]{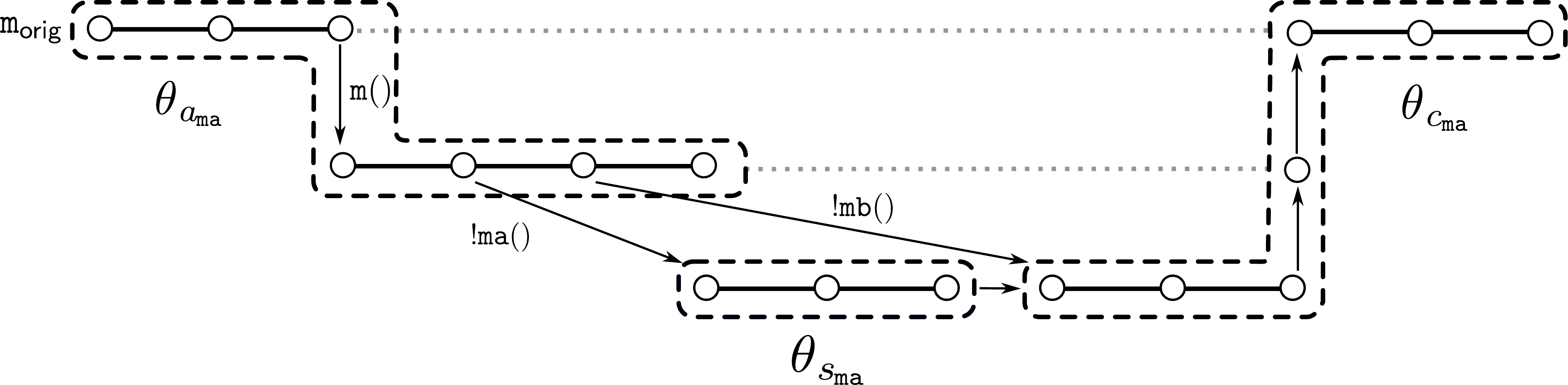}
\caption{Scope of the post-trace with asynchronous calls}
\label{fig:acallcall}
\end{figure}

The same setup from \texttt{m}'s point if view is shown in
Fig.~\ref{fig:bcallcall}: Its inner specification contains the traces
of the methods it asynchronously calls.

\begin{figure}
\includegraphics[width=\textwidth]{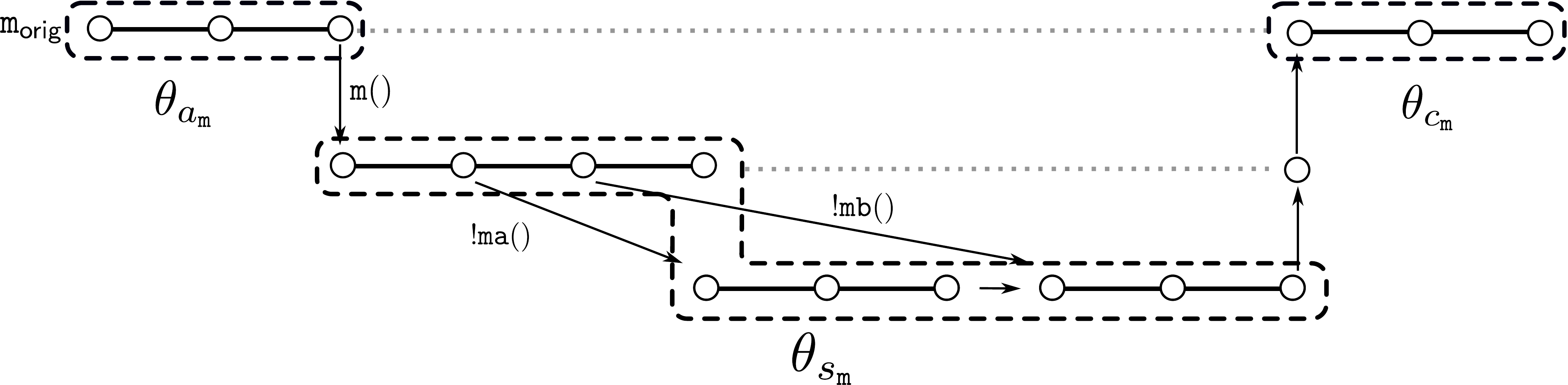}
\caption{Scope of the inner trace with asynchronous calls}
\label{fig:bcallcall}
\end{figure}

\subsection{Events versus Predicates}

The style of specification in Example~\ref{ex:files} relies on $\opf$
and $\clf$ being events---if they were predicates, then $\opf$ would
be evaluated as ``it is true that $f$ is open''.  The following
contract cannot be fulfilled, because $isOpen(f)$ cannot be true and
false at the same time ($f$ is a \emph{logical}, rigid variable):
\[
  \assumeAndContinue{\noEvent{}\mho\, \mathtt{file}~\mathbf{as}~f.~\stateFml{isOpen(f)}}{\stateFml{isOpen(f)}\noEvent{}\stateFml{!isOpen(f)}}{\stateFml{!isOpen(f)} \noEvent{}}
\]

If we want to express without events that $f$ was open before and
is closed later, then we need to introduce a second observation.
\begin{align*}
  \assume{&\noEvent{}\mho\, \mathtt{file}~\mathbf{as}~f.~\stateFml{isOpen(f)}}\\
          &\mid~{\stateFml{isOpen(f)}\noEvent{}\mho\, \mathtt{file}~\mathbf{as}~f'.~\stateFml{!isOpen(f')}}\\
          &\mid~\continue{\stateFml{!isOpen(f')} \noEvent{}}
\end{align*}

This contract still lacks the information that $f$ and $f'$ refer to
the same file. This can be addressed with a function $\mathsf{id}$
that retrieves the id of a file descriptor.
\[
  \assumeAndContinue{\noEvent{}\mho\, \mathtt{file}~\mathbf{as}~f.\stateFml{isOpen(f)}}{\noEvent{}}{\mho\, \mathtt{file}~\mathbf{as}~f'.\stateFml{\mathsf{id}(f) \doteq \mathsf{id}(f')\, \wedge\, !isOpen(f')} \noEvent{}}
\]

\subsection{Semantics of Trace Contracts}

We formalize what it means for a contract to hold for a given
procedure $m$.  Intuitively, a contract for $m$ holds in a global
trace $\tau$, if it the trace can be chopped along the events related
to $m$, such that each part of the contract holds.  We formalize this
intuition as \emph{trace adherence}.
%
\begin{definition}[Trace Adherence]\label{def:tadhere}
  Let $C_m$ be a contract for procedure $m$ with
  \[
    C_m = \assumeAndContinue{\theta'_{a_m} \concat \mho_{\many{x_1,y_1}}.\stateFml{q_{a_m}}}{\stateFml{q_{a_m}} \concat \, \theta'_{s_m} \concat \mho_{\many{x_2,y_2}}.\stateFml{q_{c_m}}}{ \stateFml{q_{c_m}}\concat \, \theta'_{c_m}}.
  \]
  %
  %
  We say that trace $\tau$ \emph{adheres} to $C_m$ for call identifier
  $i$ of $m$ if
  %
  \begin{align*}
    \tau \in \evalPhiNoArg{
    \theta'_{a_m} \concat \mho_{\many{x_1,y_1}}.\bigl(&\stateFml{q_{a_m}} \concat \mathsf{start}(m,i) \concat
    \stateFml{q_{a_m}}\concat \theta'_{s_m} \concat\\
    & \mho_{\many{x_2,y_2}}.\left(\stateFml{q_{c_m}} \concat \mathsf{pop}(i) \concat
    \stateFml{q_{c_m}}\concat \, \theta'_{c_m}\right)\bigr)
}_{\emptyset,\emptyset}
  \end{align*}
  We write this as $\tau,i \models C_m$
\end{definition}

\begin{definition}[Procedure Adherence, Program Correctness]\label{def:adhere}
  Let $P=\many{m}~\{d~s\}$ be a program containing a procedure
  $m \in \many{m}$ with contract $C_m$.
  Given a trace $\tau$, let $\mathsf{idOf}(m,\tau)$ be the set of all
  call identifiers in $\tau$ occurring in a call scope with $m$.  We
  say that $m$ \emph{adheres to} $C_m$ in $s$, written
  $m \models C_m$, if
  \begin{align*}
    \forall \tau \in \globEval{P}{d}.~\forall i \in \mathsf{idOf}(m,\tau).~\tau,i \models C_m\enspace.
  \end{align*}
  Consider the init block of program $P$ as an implicitly declared
  procedure and part of $\many{m}$, so it is uniformly handled in
    $\many{m}$. Then $P$ is \emph{correct}, written $\models P$, if
  \[
    \forall m\, \in \many{m}.~m \models C_m\enspace.
  \]
\end{definition}

Defs.~\ref{def:tadhere}, \ref{def:adhere} are based on the
$\mathsf{pop}$ event, not $\mathsf{ret}$, meaning that the call
  scope is completed. In consequence, they specify the final state of
a procedure and \emph{all of its asynchronously called} procedures.

Context-aware contracts preserve a decisive degree of modularity for
verification, because a contract can still replace inlining the
procedure body during verification---and this is crucial for
inter-procedural verification to scale. However, as shown in the next
section, one needs additional machinery to keep track of the
post-trace.
To make this precise, we need a weaker form of adherence: A procedure
\emph{weakly} adheres to its contract, if it adheres to its pre-trace
and the specification of its inner behavior.  Later we show that weak
adherence of all procedures, together with the design of the proof
system, ensures (strong) adherence.
\begin{definition}[Weak Adherence]
  Let $C_m$ be the contract of a procedure $m$ as in
  Def.~\ref{def:contract}.  Let $\widehat{C_m}$ be the contract that
  is like $C_m$, except $\theta'_{c_m}=\,\noEvent{}$.
  We say that $m$ \emph{weakly adheres to} $C_m$, if it adheres to
  $\widehat{C_m}$.
\end{definition}


\section{Proof Calculus}\label{sec:calc}

The proof calculus for context-aware contracts is based on symbolic
program execution, while tracking the symbolic trace with the help of
\emph{updates}~\cite{DBLP:series/lncs/10001}. In contrast to previous
work~\cite{DBLP:journals/corr/abs-2211-09487}, here we perform
\emph{eager} symbolic execution, but \emph{lazy} weakest precondition
computation: The rules for symbolic execution connect an update
prefix, the executed program, and a trace specification, and
manipulate all three of them. This simplifies rules for trace logics
that specify behavioral
properties~(cf.~\cite{DBLP:conf/tableaux/Kamburjan19}).

\subsection{Trace Updates}
\label{sec:trace-updates}

During symbolic execution we eagerly generate \emph{trace updates}
consisting of state updates and event updates, to keep track of the
trace in terms of changes of the program state and of the generated
events.
One can think of a trace update as a sequence of symbolic state
changes and symbolic events that resulted from symbolic execution of
the program under verification until the current state.

\begin{definition}[Trace Update]
  The syntax of trace updates is defined by the following grammar,
  where $\epsilon$ is the empty sequence of updates
  \begin{align*}
    u ::=~& \upl v:= e\upr ~|~ \upP{\mathsf{Ev}(\overline{e})} \\
    \update  ::=~&\epsilon~|~u\update
  \end{align*}
  \begin{figure}
    \lstset{basicstyle=\footnotesize}
    \begin{align*}
      & \valP{\mathit{id}, \mathit{cId}}{u\update s}
        =\concatTr{T}{\contF{}{\update s}}, \ \text{if} \ \valP{\mathit{id}, \mathit{cId}}{u}=\concatTr{T}{\contF{}{\zero}}\\
      & \valP{\mathit{id}, \mathit{cId}}{\upP{v:=e}} = \{\singleton{\sigma} \cons \sigma[v\mapsto e]\} \concat \contF{}{\zero}\\
      & \valP{\mathit{id}, \mathit{cId}}{\upP{\runEv(m,i,{sy})}} = \callEvP{\sigma}{m,i}\chopSem\eval{\xasync{s; return}}{\callEvP{\sigma}{m,i}}^G\concat \cont{\zero},\\
      & \qquad\qquad\qquad\text{ where }\mathsf{lookup}(m,\mathcal{G}) = m()\, \{\xasync{s; return}\}\\
      & \valP{\mathit{id}, \mathit{cId}}{\upP{\runEv(m,i,{as})}} = \eval{\xasync{s; return}}{\callEvP{\sigma}{m,i}}^G\concat \cont{\zero},\\
      & \qquad\qquad\qquad\text{ where }\mathsf{lookup}(m,\mathcal{G}) = m()\, \{\xasync{s; return}\}\\
      & \valP{\mathit{id}, \mathit{cId}}{\upP{\invocEv(m,i)}} = \{\invocEv_\sigma(m,i)\}\concat \cont{\zero} \\ 
      & \valP{\mathit{id}, \mathit{cId}}{\upP{\startEv(m,i)}} = \{\callEvP{\sigma}{m,i}\, \chopSem\, \pushEvP{\sigma}{(m,i)}\} \concat \cont{\zero}\\ 
      & \valP{\mathit{id}, \mathit{cId}}{\upP{\retEv(i)}} =  \{\retEv_{\sigma}(i)\} \concat \cont{\zero}\\
      & \valP{\mathit{id}, \mathit{cId}}{\upP{\popEvP{}{m,i}}} =  \{\popEvP{\sigma}{(m,i)}\} \concat \cont{\zero}
    \end{align*}
    \caption{Local evaluation of statements with update prefixes}
    \label{fig:stmt-update-eval}
  \end{figure}
\end{definition}

The local evaluation rules for statements $\xasync{s}$ prefixed with
updates $\update$ are given in Fig.~\ref{fig:stmt-update-eval}.  In
contrast to the local semantics, here the returned configuration
consists of a \emph{set of} traces and a continuation marker.  The
reason is that in the calculus we abstract away from the execution of
procedure calls by means of procedure contracts.  Therefore, the body
of a called procedure is never actually inlined, so its evaluation
cannot be local, but it must be global, resulting in a \emph{set of}
traces representing every possible execution of the procedure body
which may contain asynchronous calls. Here we overload the $\chopSem$
operator to work on sets of traces.
The non-obvious rules are those for $\runEv(m,i,{sy})$ and
$\runEv(m,i,{as})$, where $sy$ and $as$ are literals that mark the
call mode.  Run events represent the set of all possible executions
of a procedure in the context $(m,i)$ after synchronous ($sy$) and
asynchronous ($as$) scheduling, respectively.  In contrast to the
semantics of $\runEv(m,i,{as})$, the one of $\invocEv(m,i)$ is simply
a trace event that keeps track of the asynchronous invocation of $m$.

With these rules we integrate the semantics of updates directly into
the local semantics, extending Def.~\ref{def:local}, via
\[
  \eval{\update}{\tau}= \{\tau' \mid \tau, \cont{\update} \overset{\times}{\to}  \tau \chopSem\tau', \cont{\zero}\}\enspace,
\]
where, the \textsc{Progress} rule is generalized such that, given
$\valP{\mathit{id}, \mathit{cId}}{\update s} = T \concat K(s')$, any
of the traces $\tau' \in T $ is considered for extending the current
trace.

\subsection{Judgments}

We define a sequent calculus based on four different forms of judgment
that express the connection between updates, statements and trace
formulas. A sequent has the form $\Gamma \vdash \gamma$, where
$\gamma$ is one of the judgements defined below, and $\Gamma$ a set of
such judgements. $\Gamma \vdash \gamma$ means: It holds for all states
$\sigma$ that if $\sigma$ is a model for all formulas in $\Gamma$,
then it is also a model for $\gamma$.

The first judgment form expresses that a procedure weakly adheres to
its contract.  When it occurs as a premise, it can be used to assume
the contract during symbolic execution. When it occurs in a
conclusion, it is the proof obligation that must be established by
verification. This judgment is independent of a given state.
\begin{definition}[Contract Judgment]
  The judgment $m: C_m$ expresses that a procedure $m$ weakly adheres
  to its contract $C_m$:
  \[
    \models m : C_m \iff m \models \widehat{C_m}
  \]
\end{definition}

We cannot use (strong) adherence here, because a procedure cannot
control, whether its caller guarantees the post-trace.  Strong
adherence will emerge later as a global property of the calculus.

The second judgment form expresses that an update $\update$ describes
traces that are models for some trace formula $\Phi$. It is relative
to a state $\sigma$.
\begin{definition}[Local Update Judgment]
  \label{def:local-update-judgement}
  The judgment $\judge{\update}{\Phi}$ holds in a state $\sigma$ if
  all traces of $\update$, prefixed with $\sigma$ are specified by
  $\Phi$:
  \[
    \sigma \models \judge{\update}{\Phi}
    \iff
    \eval{\update}{\langle\sigma\rangle} \subseteq \eval{\Phi}{}
  \]
\end{definition}

The next two judgments forms are similar to the previous ones, but
have a \emph{global} nature in the sense that they do not only
consider the traces described by the local semantics of a statement or
update, but also include the traces generated by calls in the updates
and statements.  This means that any unresolved \invocEv event leads
to execution of the associated procedure.
%
\begin{definition}[Update and Statement Global Judgments]
  \label{def:judgement-global-update}
  The global judgments $\GlobalJudge{\update s}{\Phi}$ and
  $\GlobalJudge{\update}{\Phi}$ hold in a state $\sigma$ if all traces
  described by $\update$ and $s$, respectively, and starting in
  $\sigma$ are specified by $\Phi$:
  %
  %
  \begin{align*}
    \sigma \models \GlobalJudge{\update s}{\Phi}
    & \iff \bigcup_{\tau' \in \eval{\update}{\singleton{\sigma}}}
      \eval{\update}{\langle\sigma\rangle}  \chopSem \eval{s}{\tau' }^{G} \subseteq \eval{\Phi}{}\\
    \sigma \models \GlobalJudge{\update}{\Phi}
    & \iff \bigcup_{\tau' \in \eval{\update}{\singleton{\sigma}}}
      \eval{\update}{\langle\sigma\rangle} \chopSem \eval{\zero}{\tau'}^{G}  \subseteq \eval{\Phi}{}
  \end{align*}
\end{definition}

A subtle point is that $\eval{\zero}{\tau'}^{G}$ still generates the
traces of the non-resolved procedures in $\tau'$, even though it
executes the empty context.

Please observe that the global judgements are global with relative to
their scope, not the whole program.  Regarding the difference between
local and global update judgments, consider update
$\update = \upl \invocEv(m,i)\upr\upP{\retEv(i')}$ and trace
specification
$\Phi =\,
\noEvent{}\invocEv(m,i)\concat\retEv(i')\concat\startEv(m,i)\noEvent{}$. Locally,
$\Phi$ is not a valid specification for $\update$, because for any
$\sigma$:
\begin{align*}
  & \sigma\models \judge{\upl \invocEv(m,i)\upr \upP{\retEv(i')}}{\,\noEvent{}\invocEv(m,i)\concat\retEv(i')\concat\startEv(m,i)\noEvent{}}\\
  \iff &
         \eval{\upl \invocEv(m,i)\upr\upP{\retEv(i')}}{\langle\sigma\rangle} \subseteq \eval{\noEvent{}\invocEv(m,i)\concat\retEv(i')\concat\startEv(m,i)\noEvent{}}{\sigma}\\
  \iff &
         \{\invocEvP{\sigma}{m,i} \chopSem\,\retEv_{\sigma}(i')\} \subseteq \eval{\noEvent{}\invocEv(m,i)\concat\retEv(i')\concat\startEv(m,i)\noEvent{}}{\sigma}
\end{align*}

This is obviously not the case. However, it is true as a global
judgment, because in this case the invocation event starts procedure
$m$, which adds the required event:
\[
  \sigma \models \GlobalJudge{\upl \invocEv(m,i)\upr \upP{\retEv(i')}}{\,\noEvent{}\invocEv(m,i)\concat\retEv(i')\concat\startEv(m,i) \noEvent{}}
\]

We override the $schedule$ function of
Def.~\ref{def:schedule-function} to accept also trace updates.
\begin{definition}[Schedule Function over Trace Updates]
  \begin{align*}
    schedule(\update) = \{(m,i)\mid\, & \update = \update_1\upP{\invocEvP{}{m,i}}\,\update_2\,\upP{\retEv(oId)}\,\update_3\\ 
                        & \text{where } \update_3 \text{ does not contain } \runEv(m,i,as)\}  
  \end{align*}
\end{definition}

\begin{lemma}[Soundness of Scheduling]
  \label{lemma:soundness-scheduling}
  \[
    \mathit{schedule}(\update) = \bigcup_{\tau \in \eval{\update}{}} \mathit{schedule}(\tau)
  \]
\end{lemma}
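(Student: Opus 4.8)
The statement equates a syntactic/static notion of scheduling on trace updates with the union of the semantic scheduling function over all concrete traces denoted by the update. The plan is to unfold both sides using the relevant definitions and match them by an induction on the structure of the trace update $\update$. First I would recall that $\eval{\update}{}$ (meaning $\eval{\update}{\langle\sigma\rangle}$ for arbitrary $\sigma$, since by design the set of generated events does not depend on the initial state) is obtained by the local evaluation rules of Fig.~\ref{fig:stmt-update-eval} together with the generalized \textsc{Progress} rule; in particular every $\upP{\invocEv(m,i)}$ contributes exactly one event $\invocEvP{\sigma}{m,i}$, every $\upP{\retEv(oId)}$ contributes $\retEv_\sigma(oId)$, and a $\upP{\runEv(m,i,as)}$ expands to $\eval{\xasync{s;return}}{\cdots}^G$, which contains a $\pushEvP{}{(m,i)}$ event (via the \textsc{Run}/\textsc{Call} composition rules and the semantics of $\startEv$). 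So the occurrences of $\invocEv$, $\retEv$, and $\pushEv$ events in a trace $\tau\in\eval{\update}{}$ are in one-to-one correspondence with the corresponding update fragments in $\update$, with $\runEv(m,i,as)$ being exactly the fragment that forces a matching $\pushEvP{}{(m,i)}$ to appear.

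Next I would expand the left-hand side via Def.~(Schedule Function over Trace Updates): $(m,i)\in\mathit{schedule}(\update)$ iff $\update$ factors as $\update_1\upP{\invocEvP{}{m,i}}\update_2\upP{\retEv(oId)}\update_3$ where $\update_3$ contains no $\runEv(m,i,as)$. I would expand the right-hand side via Def.~\ref{def:schedule-function}: $\mathit{schedule}(\tau) = \mathit{children}(\mathit{currScp}(\tau),\tau)\cap V_{\mathit{idle}}(\tau)$. So the core of the argument is to show, for all $\tau$ generated from $\update$: (i) $(m,i)\in V_{\mathit{idle}}(\tau)$ iff there is an $\invocEvP{}{m,i}$ in $\tau$ with no subsequent $\pushEvP{}{(m,i)}$, which corresponds on the update side to an $\upP{\invocEv(m,i)}$ fragment with no later $\runEv(m,i,as)$; and (ii) $(m,i)\in\mathit{children}(\mathit{currScp}(\tau),\tau)$ exactly when the $\invocEv(m,i)$ fragment lies in the scope that is current at the end of $\tau$, and that current scope is precisely the one opened by the last $\runEv/\startEv$ whose $\retEv$ has not yet been followed by a $\popEv$ — which on the update side is the $\upP{\retEv(oId)}$ occurring after $\upP{\invocEv(m,i)}$ in the required factorization. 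Matching the "$\update_2$ between $\invocEv(m,i)$ and $\retEv(oId)$" condition with "the scope of $(m,i)$'s parent is current" is exactly condition (ii), using $\mathit{currScp}$'s recursive definition. This is most cleanly done by induction on the length of the derivation $\langle\sigma\rangle,\cont{\update}\overset{\times}{\to}\langle\sigma\rangle\chopSem\tau,\cont{\zero}$, i.e.\ on the number of global steps, with the update fragments consumed left to right.

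In carrying this out I would first prove a bridging lemma: for every $\tau\in\eval{\update}{}$, the projections of $\tau$ and $\update$ onto their $\invocEv$, $\pushEv$, and $\retEv$ events coincide in order and identifiers (where $\runEv(m,i,as)$ in $\update$ maps to a $\pushEvP{}{(m,i)}$ in $\tau$), and moreover $\mathit{currScp}(\tau)$ equals the scope determined by the last un-popped $\runEv/\startEv$ fragment in $\update$. Granting this lemma, both containments follow: given a witnessing factorization of $\update$, any $\tau$ realizes the corresponding events, placing $(m,i)$ in $\mathit{children}(\mathit{currScp}(\tau),\tau)\cap V_{\mathit{idle}}(\tau)$; conversely, if $(m,i)\in\mathit{schedule}(\tau)$ for some $\tau\in\eval{\update}{}$, reading off the events of $\tau$ yields the required factorization of $\update$ (here one also uses that the absence of a $\popEvP{}{(\_,oId)}$ after $\retEv(oId)$ in $\tau$ translates to $\update_3$ not closing that scope, and that $\mathit{currScp}$ being the $oId$-scope forces the $\retEv(oId)$ position).

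The main obstacle I expect is the bookkeeping around \emph{nested} scopes, i.e.\ making the correspondence between $\mathit{currScp}(\tau)$ and the "dangling" $\runEv/\startEv$ fragment in $\update$ precise when there are several open-and-closed scopes inside $\update_2$ or $\update_3$: one must argue that a $\runEv(m,i,as)$ buried inside a sub-scope that is later popped does \emph{not} count as scheduling $(m,i)$ at the top level, which matches the fact that such an event sits inside a $\pushEv\ldots\popEv$ bracket in $\tau$ and is therefore invisible to $\mathit{currScp}$ — but only the \emph{top-level} (w.r.t.\ the current scope) occurrence matters. Handling this cleanly requires a careful statement of the induction hypothesis that tracks the stack of open scopes in both $\update$ and $\tau$ simultaneously; once that invariant is set up, the case analysis over the four composition rules (\textsc{Progress}, \textsc{Call}, \textsc{Run}, \textsc{Return}) and the update evaluation clauses is routine.
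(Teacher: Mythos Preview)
Your bridging lemma is false as stated, and this is where the argument would fail. You claim that ``the occurrences of $\invocEv$, $\pushEv$, and $\retEv$ events in a trace $\tau\in\eval{\update}{}$ are in one-to-one correspondence with the corresponding update fragments in $\update$, with $\runEv(m,i,as)$ being exactly the fragment that forces a matching $\pushEvP{}{(m,i)}$ to appear.'' But look at the evaluation clause for $\upP{\runEv(m,i,as)}$ in Fig.~\ref{fig:stmt-update-eval}: it returns $\eval{\xasync{s;return}}{\callEvP{\sigma}{m,i}}^G$, i.e.\ the \emph{full global execution} of the body of $m$. That body may itself contain asynchronous calls $!m'()$, synchronous calls, and so on, which in turn generate further $\invocEv$, $\callEv$, $\pushEv$, $\retEv$, and $\popEv$ events in $\tau$. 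None of these correspond to any fragment in $\update$: a single $\runEv$ in $\update$ can blow up into an arbitrarily deep nest of call scopes in $\tau$. So the projection correspondence you rely on simply does not hold, and the direction $\bigcup_\tau\mathit{schedule}(\tau)\subseteq\mathit{schedule}(\update)$ cannot be obtained from it.

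What the paper actually uses is a different observation: the traces in $\eval{\update}{}$ contain \emph{more} call scopes than $\update$ records, but the tree-like concurrency model together with the side condition $\mathit{schedule}(\tau\chopSem\tau')=\emptyset$ in the definition of $\globEval{\cdot}{\cdot}$ guarantees that every such additional scope is \emph{closed} by the time the $\runEv$ sub-trace ends. Concretely, when $\upP{\runEv(m,i,as)}$ is evaluated, the resulting sub-trace is only admitted if nothing inside the $(m,i)$-subtree remains schedulable; hence every $\invocEv(m',j)$ generated inside is matched by a $\pushEv(m',j)$ (so $(m',j)\notin V_{\mathit{idle}}(\tau)$), and the entire subtree sits inside a balanced $\pushEv\ldots\popEv$ bracket, so $\mathit{currScp}(\tau)$ skips over it. That is the missing idea: you must argue, from the definition of the global semantics and the tree-like scheduler, that the extra scopes introduced by $\runEv$ are all popped and therefore invisible to $\mathit{schedule}$. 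Your obstacle paragraph gestures at nested scopes but misidentifies the source of the nesting as being in $\update$ rather than in the semantic expansion of $\runEv$; once you see that the nesting comes from the $\globEval{\cdot}{\cdot}$ inside the evaluation of $\runEv$, the induction hypothesis you need is precisely that each such global sub-evaluation returns a trace whose own schedule set is empty.
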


The proof is in the appendix.  The tree-like concurrency model is
crucial: The traces in the semantics of $\mathcal{U}$ contain
\emph{more} call scopes,
but the tree-like concurrency model ensures that
they are all closed, and hence cannot be scheduled.

\subsection{Proof Rules}
\label{sec:proof-rules}

We have four classes of proof rules:
\begin{enumerate*}[label=(\arabic*)]
\item for procedure contract judgments,
\item for symbolic execution of the sequential part of the language,
  i.e., straight-line programs and synchronous calls under the global
  judgment,
\item for asynchronous calls under the global judgment, and 
\item to reduce updates in a global judgment to a local judgment.
\end{enumerate*}
We do not consider the rules for updates in local judgments.  These
are an open research question orthogonal to context-aware
contracts. All given rules for procedure calls use contracts, we
  refrain from providing rules that resolve calls by inlining.

\paragraph{Procedure Contracts.}

The rule for procedure contracts requires to prove the global judgment
that the body of a procedure, starting when the pre-trace holds,
generates a trace where the pre-trace, the internal behavior and the
state post-condition hold.  Being global, this trace includes all
asynchronously called procedures.  To model the trace at the moment
when the procedure starts, a fresh, uninterpreted update symbol
$\mathcal{V}$ is used as the update prefix to describe \emph{locally}
the pre-trace.

The observation variables $\many{y}$ are skolemized into constant
symbols $\many{c^y}$.  This substitution is denoted
$[\many{y} \setminus \many{c^y}]$. Notation $\inline(m)$ retrieves the
body of procedure $m$. Judgments $C$ must all be contract
judgments. 
For simplicity, and without loss of generality, we set the identifier
of the current scope to a fixed constant $\mathit{oId}$.
\[
  \seqRule{Contract}{%
  \mathcal{V},\many{c^y_1}, \many{c^y_2}, \mathit{oId}~\text{fresh}\qquad
  \theta_\mathit{pre} = \left(\theta'_{a_m} \concat \stateFml{q_{a_m}}\right)\left[\many{y_1} \setminus \many{c^y_1}\right]\\
  \theta_\mathit{post} = \left(\theta'_{a_m} \concat \stateFml{q_{a_m}}\concat \theta_{s_m}' \concat \stateFml{q_{c_m}}\right)\left[\many{y_1} \setminus \many{c^y_1}\right]\left[\many{y_2} \setminus \many{c^y_2}\right]\\
    \seq{C, \judge{\mathcal{V}\upl\startEv(m, \mathit{oId})\upr}{\theta_\mathit{pre}}}{\GlobalJudge{\mathcal{V}\upl\startEv(m, \mathit{oId})\upr\,\inline(m)}{\theta_\mathit{post}}}}
  {\seq{C}{\judge{m}{\assumeAndContinue{\theta'_{a_m} \concat \mho_{\many{x_1,y_1}}.\stateFml{q_{a_m}}}{\stateFml{q_{a_m}} \concat \, \theta_{s_m}' \concat \mho_{\many{x_2,y_2}}.\stateFml{q_{c_m}}}{ \stateFml{q_{c_m}}\concat \, \_}}}}
\]

It is worth pointing out that the post-trace $\theta'_{c_m}$ in the
contract (see Def.~\ref{def:contract}) is not used here. Indeed it is
out of the scope of the procedure and occurs as an additional
obligation to the \emph{caller} in the rules below.  It is essential
that $\stateFml{q_{c_m}}$ does not describe the state after the final
statement of $m$, rather it describes the final state after the last
asynchronously called procedure terminates.

\paragraph{Symbolic Execution of Straight-Line Programs.}

The schematic rules of the symbolic execution calculus for
straight-line programs are in
Fig.~\ref{fig:straight-line-programs}, with the core language on
  the left and the domain-specific extension on the right.  The
\textsf{Assign} rule generates a state update for the assigned
variable.
Rule \textsf{Cond} branches a proof according to guard $e$ and rule
\textsf{Return} generates the eponymous event update.  The rules for
operations on a file $f$ generate the associated events. The rules
\textsf{Close}, \textsf{Read}, and \textsf{Write} require that the
file was opened and not closed intermittently.

\begin{figure}
  \begin{minipage}{0.49\textwidth}
    \[
      \seqRule{\textsf{\small{Assign}}}{
        \sequent{}{\update \upl v:= e\upr\GlobalJudge{s}{\Phi}}
      }{
        \sequent{}{\update \, \GlobalJudge{v=e\xasync{;}s}{\Phi}}
      }\]
    \[
      \seqRule{\textsf{\small Cond}}{
        \sequent{\update:\noEvent{}\stateFml{e}}{\GlobalJudge{\update s\xasync{;} s'}{\Phi}} \\
        \sequent{\update:\noEvent{}\stateFml{!e}}{\GlobalJudge{\update s'}{\Phi}}
      }{
        \sequent{}{\update \ \GlobalJudge{ \xasync{if} \, e \, \{s\}\xasync{;} s'}{\Phi}}
      }
    \]
    \[
      \seqRule{\textsf{\small Return}}{
        \sequent{}{\mathcal{U} \upl\retEv(oId)\upr\GlobalJudge{}{\Phi}}
      }{
        \sequent{}{\mathcal{U}\,\GlobalJudge{\xasync{return}}{\Phi}}
      }
    \]
    \[
      \seqRule{AsyncCall}{
        i\text{ fresh }\\
        \seq{\Gamma}{\GlobalJudge{\update\{\invocEv(m,i)\} s}{\Phi}}
      }{ 
        \seq{\Gamma}{\GlobalJudge{\update\, !m()\xasync{;}s}{\Phi}}
      }
    \]
  \end{minipage}
  \begin{minipage}{0.49\textwidth}
  \[
    \seqRule{Open}{%
      \seq{\Gamma}{\GlobalJudge{\update  \upP{\openEvP{}{f}} s }{\Phi}} }
    {\seq{\Gamma}{\GlobalJudge{\update \, \xasync{open(}f\xasync{);}s}}{\Phi }}
  \]
  \[
    \seqRule{Close}{%
      \sequent{}{\judge{\update}{ \noEvent{} \openEvP{}{f} \noEvent{\closeEvP{}{f}}}}\\
      \sequent{}{\GlobalJudge{\update  \upP{\closeEvP{}{f}} s }{\Phi}}}
    {\sequent{}{\GlobalJudge{\update \, \xasync{close(}f\xasync{);}s}}{\Phi }}
  \]
  \[
    \seqRule{Read}{%
      \seq{\Gamma}{\judge{\update}{ \noEvent{} \openEvP{}{f} \noEvent{\closeEvP{}{f}}}}\\
      \seq{\Gamma}{\GlobalJudge{\update  \upP{\readEvP{}{f}} s }{\Phi}}}
    {\seq{\Gamma}{\GlobalJudge{\update \, \xasync{read(}f\xasync{);}s}}{\Phi }}
  \]
  \[
    \seqRule{Write}{%
      \seq{\Gamma}{\judge{\update}{ \noEvent{} \openEvP{}{f} \noEvent{\closeEvP{}{f}}}}\\
      \seq{\Gamma}{\GlobalJudge{\update \upP{\writeEvP{}{f}} s }{\Phi}}}
    {\seq{\Gamma}{\GlobalJudge{\update \, \xasync{write(}f\xasync{);}s}}{\Phi }}
  \]
\end{minipage}
\caption{Sequent rules for straight-line programs}
\label{fig:straight-line-programs}
\end{figure}

\paragraph{Synchronous Procedure Calls.}

The pattern in the conclusion of the \textsf{Call} rule below matches
the antecedent (to retrieve the contract for the called procedure
$m$), the executed statement (to ensure the next statement to be
executed is a synchronous call) and the trace specification.  The
latter splits into three parts: $\Phi$ is the specification of the
trace until the call to $m$, $\theta$ is the specification for the
part of the trace that is generated by $m$, and $\Psi$ specifies the
remaining trace.\footnote{In practice, this split shape must be
  obtained by suitable weakening rules on trace formulas. The details
  are future work.}

The rule has three premises.
The first corresponds to establishing the condition for the contract
to apply: both, the pre-trace $\theta_{a_m}$ of the procedure contract
and the prefix $\Phi$ of the specification of the currently executed
code must be valid for the current trace update $\update$.  The second
premise checks that the internal behavior $\theta_{s_m}$ specified in
the contract is contained in the given specification $\theta$, i.e.\
the contract is suitable to achieve the claimed specification. In the
last premise update $\update$ is extended with a run event to mark
that the contract was used, and the specification is strengthened by
the contract, obtained from the proof of the first two premises.
Symbolic execution continues on $s$ in the succedent, where not only
$\Psi$ needs to be established, but also $\theta_{c_m}$, because as
the caller of $m$ we are also responsible to ensure the post-trace of
the contract.
%
\[
\seqRule{Call}{%
\seq{\Gamma}{\judge{\mathcal{U}}{(\Phi \wedge \theta_{a_m})}}\qquad
\eval{\theta_{s_m}}{} \subseteq \eval{\theta}{} \qquad \text{ $i$ fresh}\\
\seq{\Gamma, \,\judge{\update \{\runEv(m,i,sy)\}}{(\Phi \wedge \theta_{a_m})\chop\,\theta_{s_m}}}{\GlobalJudge{\\ 
\qquad \update  
\{ \runEv(m,i,sy)\}\,s}{(\Phi \wedge \theta_{a_m})\chop\,\theta_{s_m}\chop\,(\Psi \wedge \theta_{c_m})}}}
{\seq{\Gamma, \judge{m}{\assumeAndContinue{\theta_{a_m}}{\theta_{s_m}}{\theta_{c_m}}} }{\GlobalJudge{\mathcal{U}\,m();s}}{\Phi\chop\,\theta\chop\,\Psi}}
\]

The remaining call rules follow the pattern established 
above with variations due to scheduling.

\paragraph{Deterministic Asynchronous Procedure Calls.}

The rule for deterministic scheduling applies when exactly one
asynchronously called procedure, here $m$, can be scheduled, according
to the first premise.  The lookup of the contract for $m$ in $\Gamma$
now becomes the second premise.  The remaining premises are analogues
to the \textsf{Call} rule, with the difference that the call
identifier $i$ is not fresh, but it matches the identifier of the
asynchronous invocation to be scheduled.

\[
\seqRule{ScheduleD}{%
 schedule(\update) = \{(m,i) \}\qquad
\seq{\Gamma}{\judge{m}{\assumeAndContinue{\theta_{a_m}}{\theta_{s_m}}{\theta_{c_m}}}}\\
\seq{\Gamma}{\judge{\update}{(\Phi \wedge \theta_{a_m})}}\qquad
\eval{\theta_{s_m}}{} \subseteq \eval{\theta}{} \\
\seq{\Gamma, \,  \judge{\update\upP{\runEv(m,i,as)}}{(\Phi \wedge \theta_{a_m}) \chop \theta_{s_m}}}{\\ \qquad \GlobalJudge{\update \upP{\runEv(m,i,as)} }{(\Phi \wedge \theta_{a_m})\chop\,\theta_{s_m}\chop\,(\Psi \wedge \theta_{c_m})}}}
{\seq{\Gamma }{\GlobalJudge{\update}}{\Phi\chop\,\theta\chop\,\Psi}}
\]

\paragraph{Non-deterministic Asynchronous Procedure Calls.}

The rule generalizes the deterministic version, by adding 
the last four premises not just once, but for each possible scheduling
decision. This ensures that symbolic execution considers all possible
traces.  We suffer from a blow-up in the size of the proof tree here,
but the use of contracts provides at least a suitable mechanism to
abstract over the scheduling decisions of the called procedures.
\[
  \seqRule{ScheduleN}{%
    \hspace*{-1cm}
(m,i) \in schedule(\update)
\begin{cases}
\seq{\Gamma}{\judge{m}{\assumeAndContinue{\theta_{a_m}}{\theta_{s_m}}{\theta_{c_m}}}}\\
\seq{\Gamma}{\judge{\update}{(\Phi \wedge \theta_{a_m})}}\qquad
\eval{\theta_{s_m}}{} \subseteq \eval{\theta}{} \\
\seq{\Gamma, \,  \judge{\update\upP{\runEv(m,i,as)}}{(\Phi \wedge \theta_{a_m}) \chop \theta_{s_m}}}{\\ \qquad \GlobalJudge{\update \upP{\runEv(m,i,as)} }{(\Phi \wedge \theta_{a_m})\chop\,\theta_{s_m}\chop\,(\Psi \wedge \theta_{c_m})}}
\end{cases}}
{\seq{\Gamma }{\GlobalJudge{\update }}{\Phi\chop\,\theta\chop\,\Psi}}
\]

We can observe a substantial degree of uniformity among the different
call rules. This is possible, because the use of events allows us to
separate \emph{scheduling} from \emph{contract application}.

\paragraph{Other Rules.}

The following inconspicuous rule reduces global to local reasoning:
When no invocation event is left to be resolved, then the local and
global judgments are equivalent and the final $\popEv$ event is
added.  We assume that the identifier and name of the procedure we are
considering are globally known as $m, \mathit{oId}$.
\[
  \seqRule{Finish}{
  \seq{\Gamma}{\judge{\update\upP{\popEvP{}{(m,\mathit{oId})}}}{\Phi}} \qquad \mathit{schedule}(\update) = \emptyset
  }{\seq{\Gamma}{\GlobalJudge{\update}{\Phi}}}
\]

\subsection{Properties of the Proof Rules}

Our rules are sound in the usual sense of sequent calculi.  As for the
compositionality, we get that all procedures behave as specified and
that files are treated correctly.

\begin{proposition}
  \label{prop:glob-sem-comp-update}
  $\eval{\update \update'}{\tau} = \bigcup_{\tau' \in \eval{\update}{\tau}} \eval{\update}{\tau} \,\chopSem\,
    \eval{\update'}{\tau \chopSem \tau'}$.
    
\end{proposition}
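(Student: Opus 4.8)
The plan is to prove the identity by induction on the length of the update sequence $\update$, following the pattern of Proposition~\ref{prop:semantics-composition}: under the reduction relation $\overset{\times}{\to}$ of Def.~\ref{def:local} a trace update is consumed just like a statement prefix, so that concatenation of updates plays the role of sequential composition. First I would dispatch the base case $\update=\epsilon$: the empty prefix is discarded by (generalised) \textsc{Progress}, hence $\eval{\epsilon}{\tau}=\{\singleton{\last(\tau)}\}$, and since $\singleton{\last(\tau)}\chopSem\rho=\rho$ for every trace $\rho$ beginning in $\last(\tau)$, both sides of the claim reduce to $\eval{\update'}{\tau}$.

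For the inductive step I would write $\update=u\,\update''$ and evaluate the leading atom $u$ in state $\last(\tau)$ via Fig.~\ref{fig:stmt-update-eval}. For the simple atoms ($\upP{v:=e}$, $\upP{\invocEv(m,i)}$, $\upP{\startEv(m,i)}$, $\upP{\retEv(i)}$, $\upP{\popEvP{}{m,i}}$) this yields a single one-step trace; for the run-event atoms $\upP{\runEv(m,i,sy)}$ and $\upP{\runEv(m,i,as)}$ it yields the \emph{set} $T_u$ of traces produced by the global semantics of $m$'s body. Reduction then continues on $\cont{\update''\update'}$ and is deterministic modulo this run-branching (\textsc{Call} never applies, because no atom leaves the trace ending in a bare call event; \textsc{Run} for the current scope is excluded by $\overset{\times}{\to}$; \textsc{Run} and \textsc{Return} are mutually exclusive through their $\mathit{schedule}$ premises), so the atoms of $\update''\update'$ are consumed strictly left to right, with no interleaving between the residue of $\update$ and $\update'$. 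This gives
\[
  \eval{u\,\update''\update'}{\tau}\;=\;\bigcup_{\tau_u\in T_u}\eval{u}{\singleton{\last(\tau)}}\chopSem\eval{\update''\update'}{\tau\chopSem\tau_u}\,,
\]
and likewise with $\update''$ in place of $\update''\update'$. I would then apply the induction hypothesis to $\update''$ (keeping $\update'$ fixed) to rewrite $\eval{\update''\update'}{\tau\chopSem\tau_u}$, and re-associate the nested unions and chops --- using associativity of $\chopSem$ where defined, and that $\tau\chopSem\tau_u$ is already a prefix of every trace produced by the inner evaluation so that no chop is undefined --- obtaining $\bigcup_{\tau'\in\eval{u\update''}{\tau}}\eval{u\update''}{\tau}\chopSem\eval{\update'}{\tau\chopSem\tau'}$, as required.

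The main obstacles are two. First, a $\upP{\retEv(i)}$ atom for the current scope blocks \textsc{Progress} and forces a \textsc{Return} step (adding the $\popEv$ and switching to the parent scope) before the remaining atoms are consumed; I would have to check that this forced step occurs at the same position in the combined reduction of $\cont{\update\update'}$ as in the separate reductions of $\cont{\update}$ and $\cont{\update'}$, so that the decomposition is not disturbed. Second, and more delicate, a run-event atom invokes the \emph{global} semantics of a procedure body, which may itself spawn and resolve further asynchronous calls; I need to argue that these spawned scopes stay confined to that inner evaluation and never reappear as schedulable among the atoms of $\update''\update'$. This is precisely where the tree-like concurrency model enters --- as it already does in Lemma~\ref{lemma:soundness-scheduling}: every scope opened inside the body is closed again before that evaluation terminates. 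With these observations the factoring ``run $\update$ to exhaustion, then run $\update'$'' is preserved and the induction goes through.
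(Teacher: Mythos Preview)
The paper does not actually provide a proof of Proposition~\ref{prop:glob-sem-comp-update}; it is stated without argument and then used freely in the soundness proofs of the calculus rules in the appendix. So there is nothing to compare your proposal against directly.

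On its own merits, your approach is the natural one and matches the spirit of the analogous (also unproven) Proposition~\ref{prop:semantics-composition}: induction on the length of the left update, using that the first rule of Fig.~\ref{fig:stmt-update-eval} peels off one atom at a time. Your base case and the routine atoms go through exactly as you describe.

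The two obstacles you flag are real, and they are not merely technicalities to be ``checked''---they point to places where the paper's semantics is underspecified. First, after an atom $\upP{\retEv(\mathit{oId})}$ for the \emph{current} scope is consumed, the second premise of \textsc{Progress} blocks further progress on the remaining atoms of $\update''\update'$, yet the calculus routinely builds updates of the shape $\ldots\upP{\retEv(\mathit{oId})}\upP{\runEv(\ldots)}\ldots\upP{\popEv(\ldots)}$ (via \textsf{Return}, \textsf{ScheduleD/N}, \textsf{Finish}). The paper's remark that \textsc{Progress} is ``generalized'' to handle sets of traces does not say the blocking premise is relaxed; without that, the left-to-right consumption you rely on is not guaranteed by the stated rules. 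Second, your appeal to tree-like concurrency to confine the scopes opened inside a $\runEv$ atom is the right intuition (and is indeed what Lemma~\ref{lemma:soundness-scheduling} uses), but turning it into a formal argument here would require an invariant about $\mathit{currScp}$ across the inner global evaluation that the paper never states. In short: your outline is correct, and you have put your finger on exactly the gaps that a full proof would need to close---gaps the paper leaves open as well.
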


\begin{theorem}[Soundness]\label{thm:sound}
  Rules {\normalfont \ruleName{ScheduleD}, \ruleName{ScheduleN},
    \ruleName{Call}, \ruleName{Contract}, \ruleName{Finish}}, and the
  rules in Fig.~\ref{fig:straight-line-programs} are sound.
\end{theorem}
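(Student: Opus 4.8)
The plan is to verify soundness rule by rule, in the standard format \emph{validity of all premises implies validity of the conclusion}, where a sequent $\Gamma\vdash\gamma$ is valid iff every state (and interpretation of the free symbols) satisfying all judgments of $\Gamma$ also satisfies $\gamma$, and the meaning of the four judgment forms is fixed by Def.~\ref{def:local-update-judgement}, Def.~\ref{def:judgement-global-update}, and the contract judgment (which unfolds to weak adherence, Def.~\ref{def:adhere}). The workhorses are the compositionality statements Prop.~\ref{prop:semantics-composition} and Prop.~\ref{prop:glob-sem-comp-update}, which split $\eval{s;s'}{\cdot}^G$ resp.\ $\eval{\update\,\update'}{\cdot}$ along the semantic chop $\chopSem$, and Lemma~\ref{lemma:soundness-scheduling}, which reconciles $schedule(\update)$ with the concrete schedulers of the traces in $\eval{\update}{}$. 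I would first isolate one auxiliary fact: by the tree-like concurrency model, the fragment of a run generated by the composition rules (Def.~\ref{def:composition-rules}) delimited by the $\mathsf{start}(m,i)$ and $\mathsf{pop}(i)$ events of a scope is a self-contained global execution of $\inline(m)$ whose trace set depends only on the entry state and the current call-identifier counter; this is what makes the $\runEv$ abstraction of Fig.~\ref{fig:stmt-update-eval} faithful and is what connects the per-procedure calculus to whole-program runs.

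The straight-line rules of Fig.~\ref{fig:straight-line-programs} are essentially unfoldings: each matches one clause of the local evaluation of update-prefixed statements in Fig.~\ref{fig:stmt-update-eval}, so rewriting $\eval{\cdot}^G$ with that clause and Prop.~\ref{prop:glob-sem-comp-update} turns the conclusion's set inclusion into the premise's. \ruleName{Cond} additionally needs that $\eval{\noEvent{}\stateFml{e}}{}$ and $\eval{\noEvent{}\stateFml{!e}}{}$ cover all non-empty finite prefixes, so the two branch premises jointly account for every trace. For \ruleName{Close}, \ruleName{Read}, \ruleName{Write} the extra file-discipline premise only enlarges the antecedent, which cannot break soundness (it is precisely what later yields file correctness, Def.~\ref{def:file}). \ruleName{Finish} follows from Lemma~\ref{lemma:soundness-scheduling}: when $schedule(\update)=\emptyset$, none of the call scopes inside any $\tau\in\eval{\update}{}$ is schedulable (they are all closed, by tree-likeness), so executing the empty continuation merely appends the terminating event, and the global judgment reduces to the local one about $\update\upP{\popEvP{}{(m,\mathit{oId})}}$.

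The rules \ruleName{Call}, \ruleName{ScheduleD}, \ruleName{ScheduleN} are handled uniformly --- this uniformity being by design, since events decouple scheduling from contract application. Fix a state $\sigma$ modelling the antecedent, so in particular $m\models\widehat{C_m}$; from the first premise $\eval{\update}{\singleton\sigma}\subseteq\eval{\Phi\wedge\theta_{a_m}}{}$, so the trace produced so far matches both the prefix $\Phi$ of the goal and the pre-trace of $m$. Unfolding the denotation of $\runEv(m,i,\cdot)$ from Fig.~\ref{fig:stmt-update-eval} as the global semantics of $\inline(m)$, and combining the auxiliary fact above with weak adherence of $m$ (which forces the $\mathsf{start}(m,i)$-to-$\mathsf{pop}(i)$ segment of the resulting traces to match the internal behavior $\theta_{s_m}$), one gets $\sigma\models\judge{\update\upP{\runEv(m,i,\cdot)}}{(\Phi\wedge\theta_{a_m})\chop\theta_{s_m}}$; with the second premise $\eval{\theta_{s_m}}{}\subseteq\eval{\theta}{}$ this also lies in $\eval{\Phi\chop\theta}{}$. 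That judgment is exactly the extra antecedent of the third premise (resp.\ of the matching block in \ruleName{ScheduleN}), which then delivers the conclusion; the extra conjunct $\theta_{c_m}$ in that premise's succedent is the post-trace obligation the rule hands to the caller and plays no role in the soundness step. For the \ruleName{Schedule} rules, ``the next statement is a synchronous call to $m$'' is replaced by ``$(m,i)\in schedule(\update)$'', which Lemma~\ref{lemma:soundness-scheduling} matches with the scopes the concrete semantics may schedule; \ruleName{ScheduleN} then takes the union over all such choices.

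The hard case, and where I expect most of the effort, is \ruleName{Contract}, whose conclusion is $m\models\widehat{C_m}$, i.e.\ that \emph{every} $\tau\in\globEval{P}{d}$ and \emph{every} $i\in\mathsf{idOf}(m,\tau)$ put $\tau$ in the adherence pattern of Def.~\ref{def:tadhere} with trivial post-trace. Given such $\tau,i$, the composition rules schedule scope $(m,i)$ at some step, so by Prop.~\ref{prop:semantics-composition} and the auxiliary fact I would write $\tau=\tau_{\mathit{pre}}\chopSem\mathsf{start}(m,i)\chopSem\tau_{\mathit{body}}\chopSem\mathsf{pop}(i)\chopSem\tau_{\mathit{rest}}$ with $\tau_{\mathit{body}}$ a global execution of $\inline(m)$ from $\last(\tau_{\mathit{pre}})$. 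Since $\mathcal{V}$, $\many{c^y_1}$, $\many{c^y_2}$, $\mathit{oId}$ are fresh, the premise holds in particular under the interpretation mapping $\mathcal{V}$ to an update denoting $\tau_{\mathit{pre}}$ and each skolem constant to the value the observed program variable takes at the matching $\mho$-position of $\tau$; then $\judge{\mathcal{V}\upP{\mathsf{start}(m,\mathit{oId})}}{\theta_{\mathit{pre}}}$ holds exactly when the prefix matches $\theta'_{a_m}\concat\stateFml{q_{a_m}}$, and in that case the premise's succedent yields $\tau_{\mathit{pre}}\chopSem\mathsf{start}(m,i)\chopSem\tau_{\mathit{body}}\in\eval{\theta_{\mathit{post}}}{}$, which after reattaching $\mathsf{pop}(i)\concat\stateFml{q_{c_m}}$ (emitted by the $\popEv$-rule of the semantics, mirrored by \ruleName{Finish}) is the required membership of $\tau$. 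The two delicate points are (i) making this correspondence between the symbolic run of $\mathcal{V}\upP{\mathsf{start}(m,\mathit{oId})}\,\inline(m)$ and the concrete $\mathsf{start}$-to-$\mathsf{pop}$ fragment fully precise --- in particular that the events emitted by the $\mathsf{start}$ update and by \ruleName{Finish} line up with $\mathsf{start}(m,i)$ and $\mathsf{pop}(i)$, and that the asynchronous children of $m$, eagerly resolved inside the update semantics, are correctly counted by the global judgment and by $schedule$ over updates; and (ii) the observation-quantifier bookkeeping, i.e.\ showing the nested skolemizations $[\many{y_1}\setminus\many{c^y_1}][\many{y_2}\setminus\many{c^y_2}]$ can be instantiated \emph{consistently} with the values seen along $\tau$, so that $\theta_{\mathit{pre}}$ and $\theta_{\mathit{post}}$ evaluated under those constants coincide with the $\mho$-semantics of the original contract formulas.
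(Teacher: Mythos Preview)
Your plan is correct and follows essentially the paper's own rule-by-rule strategy: unfold each rule via the local/global update semantics (Fig.~\ref{fig:stmt-update-eval}) together with Prop.~\ref{prop:semantics-composition}, Prop.~\ref{prop:glob-sem-comp-update} and, for \ruleName{Finish}, Lemma~\ref{lemma:soundness-scheduling}; the file-operation side premises are indeed inert for soundness.

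There is one genuine, useful difference in how you handle \ruleName{Contract}. The paper does not argue by fixing a concrete $\tau\in\globEval{P}{d}$ and instantiating $\mathcal{V},\many{c^y},\mathit{oId}$ to that run. Instead it first isolates an auxiliary equivalence (their ``Prop.~connect'') stating that weak adherence of $m$ is, up to the observation quantifiers, the same as the implication ``prefix in $\theta_{\mathit{pre}}$ $\Rightarrow$ prefix-plus-body in $\theta_{\mathit{post}}$'' for executions of $\inline(m)$ from an arbitrary havoc $\mathcal{V}$; the two Skolem lemmas (your point~(ii)) are proved separately and then fed into that equivalence. Your direct-instantiation route is arguably cleaner and makes the role of freshness explicit, but it needs the extra ingredient you call the ``auxiliary fact'' (that the $\mathsf{start}$-to-$\mathsf{pop}$ fragment of a program trace is, modulo the identifier counter, a global run of $\inline(m)$); the paper uses this fact only implicitly, absorbing it into the unfolding of $\runEv$ and into the step labelled ``Contract of $m$'' in the \ruleName{Call} proof. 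Either packaging works; just be aware that this fact is doing real work in both \ruleName{Contract} and \ruleName{Call}/\ruleName{Schedule}, exactly as you flagged in your point~(i).
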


The following theorem states a sufficient condition for a program
without asynchronous self-calls to be correct.
  
\begin{theorem}[Global Adherence]\label{thm:global}
  Let $P$ be an always terminating program with procedures $\many{m}$.
  Let $C^{\many{m}} = \{m : C_m\}$ denote the set of all contract
  judgments and
  $C_{m'}^{\many{m}} = C^{\many{m}} \setminus \{m' : C_{m'}\}$ the set
    of contract judgments for all procedures but $m'$.  If for all
    $m\in \many{m}$ the following sequent is valid
  \[
    \seq{C^{\many{m}}_m}{m : C_m}
  \]
  then
  \begin{enumerate*}[label=(\arabic*)]
  \item All traces of $P$ are file-correct (Def.~\ref{def:file}), and
  \item $\models P$ (Def.~\ref{def:adhere}).
  \end{enumerate*}
\end{theorem}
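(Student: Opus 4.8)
The plan is to establish the two claims together in three stages: lift the provability hypotheses to their semantic content; bootstrap these into \emph{weak} adherence of every procedure; and finally upgrade weak to \emph{strong} adherence, reading off file correctness along the way. For the first stage I invoke Soundness (Thm.~\ref{thm:sound}): each hypothesis $\seq{C^{\many{m}}_m}{m : C_m}$ becomes a semantically valid implication, which by the definition of the contract judgment ($\models m:C_m \iff m\models\widehat{C_m}$) reads: \emph{if} every procedure $m'\neq m$ weakly adheres to $C_{m'}$, \emph{then} $m$ weakly adheres to $C_m$.

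The second stage is the heart of the argument, because the family of implications just obtained is circular on its face. I would break the circularity by a well-founded induction that exploits the structural assumptions on $P$: termination, recursion confined to synchronous (hence inlined) calls, and — as stated in the paragraph before the theorem — the absence of asynchronous self-calls, which is what makes the contract hypotheses in $C^{\many{m}}_m$ sufficient at every scheduling point encountered while verifying $m$. Fix $\tau\in\globEval{P}{d}$; as $P$ terminates, $\tau$ is finite and its call tree (Def.~\ref{def:call-tree}) is finite. Order the call instances $(m,i)$ occurring in $\tau$ by the number of call instances nested transitively below $(m,i)$; this order is well-founded, and every callee of $(m,i)$ — synchronous or asynchronous — is strictly smaller. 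By induction on it I show that $\tau$ weakly adheres at every $(m,i)$: the derivation of $\seq{C^{\many{m}}_m}{m:C_m}$, namely an application of \ruleName{Contract} followed by symbolic execution of $\inline(m)$ via the straight-line rules and \ruleName{Call}, \ruleName{ScheduleD}, \ruleName{ScheduleN}, \ruleName{Finish}, invokes a contract hypothesis $m':C_{m'}$ only at points where a strictly smaller call instance of $m'$ is scheduled, and there the induction hypothesis delivers exactly the required weak adherence. Prop.~\ref{prop:semantics-composition} and Prop.~\ref{prop:glob-sem-comp-update} glue the local traces produced by symbolic execution into the matching slice of $\tau$, and Lemma~\ref{lemma:soundness-scheduling} guarantees the scheduling choices of the calculus agree with those of the semantics. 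Quantifying over all $\tau$ and all $(m,i)$ yields $m\models\widehat{C_m}$ for every $m\in\many{m}$.

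For the third stage, fix $\tau\in\globEval{P}{d}$, a procedure $m$, and a call identifier $i$ of $m$ in $\tau$. The second stage gives that $\tau$ decomposes along $\startEv(m,i)$ and $\popEvP{}{m,i}$ with $\theta'_{a_m}\concat\stateFml{q_{a_m}}$ before, the internal behavior of $C_m$ in the middle, and an unconstrained tail after $\popEvP{}{m,i}$; what remains is to show this tail satisfies the post-trace $\theta'_{c_m}$. Here I use the caller-side bookkeeping of the calculus: in \ruleName{Call}, \ruleName{ScheduleD}, \ruleName{ScheduleN} the caller of $(m,i)$ must carry on its symbolic execution against $(\Psi\wedge\theta_{c_m})$, so the caller's own established specification already entails $\theta_{c_m}$ on the segment of $\tau$ between $\popEvP{}{m,i}$ and the caller's own $\popEv$; iterating this up the finite call stack, each level supplying the analogous obligation for its immediate callee and the implicitly-verified init block at the top having a trailing specification trivially met because $\tau$ ends, covers the entire tail. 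Hence $\tau,i\models C_m$ for all $m,i$, i.e.\ $\models P$ — claim~(2). For claim~(1), the rules \ruleName{Close}, \ruleName{Read}, \ruleName{Write} carry the premise $\judge{\update}{\noEvent{}\openEvP{}{f}\noEvent{\closeEvP{}{f}}}$, so wherever such an operation occurs the trace accumulated so far (relative to the assumed pre-trace of the enclosing procedure, itself discharged by the caller via the first premise of \ruleName{Call}) already contains a matching $\openEvP{}{f}$ with no intervening $\closeEvP{}{f}$; composing these local guarantees along $\tau$ with the same gluing argument as before shows every $\tau\in\globEval{P}{d}$ is file correct in the sense of Def.~\ref{def:file}.

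I expect the main obstacle to be precisely the circularity of the second stage: converting the mutually referential sequents into the conjunction of all weak-adherence statements genuinely requires the well-founded induction on the finite call structure of terminating runs, and this is exactly where termination, synchronous-only recursion, and the exclusion of asynchronous self-calls are indispensable; moreover one has to verify carefully that the soundness of the call rules composes along this order, which is the real work done by Lemma~\ref{lemma:soundness-scheduling} and the local/global decomposition propositions. The post-trace threading in the third stage is conceptually routine but fiddly, mainly because the observation quantifiers inside $\theta_{c_m}$ get skolemized in the calculus, so one must make sure the Skolem witnesses chosen for the caller coincide with those used for the callee.
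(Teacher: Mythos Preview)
Your proposal is correct and rests on the same two observations the paper uses: the side premises of \ruleName{Read}/\ruleName{Write}/\ruleName{Close} enforce file correctness, and the final premise of \ruleName{Call}/\ruleName{ScheduleD}/\ruleName{ScheduleN} obliges the \emph{caller} to establish the callee's post-trace $\theta_{c_m}$, which is exactly what lifts weak to strong adherence. The paper, however, is considerably terser and organized differently. It treats the two claims separately, each by a minimal-counterexample argument (``take the first event in the trace where the property fails'') rather than your direct induction, and---crucially---it simply \emph{assumes} weak adherence of all procedures as input to the strong-adherence step (its item~(a)) without ever confronting the circularity you isolate in Stage~2. Your explicit well-founded induction on call instances in a fixed terminating trace is therefore a genuine addition: it supplies the step the paper leaves implicit, namely how the family of mutually dependent sequent hypotheses actually yields the conjunction of all weak-adherence facts, and it pinpoints where termination and the ban on asynchronous self-calls are used. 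What the paper's style buys is brevity; what yours buys is a rigorous account of the bootstrapping. One small remark: in Stage~1 you invoke Thm.~\ref{thm:sound} to pass from derivability to validity, but the theorem hypothesis already asserts validity; on the other hand both you and the paper then tacitly rely on an actual \emph{derivation} being available (so that one can speak of which rules were applied and which premises they discharged), so this is a shared informality rather than a gap peculiar to your argument.
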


The proofs can be found in the appendix. The only detail we point out
here is that to show program correctness, we need strong procedure
adherence, which we obtain from the proof of $\theta_{c_m}$ demanded
by the succedent of the final premise of the call rules.


\section{Case Study}\label{sec:case}
We verify the procedures described in Example~\ref{ex:files} to
illustrate the working of the calculus.  First we specify a set of
contracts for Example~\ref{ex:files}, in slightly prettified
syntax. The init block is, as discussed above, regarded as a procedure
(with a trivial contract). Obviously, this contract cannot have a
non-trivial context:
\[
  C_\xasync{main} = \assumeAndContinue{\stateFml{\mathsf{true}}}{\noEvent{}}{\stateFml{\mathsf{true}}}
\]

Regarding procedure \async{do}, we specify that it assumes that the
file stored in $f$ was not opened so far, and it closes it internally.
\[
C_\xasync{do} = \assumeAndContinue{\noEvent{}\mho~\xasync{file}~\mathbf{as}~f.~\noEvent{\openEvP{}{f}}\stateFml{\mathsf{true}}}{\noEvent{}\closeEvP{}{f}\noEvent{}}{\stateFml{\mathsf{true}}\noEvent{}}
\]

Procedure \async{closeF} specifies that it closes the file and does
not reopen it. To prove this modularly, we must specify that the file
was opened before.
\begin{align*}
  C_\xasync{closeF} = & \assume{\noEvent{}\mho~\xasync{file}~\mathbf{as}~f.~\openEvP{}{f}\noEvent{\closeEvP{}{f}}\stateFml{\mathsf{true}}} \ | \\
  & \qquad\continue{\stateFml{\mathsf{true}}\concat\closeEvP{}{f}\noEvent{\openEvP{}{f}} \ | \ \stateFml{\mathsf{true}}\noEvent{}}
\end{align*}

Finally, we specify that \async{operate} just writes to the file, but
doesn't close it, which is expected to be done by the caller.
\begin{align*}
  C_\xasync{operate} = & \assume{\noEvent{}\mho~\xasync{file}~\mathbf{as}~f.~\openEvP{}{f}\noEvent{\closeEvP{}{f}}\stateFml{\mathsf{true}}} \ | \\
  & \qquad\continue{\stateFml{\mathsf{true}}\concat\writeEvP{}{f}\noEvent{\closeEvP{}{f}} \ | \ \stateFml{\mathsf{true}}\noEvent{}\closeEvP{}{f}\noEvent{}}
\end{align*}

\paragraph{Proving \async{closeF}.}

Let $M$ be the set of all procedures. We use, for readability, the
following abbrevations:
\begin{align*}
\update &= \mathcal{V}\{\startEv(\xasync{closeF},\mathit{oId})\}\\
\phi_1 &= \judge{\update}{\openEvP{}{f}\noEvent{\closeEvP{}{f}}\stateFml{\mathsf{true}}}
\end{align*}

We apply the $\mathsf{Contract}$ rule and skolemize away all
observation quantifiers.  In the antecedent we have the assumption
that the pre-trace holds, and must show that the trace specification
is globally fulfilled.  Next, we apply the $\mathsf{Close}$ rule and
show that the file is not closed yet.  For simplicity, we map the
variable \async{f} directly to the Skolem constant from the
observation quantifier.  To show the left premise, we observe that
$\phi_1$ occurs on both sides of the sequent if we use the simple
observation $\eval{\noEvent{}}{} \Leftrightarrow \eval{\noEvent{}\stateFml{\mathsf{true}}}{}$.

\noindent\resizebox{\textwidth}{!}{
\begin{minipage}{1.1\textwidth}
\begin{prooftree}
\AxiomC{}
\UnaryInfC{$C_\mathtt{closeF}^{\many{m}}, \phi_1 \vdash
\judge{\update}{\noEvent{}\openEvP{}{f}\noEvent{\closeEvP{}{f}}}$}
\AxiomC{$\vdots\;(1)$}
\LeftLabel{$\mathsf{Close}$}
\BinaryInfC{$C_\mathtt{closeF}^{\many{m}}, \phi_1\vdash \GlobalJudge{\update \xasync{close(f); return;}}{\openEvP{}{f}\noEvent{\closeEvP{}{f}} \stateFml{\mathsf{true}}\concat\closeEvP{}{f}\noEvent{\openEvP{}{f}}}$}
\LeftLabel{$\mathsf{Contract}$}
\UnaryInfC{$ C_\mathtt{closeF}^{\many{m}} \vdash \GlobalJudge{\mathtt{closeF}}{C_\mathtt{closeF}} $}
\end{prooftree}
\end{minipage}}
\smallskip

We symbolically execute \xasync{return}, adding the corresponding
event to the update, and apply the $\mathsf{Finish}$ rule, because we
obviously cannot schedule any other procedure.  Clearly, this relies
on proof obligation generation and the concurrency model: No event
\emph{this process} can schedule is pending.

\noindent\resizebox{\textwidth}{!}{
\begin{minipage}{1.1\textwidth}
\begin{prooftree}
\AxiomC{}
\UnaryInfC{$\mathit{schedule}(\update\{\closeEvP{}{f}\}\{\retEv(\mathit{oId})\}) = \emptyset$}
\AxiomC{$\vdots\;(2)$}
\LeftLabel{$\mathsf{Finish}$}
\BinaryInfC{$C_\mathtt{closeF}^{\many{m}}, \phi_1\vdash \GlobalJudge{\update\{\closeEvP{}{f}\}\{\retEv(\mathit{oId})\}}{\openEvP{}{f}\noEvent{\closeEvP{}{f}} \stateFml{\mathsf{true}}\concat\closeEvP{}{f}\noEvent{\openEvP{}{f}}}$}
\LeftLabel{$\mathsf{Return}$}
\UnaryInfC{$C_\mathtt{closeF}^{\many{m}}, \phi_1\vdash \GlobalJudge{\update\{\closeEvP{}{f}\}\xasync{return;}}{\openEvP{}{f}\noEvent{\closeEvP{}{f}} \stateFml{\mathsf{true}}\concat\closeEvP{}{f}\noEvent{\openEvP{}{f}}}$}
\UnaryInfC{$\vdots\;(1)$}
\end{prooftree}
\end{minipage}}
\smallskip

The remaining sequent is straightforward to show: The antecedent is
\judge{\update}{\openEvP{}{f}\noEvent{\closeEvP{}{f}}\stateFml{\mathsf{true}}},
so we need to show
$
\judge{\{\closeEvP{}{f}\}\{\retEv\{\mathit{oId}\}\}}{\closeEvP{}{f}\noEvent{\openEvP{}{f}}}
$.
This reduces to
$\judge{\{\retEv\{\mathit{oId}\}\}}{\noEvent{\openEvP{}{f}}}$, which
is clearly the case.
\smallskip

\noindent\resizebox{\textwidth}{!}{
\begin{minipage}{1.15\textwidth}
\begin{prooftree}
\AxiomC{}
\UnaryInfC{$\judge{\update}{\openEvP{}{f}\noEvent{\closeEvP{}{f}}\stateFml{\mathsf{true}}} \vdash \judge{\update\{\closeEvP{}{f}\}\{\retEv(\mathit{oId})\}}{\openEvP{}{f}\noEvent{\closeEvP{}{f}} \stateFml{\mathsf{true}}\concat\closeEvP{}{f}\noEvent{\openEvP{}{f}}}$}
\UnaryInfC{$\vdots\;(2)$}
\end{prooftree}
\end{minipage}}

\paragraph{Proving \async{do}.}

Proving the contract of \async{do}, which we show in full detail in
the appendix, requires to prove an asynchronous call. At the end of
symbolic execution, the $\mathit{schedule}$ function returns a
singleton set (the call to \async{closeF}), which must be taken care
of. It is the contract of this very rule that adds the information
about a $\closeEvP{}{f}$ event that is used to prove the final
post-trace.
    
\paragraph{Proving \async{operate}.}

Proving the correctness of \async{operate} is completely analogous to
proving correctness of \async{closeF}, except that a write event
instead of a close event is added.

\paragraph{Proving the Init Block.}

The init block is trivial to prove: It does not restrict its own
behavior, thus the only failure could stem from file operations or not
fulfilling the contract of called procedures.  There are no file
operations and the only called procedure has trivial pre- and
post-traces.


\section{Liskov Principle}\label{sec:liskov}

Having introduced context-aware contracts and a calculus to verify
them, we turn our attention towards a different approach to handle
contracts: behavioral subtyping in the form of a Liskov
principle~\cite{LiskovWing94}. In its original formulation, it states
that if some property is provable for elements of a class
$\mathtt{C}$, then it must be provable for all elements of any class
$\mathtt{D}$ that is a subclass of $\mathtt{C}$.

It can, however, also be understood on a contract-level~\cite{HAGH16}
by focusing on the property to be proven, not the classes.  This means
that instead of focusing on (well-behaving of) subclasses, we  focus
on (well-behaving of) their \emph{contracts}: For state contracts the
Liskov principle states that if a procedure $m$ in class $\mathtt{C}$
has a contract $\mathit{Contr}_m$, and a procedure $n$ overrides $m$
in some subclass $\mathtt{D}$ of $\mathtt{C}$, then $n$ must also
satisfy $\mathit{Contr}_m$. Moreover, we can weaken this condition to
allow for a stronger specification of $n$: the contract
$\mathit{Contr}_n$ of $n$ must be a \emph{subcontract} of
$\mathit{Contr}_m$.  We detail the notion of a \emph{subcontract}

Thus, the Liskov principle can be expressed as an order on contracts
over procedures of the same signature, it does not require a language
involving classes, objects or even subtyping.\footnote{This
    insight was used already in \cite{HS12} to formulate a Liskov
    principle for feature-oriented programming.} We only need to
formalize the notion of a sub-contract: For \emph{state} contracts
$\mathit{Contr}_1 = (\mathsf{pre}_1, \mathsf{post}_1)$,
$\mathit{Contr}_2 = (\mathsf{pre}_2, \mathsf{post}_2)$ this is
obvious: $\mathit{Contr}_2$ is a sub-contract of $\mathit{Contr}_1$ if
these two conditions hold:
\begin{enumerate}
\item Pre-Condition $\mathsf{pre}_2$ is weaker than $\mathsf{pre}_1$:
  $\mathsf{pre}_1 \rightarrow \mathsf{pre}_2$.
\item Post-Condition $\mathsf{post}_2$ is stronger than
  $\mathsf{post}_1$: $\mathsf{post}_2 \rightarrow \mathsf{post}_1$.
\end{enumerate}

Equivalently, we say $\mathit{Contr}_1$ is \emph{more general than}
$\mathit{Contr}_2$ and write
$\mathit{Contr}_1 \succeq \mathit{Contr}_2$.  The sub-contract may
define \emph{additional} pre-conditions and a stronger, more specific
\emph{post-condition}.

We reformulate the above definition for trace contracts.  A
sub-contract may weaken the pre-condition, which is the responsibility
of the \emph{caller}.  It may give the caller more possibilites, but
restricts its own post-condition, which it can control.  The principle
is that sub-contracts \emph{weaken} the part of the contract they
cannot control, and strenghthen the part of the contract they can
control.

For trace contracts, the central issue are the formulas of the traces
$q_a$ and $q_c$ that are at the border between the assume/continue
context and the internal behavior (see Def.~\ref{def:contract}).
The subtyping principle for context-aware contracts considers $q_a$ as
part of the pre-trace
because it describes a state the procedure
\emph{cannot control}.  In contrast, $q_c$ is considered part of the
post-trace, because the specified procedure \emph{can control} it.

For simplicity, we formulate context-aware behavioral subtyping for
contracts without observation variables, whose addition is
straightforward, but technically distracting.

\begin{definition}[Behavioral Subtyping for Context-Aware Contracts]
  We define the \emph{more general than} relation $\succeq$ between
  two context-aware contracts distinguished by superscripts $1$, $2$:
  
  \begin{align}
    & \assumeAndContinueShort{\theta'_{a^1} \!\concat\! \stateFml{q_{a^1}}}{\stateFml{q_{a^1}} \!\concat\! \, \theta_{s^1}' \!\concat\! \stateFml{q_{c^1}}}{ \stateFml{q_{c^1}} \!\concat\! \theta'_{c^1}} \notag\\
    & \qquad\qquad \succeq \quad
      \assumeAndContinueShort{\theta'_{a^2} \!\concat\! \stateFml{q_{a^2}}}{\stateFml{q_{a^2}} \!\concat\! \, \theta_{s^2}' \!\concat\! \stateFml{q_{c^2}}}{\stateFml{q_{c^2}} \!\concat\!  \theta'_{c^2}} \notag\\
    &\iff\notag\\
    &\eval{\theta'_{a^1} \concat \stateFml{q_{a^1}} }{} \subseteq \eval{\theta'_{a^2} \concat \stateFml{q_{a^2}}}{} \tag{L1}\label{l1}\\
    \,\wedge\,&\eval{\theta'_{s^1} \concat \stateFml{q_{c^1}} }{} \supseteq \eval{\theta'_{s^2} \concat \stateFml{q_{c^2}} }{} \tag{L2}\label{l2}\\
    \,\wedge\,&\eval{\theta'_{c^1}}{} \subseteq \eval{\theta'_{c^2}}{} \tag{L3}\label{l3}
  \end{align}

\end{definition}

When all traces $\theta_a'$, $\theta_c'$, and $\theta_s'$ are empty,
this definition boils down to the Liskov principle for state contracts
stated above.
The first condition (\ref{l1}, for pre-trace and -condition) and the
last condition (\ref{l3}, for the post-trace) are concerned with the
context
of a procedure, which it cannot control.  Hence, we permit
weakening here. The second condition (\ref{l2}, for the internal
behavior and post-condition of the procedure) is under its control.
Thus, we permit strengthening.

In our setting without inheritance, we can use the Liskov principle to
specify a procedure with a set of contracts, and use the above
subtyping principle to order them.  First, we introduce the idea of a
maximal contract.
\begin{definition}[Maximal Procedure Contract]
  Given a finite set $\mathcal{N}$ of procedure contracts for a procedure
  ordered by $\succeq$, let $Max(\mathcal{N})$ be the set of maximal
  elements in $\mathcal{N}$.
\end{definition}

We permit a procedure to be specified with multiple contracts, for
different situations according to different usages.  Using
\emph{maximal} procedure contracts, we need only a subset of them to
be proven.
The following rule is straightforward: Given an invocation event on
$m$, it computes all maximal (most general) contracts and applies all
of them.
%
%
\[
\seqRule{actOrder}{%
  \hspace*{-1cm} schedule(\update) = \mathcal{P} \\
  \hspace*{-1cm}
(m,i) \in Max(\mathcal{P})
\begin{cases}
\seq{\Gamma}{\judge{m}{\assumeAndContinue{\theta_{a_m}}{\theta_{s_m}}{\theta_{c_m}}}}\\
\seq{\Gamma}{\judge{\update}{(\Phi \wedge \theta_{a_m})}}\qquad
\eval{\theta_{s_m}}{} \subseteq \eval{\theta}{} \\
\seq{\Gamma, \,  \judge{\update\upP{\runEv(m,i,as)}}{(\Phi \wedge \theta_{a_m}) \chop \theta_{s_m}}}{\\ \qquad \GlobalJudge{\update \upP{\runEv(m,i,as)} }{(\Phi \wedge \theta_{a_m})\chop\,\theta_{s_m}\chop\,(\Psi \wedge \theta_{c_m})}}
\end{cases}}
{\seq{\Gamma }{\GlobalJudge{\update }}{\Phi\chop\,\theta\chop\,\Psi}}
\]

\begin{remark}
  We proposed behavioral subtyping as a technique to reduce the number
  of different call sequences of asynchronous procedures that need to
  be considered during verification. Techniques to combat
  combinatorial explosion of instruction sequences are standard in
  model checking~\cite{CGMP99}, but they are at the level of
  \emph{code} not at the level of specifications (i.e.,
  contracts). This holds even for a partial-order reduction technique
  that was adapted to asynchronous procedures~\cite{ABGIS19}.
\end{remark}


\section{Conclusion}\label{sec:conc}

We extended state contracts and trace contracts to \emph{context-aware
  trace contracts} (CATs). This permits to specify the behavioral
context in which a procedure is executed, i.e., not merely the static
pre-condition, but the actions and states reached before its execution
begins and also those after it ends. Such a specification of the
\emph{call context} as part of a procedure contract is essential to
specify the global behavior of concurrent programs is a succinct
manner. We instantiated the CAT methodology to a language with
asynchronous calls, where the context is of uttermost importance, and
gave a proof-of-concept using a file handling scenario.

To combat combinatorial explosion in verification proofs, we stated a
Liskov principle for CATs that has the potential to reduce the effort
dramatically.

We hope our work will enable new specification patterns to overcome
the long-standing specification challenge in deductive verification
for trace properties and concurrent programs.

\paragraph{Future Work.} 

In future work, we plan to investigate richer concurrency models, in
particular full Active Objects with suspension, futures and multiple
objects \cite{ActiveObjects17}.  One question that we did not
investigate here, is the relation of CATs to object and system
invariants.

Our observation quantifiers permit to connect programs and traces in a
reasonably abstract, yet fine-grained and flexible manner. However,
their proof theory is an open question: In rule \textsf{Contract} in
Sect.~\ref{sec:proof-rules} we approximated the semantics of
observation quantifiers in a fairly crude manner by
Skolemization. This makes it difficult to relate different
observations to each other and draw conclusions from them. The
axiomatization of observation quantifiers and their comparison to
existing logics with observation constructs should be studied in its
own right.

Obviously, the CAT framework must be implemented so that larger case
studies can be performed. This involves to complete the existing rule
set with rules for update simplification in local judgments.




\subsubsection{Acknowledgements}

This work was partially supported by the Research Council of Norway
via the \texttt{SIRIUS} Centre (237898) and the \texttt{PeTWIN}
project (294600), as well as the Hessian LOEWE initiative within the
Software-Factory 4.0 project.

We profited enormously from the detailed and constructive remarks of
the reviewers.

\bibliographystyle{abbrv}
\bibliography{ref,reiner}

\begin{thebibliography}{10}

\bibitem{DBLP:series/lncs/10001}
W.~Ahrendt, B.~Beckert, R.~Bubel, R.~H{\"{a}}hnle, P.~H. Schmitt, and
  M.~Ulbrich, editors.
\newblock {\em Deductive Software Verification - The KeY Book - From Theory to
  Practice}, volume 10001 of {\em Lecture Notes in Computer Science}.
\newblock Springer, 2016.

\bibitem{ABGIS19}
E.~Albert, M.~G. de~la Banda, M.~G{\'{o}}mez{-}Zamalloa, M.~Isabel, and P.~J.
  Stuckey.
\newblock Optimal context-sensitive dynamic partial order reduction with
  observers.
\newblock In D.~Zhang and A.~M{\o}ller, editors, {\em Proc.\ 28th {ACM}
  {SIGSOFT} Intl.\ Symp.\ on Software Testing and Analysis, {ISSTA}}, pages
  352--362. {ACM}, 2019.

\bibitem{DBLP:conf/oopsla/AldrichSSS09}
J.~Aldrich, J.~Sunshine, D.~Saini, and Z.~Sparks.
\newblock Typestate-oriented programming.
\newblock In {\em {OOPSLA} Companion}, pages 1015--1022. {ACM}, 2009.

\bibitem{BBBB12}
C.~Baumann, B.~Beckert, H.~Blasum, and T.~Bormer.
\newblock Lessons learned from microkernel verification -- specification is the
  new bottleneck.
\newblock In F.~Cassez, R.~Huuck, G.~Klein, and B.~Schlich, editors, {\em
  Proc.\ 7th Conference on Systems Software Verification}, volume 102 of {\em
  EPTCS}, pages 18--32, 2012.

\bibitem{bern}
B.~Beckert and D.~Bruns.
\newblock Dynamic logic with trace semantics.
\newblock In M.~P. Bonacina, editor, {\em Automated Deduction - {CADE} 2013.
  Proceedings}, volume 7898 of {\em Lecture Notes in Computer Science}, pages
  315--329. Springer, 2013.

\bibitem{BubelCHN15}
R.~Bubel, C.~C. Din, R.~H{\"a}hnle, and K.~Nakata.
\newblock A dynamic logic with traces and coinduction.
\newblock In H.~D. Nivelle, editor, {\em Intl.\ Conf.\ on Automated Reasoning
  with Analytic Tableaux and Related Methods, Wroclaw, Poland}, volume 9323 of
  {\em LNCS}, pages 303--318. Springer, 2015.

\bibitem{DBLP:journals/corr/abs-2211-09487}
R.~Bubel, D.~Gurov, R.~H{\"a}hnle, and M.~Scaletta.
\newblock Trace-based deductive verification.
\newblock In R.~Piskac and A.~Voronkov, editors, {\em Proc.\ 20th Intl.\ Conf.\
  on Logic for Programming, Artificial Intelligence and Reasoning (LPAR),
  Manizales Colombia}, EPiC Series in Computing. EasyChair, 2023.

\bibitem{CGMP99}
E.~M. Clarke, O.~Grumberg, M.~Minea, and D.~A. Peled.
\newblock State space reduction using partial order techniques.
\newblock {\em Int. J. Software Tools for Technology Transfer}, 2(3):279--287,
  1999.

\bibitem{ActiveObjects17}
F.~de~Boer, C.~C. Din, K.~Fernandez-Reyes, R.~H{\"a}hnle, L.~Henrio, E.~B.
  Johnsen, E.~Khamespanah, J.~Rochas, V.~Serbanescu, M.~Sirjani, and A.~M.
  Yang.
\newblock A survey of active object languages.
\newblock {\em ACM Computing Surveys}, 50(5):76:1--76:39, Oct. 2017.
\newblock Article 76.

\bibitem{TimSort17}
S.~De~Gouw, F.~S. De~Boer, R.~Bubel, R.~H{\"a}hnle, J.~Rot, and
  D.~Steinh{\"o}fel.
\newblock Verifying {OpenJDK's} sort method for generic collections.
\newblock {\em J. Automated Reasoning}, 62(1), 2019.

\bibitem{DBLP:conf/ecoop/DeLineF04}
R.~DeLine and M.~F{\"{a}}hndrich.
\newblock Typestates for objects.
\newblock In {\em {ECOOP}}, volume 3086 of {\em Lecture Notes in Computer
  Science}, pages 465--490. Springer, 2004.

\bibitem{DinBH15}
C.~C. Din, R.~Bubel, and R.~H{\"a}hnle.
\newblock {KeY-ABS}: A deductive verification tool for the concurrent modelling
  language {ABS}.
\newblock In A.~Felty and A.~Middeldorp, editors, {\em Proc.\ 25th Intl.\
  Conf.\ on Automated Deduction (CADE), Berlin, Germany}, volume 9195 of {\em
  LNCS}, pages 517--526. Springer, 2015.

\bibitem{DBLP:journals/corr/abs-2202-12195}
C.~C. Din, R.~H{\"{a}}hnle, L.~Henrio, E.~B. Johnsen, V.~K.~I. Pun, and
  S.~L.~T. Tarifa.
\newblock {LAGC} semantics of concurrent programming languages.
\newblock {\em CoRR}, abs/2202.12195, 2022.

\bibitem{DHJPT17}
C.~C. Din, R.~H{\"a}hnle, E.~B. Johnsen, V.~K.~I. Pun, and S.~L. Tapia~Tarifa.
\newblock Locally abstract, globally concrete semantics of concurrent
  programming languages.
\newblock In C.~Nalon and R.~Schmidt, editors, {\em Proc.\ 26th Intl.\ Conf.\
  on Automated Reasoning with Tableaux and Related Methods}, volume 10501 of
  {\em LNCS}, pages 22--43. Springer, Sept. 2017.

\bibitem{DinOwe15}
C.~C. Din and O.~Owe.
\newblock Compositional reasoning about active objects with shared futures.
\newblock {\em Formal Aspects of Computing}, 27(3):551--572, 2015.

\bibitem{Larch93}
J.~V. Guttag, J.~J. Horning, S.~J. Garland, K.~D. Jones, A.~Modet, and J.~M.
  Wing.
\newblock {\em {Larch}: Languages and Tools for Formal Specification}.
\newblock Springer-Verlag, New York, 1993.

\bibitem{HaehnleHuisman19}
R.~H{\"a}hnle and M.~Huisman.
\newblock Deductive verification: from pen-and-paper proofs to industrial
  tools.
\newblock In B.~Steffen and G.~Woeginger, editors, {\em Computing and Software
  Science: State of the Art and Perspectives}, volume 10000 of {\em LNCS},
  pages 345--373. Springer, Cham, Switzerland, 2019.

\bibitem{HS12}
R.~H{\"a}hnle and I.~Schaefer.
\newblock A {L}iskov principle for delta-oriented programming.
\newblock In T.~Margaria and B.~Steffen, editors, {\em Leveraging Applications
  of Formal Methods, Verification and Validation. Technologies for Mastering
  Change --- 5th International Symposium, ISoLA 2012, Heraklion, Crete,
  Greece}, volume 7609 of {\em LNCS}, pages 32--46. Springer, Oct. 2012.

\bibitem{HalpernShoham91}
J.~Y. Halpern and Y.~Shoham.
\newblock A propositional modal logic of time intervals.
\newblock {\em Journal of the ACM}, 38(4):935--962, 1991.

\bibitem{HarelKP80}
D.~Harel, D.~Kozen, and R.~Parikh.
\newblock Process logic: Expressiveness, decidability, completeness.
\newblock In {\em 21st Annual Symposium on Foundations of Computer Science,
  Syracuse, New York, USA, 13-15 October 1980}, pages 129--142. {IEEE} Computer
  Society, 1980.

\bibitem{Honda08}
K.~Honda, N.~Yoshida, and M.~Carbone.
\newblock Multiparty asynchronous session types.
\newblock In {\em Proceedings of the 35th {ACM} {SIGPLAN-SIGACT} Symposium on
  Principles of Programming Languages, {POPL} 2008}, pages 273--284, 2008.

\bibitem{HAGH16}
M.~Huisman, W.~Ahrendt, D.~Grahl, and M.~Hentschel.
\newblock Formal specification with the {Java Modeling Language}.
\newblock In W.~Ahrendt, B.~Beckert, R.~Bubel, R.~H\"{a}hnle, P.~Schmitt, and
  M.~Ulbrich, editors, {\em Deductive Software Verification---The {KeY} Book:
  From Theory to Practice}, volume 10001 of {\em LNCS}, chapter~7, pages
  193--241. Springer, 2016.

\bibitem{ABSFMCO10}
E.~B. Johnsen, R.~H{\"a}hnle, J.~Sch{\"a}fer, R.~Schlatte, and M.~Steffen.
\newblock {ABS}: A core language for abstract behavioral specification.
\newblock In B.~K. Aichernig, F.~de~Boer, and M.~M. Bonsangue, editors, {\em
  Proc.\ 9th International Symposium on Formal Methods for Components and
  Objects ({FMCO} 2010)}, volume 6957 of {\em Lecture Notes in Computer
  Science}, pages 142--164. Springer-Verlag, 2011.

\bibitem{Jones81}
C.~B. Jones.
\newblock {\em Developing methods for computer programs including a notion of
  interference}.
\newblock PhD thesis, University of Oxford, {UK}, 1981.

\bibitem{Jones95b}
C.~B. Jones.
\newblock Granularity and the development of concurrent programs.
\newblock In S.~D. Brookes, M.~G. Main, A.~Melton, and M.~W. Mislove, editors,
  {\em 11th Annual Conf.\ on Mathematical Foundations of Programming Semantics,
  {MFPS}, New Orleans, LA, USA}, volume~1 of {\em ENTCS}, pages 302--306.
  Elsevier, 1995.

\bibitem{DBLP:conf/tableaux/Kamburjan19}
E.~Kamburjan.
\newblock Behavioral program logic.
\newblock In {\em {TABLEAUX}}, volume 11714 of {\em Lecture Notes in Computer
  Science}, pages 391--408. Springer, 2019.

\bibitem{DBLP:phd/dnb/Kamburjan20}
E.~Kamburjan.
\newblock {\em Modular Verification of a Modular Specification: Behavioral
  Types as Program Logics}.
\newblock PhD thesis, Darmstadt University of Technology, Germany, 2020.

\bibitem{DBLP:conf/ifm/KamburjanC18}
E.~Kamburjan and T.~Chen.
\newblock Stateful behavioral types for active objects.
\newblock In {\em {IFM}}, volume 11023 of {\em Lecture Notes in Computer
  Science}, pages 214--235. Springer, 2018.

\bibitem{DBLP:conf/icfem/KamburjanDC16}
E.~Kamburjan, C.~C. Din, and T.~Chen.
\newblock Session-based compositional analysis for actor-based languages using
  futures.
\newblock In {\em {ICFEM}}, volume 10009 of {\em Lecture Notes in Computer
  Science}, pages 296--312, 2016.

\bibitem{DBLP:series/lncs/KamburjanDHJ20}
E.~Kamburjan, C.~C. Din, R.~H{\"{a}}hnle, and E.~B. Johnsen.
\newblock Behavioral contracts for cooperative scheduling.
\newblock In {\em 20 Years of KeY}, volume 12345 of {\em LNCS}. Springer, 2020.

\bibitem{crowbar}
E.~Kamburjan, M.~Scaletta, and N.~Rollshausen.
\newblock \textit{Deductive Verification of Active Objects with Crowbar}.
\newblock {\em Sci. Comput. Program.}, 226, 2023.

\bibitem{Kassios11}
I.~T. Kassios.
\newblock The dynamic frames theory.
\newblock {\em Formal of Aspects Computing}, 23(3):267--288, 2011.

\bibitem{JML-Ref-Manual}
G.~T. Leavens, E.~Poll, C.~Clifton, Y.~Cheon, C.~Ruby, D.~Cok, P.~M\"{u}ller,
  J.~Kiniry, P.~Chalin, D.~M. Zimmerman, and W.~Dietl.
\newblock {\em JML Reference Manual}, May 2013.
\newblock Draft revision 2344.

\bibitem{LiskovWing94}
B.~Liskov and J.~M. Wing.
\newblock A behavioral notion of subtyping.
\newblock {\em {ACM} Trans. Program. Lang. Syst.}, 16(6):1811--1841, 1994.

\bibitem{Meyer92}
B.~Meyer.
\newblock Applying ``design by contract''.
\newblock {\em IEEE Computer}, 25(10):40--51, Oct. 1992.

\bibitem{DBLP:journals/corr/abs-2209-05136}
J.~Mota, M.~Giunti, and A.~Ravara.
\newblock On using verifast, vercors, plural, and key to check object usage.
\newblock {\em CoRR}, abs/2209.05136, 2022.

\bibitem{MSS16}
P.~M{\"{u}}ller, M.~Schwerhoff, and A.~J. Summers.
\newblock {Viper:} {A} verification infrastructure for permission-based
  reasoning.
\newblock In B.~Jobstmann and K.~R.~M. Leino, editors, {\em Verification, Model
  Checking, and Abstract Interpretation, 17th Intl.\ Conf., {VMCAI}, St.
  Petersburg, FL, USA}, volume 9583 of {\em LNCS}, pages 41--62. Springer,
  2016.

\bibitem{NakataUustalu15}
K.~Nakata and T.~Uustalu.
\newblock A {H}oare logic for the coinductive trace-based big-step semantics of
  {While}.
\newblock {\em Logical Methods in Computer Science}, 11(1):1--32, 2015.

\bibitem{DBLP:conf/concur/OHearn04}
P.~W. O'Hearn.
\newblock Resources, concurrency and local reasoning.
\newblock In {\em {CONCUR}}, volume 3170 of {\em Lecture Notes in Computer
  Science}, pages 49--67. Springer, 2004.

\bibitem{Pnueli77}
A.~Pnueli.
\newblock The temporal logic of programs.
\newblock In {\em 18th Annual Symposium on Foundations of Computer Science,
  Providence, Rhode Island, USA}, pages 46--57. {IEEE} Computer Society, 1977.

\bibitem{DBLP:conf/lics/Reynolds02}
J.~C. Reynolds.
\newblock Separation logic: {A} logic for shared mutable data structures.
\newblock In {\em {LICS}}, pages 55--74. {IEEE} Computer Society, 2002.

\bibitem{Wolper83}
P.~Wolper.
\newblock Temporal logic can be more expressive.
\newblock {\em Information and Control}, 56:72--99, 1983.

\end{thebibliography}

\appendix

\section{Appendix}


\subsection{Proof Theorem~\ref{thm:sound}}
To make the proofs more readable we overload some notations in the semantics 
of programs and updates as follows:
\begin{align*}
  \globEval{s}{T} = \bigcup_{\tau\in T}\globEval{s}{\tau} \qquad 
  \locEval{s}{T} = \bigcup_{\tau\in T}\locEval{s}{\tau} \qquad
  \eval{\update}{T} = \bigcup_{\tau\in T}\eval{\update}{\tau}
\end{align*}
We remind that the chop operator is also overloaded in Def.~\ref{def:judgement-global-update}.
We require also two simple propositions.

\begin{proposition}
  \label{prop:skolem-to-obs}
  Let $\Phi$ be a trace formula, $\many{y}$ be logical 
  variables and $\many{c}$ be constants fresh in $\Phi$.
$$
\eval{\Phi[\overline{y} \backslash \overline{c}]}{} 
\subseteq
\eval{\mho_{\overline{x},\overline{y}}.\Phi}{}
$$  
\end{proposition}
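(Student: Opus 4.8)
The statement links a Skolemised formula $\Phi[\many{y}\backslash\many{c}]$ to the observation-quantified formula $\mho_{\many{x},\many{y}}.\Phi$, so the proof has two moving parts: a substitution lemma that turns the syntactic replacement $[\many{y}\backslash\many{c}]$ into a change of the observation environment, and an unfolding of the semantics of the (possibly nested) observation quantifiers in $\mho_{\many{x},\many{y}}$. The plan is to first establish the substitution lemma: for every formula $\Phi$ of the logic without free recursion variables --- which, by the syntactic restriction forbidding recursion variables in the scope of $\mho$, is automatically the case for the body of $\mho_{\many{x},\many{y}}.\Phi$ --- every $\rho$, and every observation environment $o'$ with $\beta(o')(y_i)=\interp(c_i)$ for all $i$ that otherwise agrees with $o$, one has $\evalPhiNoArg{\Phi[\many{y}\backslash\many{c}]}_{\rho,o}=\evalPhiNoArg{\Phi}_{\rho,o'}$. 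This is a routine induction on $\Phi$: the base case $\stateFml{P}$ is the first-order substitution lemma (note that $\evalPhiNoArg{\cdot}_{\rho,o}$ depends on $o$ only through $\beta(o)$); the event cases use the analogous fact for event parameters; $\wedge$, $\vee$, $\concat$ and $\chop$ distribute over the connectives and invoke the induction hypothesis; and a nested $\mho\,x~\mathbf{as}~z.\Psi$ (with $z$ not among $\many{y}$) pushes the substitution inside the quantifier and applies the hypothesis under $o[z\mapsto(x,\sigma)]$.

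Next I would record two elementary facts about the semantics of Fig.~\ref{fig:semantics-formulas}: every trace in the denotation of a formula is non-empty and begins with a state (this invariant is preserved by all constructs that can occur here, and $\mu$ does not occur), and for such a trace $\tau$ with leading state $\sigma$ one has $\langle\sigma\rangle\chopSem\tau=\tau$, while $\langle\sigma'\rangle\chopSem\tau$ is undefined for $\sigma'\neq\sigma$. Peeling the quantifiers of $\mho_{\many{x},\many{y}}$ off one at a time therefore shows $\tau\in\evalPhiNoArg{\mho_{\many{x},\many{y}}.\Phi}$ iff the leading state $\sigma$ of $\tau$ satisfies $\tau\in\evalPhiNoArg{\Phi}_{\rho,\,o[y_1\mapsto(x_1,\sigma)]\cdots[y_n\mapsto(x_n,\sigma)]}$ --- each quantifier re-observes the same $\sigma$, because the chop keeps it at the front. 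To finish, take $\tau\in\evalPhiNoArg{\Phi[\many{y}\backslash\many{c}]}$ with leading state $\sigma$; since the $\many{c}$ are fresh it suffices to treat the interpretation with $\interp(c_i)=\sigma(x_i)$, for which the environment $o_\sigma=o[y_1\mapsto(x_1,\sigma)]\cdots[y_n\mapsto(x_n,\sigma)]$ meets the hypothesis of the substitution lemma, so $\tau\in\evalPhiNoArg{\Phi}_{\rho,o_\sigma}$, and by the unfolding above this is exactly $\tau\in\evalPhiNoArg{\mho_{\many{x},\many{y}}.\Phi}$ (which does not depend on $\rho,o$). Combining with monotonicity of $\concat$ then yields the form actually used in rule \ruleName{Contract}, namely $\evalPhiNoArg{(\theta'_{a_m}\concat\stateFml{q_{a_m}})[\many{y}\backslash\many{c}]}\subseteq\evalPhiNoArg{\theta_{a_m}}$ (cf.\ Def.~\ref{def:contract}).

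The main obstacle is the treatment of the fresh constants: an observation quantifier is forced to observe precisely the leading state of the trace, so the inclusion goes through only because the $\many{c}$ are fresh and their interpretation can be aligned with that state. Making this rigorous requires being explicit about the quantification over interpretations of $\many{c}$ that is implicit in the use of Skolem constants in the calculus --- a judgment proved over fresh $\many{c}$ is valid for every interpretation of $\many{c}$, hence in particular for the aligning one --- and about discarding, as trivial, any degenerate traces that fail to begin with a state. The substitution lemma itself is unexciting, but its event base cases and its interaction with nested observation quantifiers should be spelled out carefully.
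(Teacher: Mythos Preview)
Your approach is correct and matches the paper's: both rely on a substitution lemma (left implicit in the paper) together with aligning the interpretation of the fresh constants $\many{c}$ with the values $\sigma(x_i)$ at the leading state, so that the modified observation environment coincides with the one produced by $\mho_{\many{x},\many{y}}$. Your treatment is considerably more careful than the paper's three-line calculation; in particular, you make explicit the interpretation-alignment step (and the implicit universal quantification over fresh constants) that the paper buries in its first displayed equality.
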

\begin{proof}
  \begin{align*}
    \eval{\Phi[\overline{y} \backslash \overline{c}]}{\rho,o} &=
    \{\singleton{\sigma [\many{x} \mapsto \many{c}]} \chopSem \tau ~|~
   \tau \in \evalPhiNoArg{\Phi}_{\rho, o[\many{y} \mapsto (\many{x},\sigma[\many{x} \mapsto \many{c}])]} \wedge \sigma \in \Sigma\}\\
   &\subseteq\{\singleton{\sigma} \chopSem \tau ~|~
  \tau \in \evalPhiNoArg{\Phi}_{\rho, o[\many{y} \mapsto (\many{x},\sigma)]} \wedge \sigma \in \Sigma\}
   = \eval{\mho_{\overline{x},\overline{y}}.\Phi}{\rho,o}
  \end{align*}
\end{proof}

\begin{proposition}
  \label{prop:skolem-trace-op}
  Let $\Phi_1,\Phi_2$ be trace formulas, $\many{y}$ be logical 
  variables and $\many{c}$ be constants fresh in $\Phi_1$ and $\Phi_2$.
  If $\many{y}$ do not occur in $\Phi_1$ then 
  \begin{align*}
    (\Phi_1 \chop \Phi_2)[\overline{y} \backslash \overline{c}]
    &= \Phi_1 \chop (\Phi_2[\overline{y} \backslash \overline{c}])
    \\
    (\Phi_1 \concat \Phi_2)[\overline{y} \backslash \overline{c}]
    &= \Phi_1 \concat (\Phi_2[\overline{y} \backslash \overline{c}])
  \end{align*}
\end{proposition}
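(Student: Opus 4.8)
The plan is to unfold the definition of the skolemizing substitution $[\many{y}\backslash\many{c}]$, which replaces the free logical variables $\many{y}$ by the fresh constants $\many{c}$ and is defined by structural recursion on trace formulas. For the two binary connectives in question, $\concat$ and $\chop$, the defining clauses are purely congruential, since these constructors bind no variables; hence, directly by definition, $(\Phi_1 \concat \Phi_2)[\many{y}\backslash\many{c}] = (\Phi_1[\many{y}\backslash\many{c}]) \concat (\Phi_2[\many{y}\backslash\many{c}])$ and likewise for $\chop$. Both claimed equalities therefore reduce to the single assertion $\Phi_1[\many{y}\backslash\many{c}] = \Phi_1$.

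Second, I would establish that assertion as an instance of the standard fact that a substitution acts as the identity on a formula none of whose free variables it affects. The hypothesis that $\many{y}$ do not occur in $\Phi_1$ gives $\mathsf{fv}(\Phi_1) \cap \many{y} = \emptyset$, and a routine induction on the structure of $\Phi_1$ then yields $\Phi_1[\many{y}\backslash\many{c}] = \Phi_1$. The only mildly delicate case is the observation quantifier $\mho\, x~\mathbf{as}~y'.\Psi$: here the freshness of the constants $\many{c}$ guarantees that no variable capture can occur when pushing the substitution inside, and the bound variable $y'$ is in any case distinct from the $\many{y}$ being eliminated, so the inductive hypothesis applies to $\Psi$ and the subformula is left unchanged.

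Combining the two observations gives $(\Phi_1 \concat \Phi_2)[\many{y}\backslash\many{c}] = \Phi_1 \concat (\Phi_2[\many{y}\backslash\many{c}])$ and $(\Phi_1 \chop \Phi_2)[\many{y}\backslash\many{c}] = \Phi_1 \chop (\Phi_2[\many{y}\backslash\many{c}])$, as required. I expect no genuine obstacle: this is a bookkeeping lemma about syntactic substitution. The one point worth stating explicitly is that $\many{c}$ are \emph{constants}, not variables, so no renaming of bound observation variables is ever needed when moving the substitution under $\mho$; this is precisely what makes the congruence clauses for $\concat$ and $\chop$ hold unconditionally and the identity step a plain structural induction.
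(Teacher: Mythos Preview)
Your proposal is correct and is precisely a spelled-out version of the paper's own proof, which consists of the single line ``Follows directly from the definitions.'' You have simply made explicit the two routine observations---congruence of the substitution over $\concat$ and $\chop$, and the identity of the substitution on $\Phi_1$ since none of $\many{y}$ occur there---that the paper leaves implicit.
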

\begin{proof}
Follows directly from the definitions.
\end{proof}


The following proposition shows that weak adherence is equivalent to a condition on internal trace and pre-condition
\begin{proposition}\label{prop:connect}
Let 
$$\theta_\mathit{pre} = \left(\theta'_{a_m} \concat \stateFml{q_{a_m}}\right)\left[\many{y_1} \setminus \many{c^y_1}\right]\enspace ,$$
$$\theta_\mathit{post} = \left(\theta'_{a_m} \concat \stateFml{q_{a_m}}\concat \theta_{s_m}' \concat \stateFml{q_{c_m}}\right)\left[\many{y_1} \setminus \many{c^y_1}\right]\left[\many{y_2} \setminus \many{c^y_2}\right]$$
We have
\begin{align*}
&\forall \tau \in T \chopSem \globEval{\xasync{m()}}{T}.~
\tau, oId \models \widehat{C_\mathtt{m}}\\
\iff
&\tau = \tau'\chopSem \tau''\chopSem \tau''' \wedge\\
& \tau' \in \evalPhiNoArg{\theta_\mathit{pre}\concat\mathsf{start}(m,oId)}_{\emptyset,\emptyset} \wedge \\
&\tau'' \in 
\globEval{\mathsf{inline}(m)}{\evalPhiNoArg{\mathsf{start}(m,oId)}_{\emptyset,\emptyset}} \wedge\\
&\globEval{\mathsf{inline}(m)}{\tau'\chopSem\evalPhiNoArg{\mathsf{start}(m,oId)}_{\emptyset,\emptyset}} \subseteq 
\evalPhiNoArg{\theta_\mathit{post}}
\end{align*}
\end{proposition}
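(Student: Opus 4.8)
The plan is to unfold both sides of the claimed equivalence and reduce it to a statement about the canonical shape of $\globEval{\xasync{m()}}{T}$, combined with the two Skolemisation lemmas, Prop.~\ref{prop:skolem-to-obs} and Prop.~\ref{prop:skolem-trace-op}. The last line of the right-hand side closely mirrors the global judgment occurring as the premise of the \textsc{Contract} rule (with the fresh prefix $\tau'$ in the role of $\mathcal{V}\{\startEv(m,\mathit{oId})\}$), and the left-hand side is weak adherence, so this proposition is precisely the bridge needed for the soundness proof of that rule.

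First, I would establish the decomposition. Since $\xasync{m()}$ is a \emph{synchronous} call, every trace in $T\chopSem\globEval{\xasync{m()}}{T}$ factors canonically: starting from any $\tau_0\in T$, rule \textsc{Progress} evaluates $\xasync{m()}$ to a $\callEv$ event, rule \textsc{Call} appends the matching $\pushEv$ for scope $(m,\mathit{oId})$ and prepends $\mathsf{inline}(m)$ to the (empty) continuation, and the remaining steps form precisely a global execution of $\mathsf{inline}(m)$, which by Def.~\ref{def:glob-sem} runs all asynchronously called children and, since \textsc{Return} fires exactly when $\mathit{schedule}=\emptyset$, terminates with a $\popEv$ for $(m,\mathit{oId})$. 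As $\callEvP{\sigma}{m,\mathit{oId}}\chopSem\pushEvP{\sigma}{(m,\mathit{oId})}\in\evalPhiNoArg{\mathsf{start}(m,\mathit{oId})}$ and $\popEvP{\sigma}{(m,\mathit{oId})}\in\evalPhiNoArg{\mathsf{pop}(\mathit{oId})}$, this yields $\tau=\tau'\chopSem\tau''\chopSem\tau'''$ with $\tau'\in T\chopSem\evalPhiNoArg{\mathsf{start}(m,\mathit{oId})}$ (so $\tau'\in\evalPhiNoArg{\theta_\mathit{pre}\concat\mathsf{start}(m,\mathit{oId})}$ whenever the traces of $T$ respect the contract assumption), $\tau''\in\globEval{\mathsf{inline}(m)}{\tau'}$, and $\tau'''$ the single trailing state; Prop.~\ref{prop:semantics-composition} is what makes the ``$\chopSem$ composes with global execution'' step precise.

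Second, I would rewrite the adherence condition $\tau,\mathit{oId}\models\widehat{C_m}$. Because $\widehat{C_m}$ replaces $\theta'_{c_m}$ by $\noEvent{}$, unfolding the semantics of $\concat$ and of the observation quantifiers in Def.~\ref{def:tadhere} shows that $\tau,\mathit{oId}\models\widehat{C_m}$ holds iff $\tau$ splits at the $\mathsf{start}(m,\mathit{oId})$ event into a prefix in $\evalPhiNoArg{\theta'_{a_m}\concat\mho_{\many{x_1,y_1}}.\stateFml{q_{a_m}}\concat\mathsf{start}(m,\mathit{oId})}$ and a suffix in $\evalPhiNoArg{\stateFml{q_{a_m}}\concat\theta'_{s_m}\concat\mho_{\many{x_2,y_2}}.(\stateFml{q_{c_m}}\concat\mathsf{pop}(\mathit{oId})\concat\stateFml{q_{c_m}}\concat\noEvent{})}$ evaluated under the observation environment fixed by the $\mho_{\many{x_1,y_1}}$-bound state, with the trailing $\noEvent{}$ absorbed by $\tau'''$. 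I would then connect this to $\theta_\mathit{pre}$ and $\theta_\mathit{post}$: the $\mathsf{fv}$-restrictions of Def.~\ref{def:contract} guarantee that $\many{y_1}$ does not occur in $\theta'_{a_m}$ and $\many{y_2}$ does not occur in $\theta'_{a_m},\stateFml{q_{a_m}},\theta'_{s_m}$, so by Prop.~\ref{prop:skolem-trace-op} the substitutions $[\many{y_i}\setminus\many{c^y_i}]$ distribute through the concatenations defining $\theta_\mathit{pre},\theta_\mathit{post}$, and by Prop.~\ref{prop:skolem-to-obs} we have $\evalPhiNoArg{\Phi[\many{y}\setminus\many{c}]}\subseteq\evalPhiNoArg{\mho_{\many{x},\many{y}}.\Phi}$. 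The ``$\Leftarrow$'' direction is then immediate: if $\globEval{\mathsf{inline}(m)}{\tau'\chopSem\evalPhiNoArg{\mathsf{start}(m,\mathit{oId})}}\subseteq\evalPhiNoArg{\theta_\mathit{post}}$, the body execution lands inside the $\mho$-quantified specification, and together with the shape from step one this gives $\tau,\mathit{oId}\models\widehat{C_m}$. For ``$\Rightarrow$'' I would exploit freshness of $\many{c^y_1},\many{c^y_2}$: since the observation quantifier ranges over \emph{all} states and the logic has no equality over observation variables, every valuation of the Skolem constants is realised by some state observed by the quantifier, so adherence of every trace forces the Skolemised containment.

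The step I expect to be the main obstacle is precisely this two-way Skolemisation, because Prop.~\ref{prop:skolem-to-obs} is only an \emph{inclusion}: proving the reverse inclusion needed for ``$\Rightarrow$'' requires carefully using freshness of the $\many{c^y}$ together with the stipulated absence of equality on observation variables, so that no trace can distinguish a Skolem constant from the observed value it stands for. A secondary, more routine obstacle is the event bookkeeping — checking that the $\callEvP{\sigma}{m,\mathit{oId}}\chopSem\pushEvP{\sigma}{(m,\mathit{oId})}$ emitted by \textsc{Call} is exactly an element of $\evalPhiNoArg{\mathsf{start}(m,\mathit{oId})}$, that the $\popEvP{\sigma}{(m,\mathit{oId})}$ emitted by \textsc{Return} is exactly an element of $\evalPhiNoArg{\mathsf{pop}(\mathit{oId})}$, and that the trailing $\noEvent{}$ of $\widehat{C_m}$ is matched by the single state produced after the pop — which is where the tree-like concurrency model (no pending invocations in a closed scope, so \textsc{Return} is the only rule applicable there) is implicitly used.
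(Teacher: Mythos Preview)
Your approach is essentially the same as the paper's: unfold Def.~\ref{def:tadhere} for $\widehat{C_m}$, split the resulting trace formula at the $\mathsf{start}(m,\mathit{oId})$ and $\mathsf{pop}(\mathit{oId})$ markers into prefix/body/suffix, then use Prop.~\ref{prop:skolem-trace-op} and Prop.~\ref{prop:skolem-to-obs} to pass from the observation-quantified forms to the Skolemised $\theta_\mathit{pre}$, $\theta_\mathit{post}$, and finally identify the body part with $\globEval{\mathsf{inline}(m)}{-}$ via the canonical shape of $\globEval{\xasync{m()}}{T}$.

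The one noteworthy difference is that you are more scrupulous about the two directions than the paper is. The paper's proof in fact only establishes the ``$\Leftarrow$'' direction: the trace-splitting step and the final collapsing step are both marked $\Leftarrow$, and the intermediate step that invokes Prop.~\ref{prop:skolem-to-obs} is written as $\iff$ even though that proposition is only an inclusion. Your diagnosis that the reverse Skolemisation is the genuine obstacle is therefore exactly right; the paper does not discharge it, and for the intended application (soundness of rule \textsf{Contract}) only ``$\Leftarrow$'' is actually needed. So your plan covers everything the paper does and correctly flags the point where the stated $\iff$ would require additional argument.
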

\begin{proof}
We directly unroll the definition and assume a fixed identifier $oId$, which is the interpretation of the corresponding skolem symbol.
\begin{align*}
&\forall \tau \in T \chopSem \globEval{\xasync{m()}}{T}.~
\tau, oId \models \widehat{C_\mathtt{m}}\\
\iff&
    \tau \in \evalPhiNoArg{
    \theta'_{a_m} \concat \mho_{\many{x_1,y_1}}.\bigl(\stateFml{q_{a_m}} \concat \mathsf{start}(m,oId) \concat
    \stateFml{q_{a_m}}\concat \theta'_{s_m} \concat
     \mho_{\many{x_2,y_2}}.\left(\stateFml{q_{c_m}} \concat \mathsf{pop}(oId) \concat
    \stateFml{q_{c_m}}\concat \noEvent{} \right)\bigr)
}_{\emptyset,\emptyset}
\end{align*}

Next, we can split the trace formula into the part of the prefix, the procedure execution and the suffix. Note that the suffix is trivial.
\begin{align*}
&
    \tau \in \evalPhiNoArg{
    \theta'_{a_m} \concat \mho_{\many{x_1,y_1}}.\bigl(\stateFml{q_{a_m}} \concat \mathsf{start}(m,oId) \concat
    \stateFml{q_{a_m}}\concat \theta'_{s_m} \concat
     \mho_{\many{x_2,y_2}}.\left(\stateFml{q_{c_m}} \concat \mathsf{pop}(oId) \concat
    \stateFml{q_{c_m}}\concat \noEvent{} \right)\bigr)
}_{\emptyset,\emptyset}\\
\Leftarrow&
\tau = \tau'\chopSem \tau''\chopSem \tau''' \wedge\\
&\tau' \in 
    \evalPhiNoArg{\theta'_{a_m} \concat \mho_{\many{x_1,y_1}}.\stateFml{q_{a_m}} \concat \mathsf{start}(m,oId)}_{\emptyset,\emptyset}\wedge \\
&\tau'' \in 
\evalPhiNoArg{
    \mho_{\many{x_1,y_1}}.\stateFml{q_{a_m}}\concat \theta'_{s_m} \concat
     \mho_{\many{x_2,y_2}}.\left(\stateFml{q_{c_m}} \concat \mathsf{pop}(oId) \concat
    \stateFml{q_{c_m}}\right)}_{\emptyset,\emptyset}\\
    & \tau''' \in \evalPhiNoArg{\noEvent{}}_{\emptyset,\emptyset}
\end{align*}

We can now reduce, using Prop.~\ref{prop:skolem-trace-op} and Prop.~\ref{prop:skolem-to-obs}, to the following:
\begin{align*}
&\tau = \tau'\chopSem \tau''\chopSem \tau''' \wedge\\
&\tau' \in 
    \evalPhiNoArg{\theta'_{a_m} \concat \mho_{\many{x_1,y_1}}.\stateFml{q_{a_m}} \concat \mathsf{start}(m,oId)}_{\emptyset,\emptyset}\wedge \\
&\tau'' \in 
\evalPhiNoArg{
    \mathsf{start}(m,oId)\concat \mho_{\many{x_1,y_1}}.\stateFml{q_{a_m}}\concat \theta'_{s_m} \concat
     \mho_{\many{x_2,y_2}}.\left(\stateFml{q_{c_m}} \concat \mathsf{pop}(oId) \concat
    \stateFml{q_{c_m}}\right)}_{\emptyset,\emptyset}\\
    & \tau''' \in \evalPhiNoArg{\noEvent{}}_{\emptyset,\emptyset}\\
\iff& 
\tau = \tau'\chopSem \tau''\chopSem \tau''' \wedge\\
& \tau' \in \evalPhiNoArg{\theta_\mathit{pre}\concat\mathsf{start}(m,oId)}_{\emptyset,\emptyset} \wedge \\
&\tau'' \in 
\evalPhiNoArg{\mathsf{start}(m,oId)\concat
\theta'_\mathit{s_m}
\concat \mathsf{pop}(oId) \concat
    \stateFml{q_{c_m}}}_{\emptyset,\emptyset}\\
    & \tau''' \in \evalPhiNoArg{\noEvent{}}_{\emptyset,\emptyset}
\end{align*}

Finally, we observe that we can collapse the condition on the internal trace and remove the suffix.
\begin{align*}
&\tau = \tau'\chopSem \tau''\chopSem \tau''' \wedge\\
& \tau' \in \evalPhiNoArg{\theta_\mathit{pre}\concat\mathsf{start}(m,oId)}_{\emptyset,\emptyset} \wedge \\
&\tau'' \in 
\evalPhiNoArg{
\mathsf{start}(m,oId)\concat\theta'_\mathit{s_m}
\concat \mathsf{pop}(oId) \concat
    \stateFml{q_{c_m}}}_{\emptyset,\emptyset}\\
    & \tau''' \in \evalPhiNoArg{\noEvent{}}_{\emptyset,\emptyset}\\
    \Leftarrow
&\tau = \tau'\chopSem \tau''\chopSem \tau''' \wedge\\
& \tau' \in \evalPhiNoArg{\theta_\mathit{pre}\concat\mathsf{start}(m,oId)}_{\emptyset,\emptyset} \wedge \\
&\tau'' \in 
\globEval{\mathsf{inline}(m)}{\evalPhiNoArg{\mathsf{start}(m,oId)}_{\emptyset,\emptyset}} \wedge\\
&\globEval{\mathsf{inline}(m)}{\tau'\chopSem\evalPhiNoArg{\mathsf{start}(m,oId)}_{\emptyset,\emptyset}} \subseteq 
\evalPhiNoArg{\theta_\mathit{post}}
\end{align*}
\end{proof}

We now turn to the rules themselves.

\paragraph{Rule $\mathsf{Contract}$}
We prove the soundness of the rule.
Let $P$ be a program.
Let $\sigma$ be a state and $\mathcal{V},\overline{c_1^y},\overline{c_2^y}$, and $i$ 
fresh \emph{havoc} update, constants and call identifier, respectively. 
Let $$C_m= \assumeAndContinue{\theta'_{a_m} \concat \mho_{\many{x_1,y_1}}.\stateFml{q_{a_m}}}{\stateFml{q_{a_m}} \concat \, \theta_{s_m}' \concat \mho_{\many{x_2,y_2}}.\stateFml{q_{c_m}}}{ \stateFml{q_{c_m}}\concat \, \theta'_{c_m}}$$
the contract of procedure $m$.
Also, let $C$ be all the contract judgments for $P$.
With 
$$\theta_\mathit{pre} = \left(\theta'_{a_m} \concat \stateFml{q_{a_m}}\right)\left[\many{y_1} \setminus \many{c^y_1}\right]\enspace ,$$
$$\theta_\mathit{post} = \left(\theta'_{a_m} \concat \stateFml{q_{a_m}}\concat \theta_{s_m}' \concat \stateFml{q_{c_m}}\right)\left[\many{y_1} \setminus \many{c^y_1}\right]\left[\many{y_2} \setminus \many{c^y_2}\right]$$
and assuming that 
$\theta'_{a_m}$ does not contain $\many{y_1}$
and $\theta'_{s_m}$ does not contain $\many{y_2}$
we have:

$$\begin{array}{lll}
  &\evalPhiNoArg{\theta_\mathit{pre}} &= \evalPhiNoArgBig{\left(\theta'_{a_m} \concat 
  \stateFml{q_{a_m}}\right)\left[\many{y_1} \setminus \many{c^y_1}\right]}{}\\
  \mbox{\footnotesize\{Prop.~\ref{prop:skolem-trace-op}\}}
  &&=\evalPhiNoArgBig{\theta'_{a_m} \concat 
  \stateFml{q_{a_m}}\left[\many{y_1} \setminus \many{c^y_1}\right]}\\
  \mbox{\footnotesize\{Prop.~\ref{prop:skolem-to-obs}\}}
  &&\subseteq\evalPhiNoArg{\theta'_{a_m} \concat \mho_{\many{x_1},\many{y_1}}\stateFml{q_{a_m}}}
\end{array}$$

$$\begin{array}{lll}
  &\evalPhiNoArg{\theta_\mathit{post}} &= \evalPhiNoArgBig{\left(\theta'_{a_m} \concat \stateFml{q_{a_m}}\concat \theta_{s_m}' 
  \concat \stateFml{q_{c_m}}\right)\left[\many{y_1} \setminus \many{c^y_1}\right]\left[\many{y_2} \setminus \many{c^y_2}\right]}\\
  \mbox{\footnotesize\{Prop.~\ref{prop:skolem-trace-op}\}}
  &&=\evalPhiNoArgBig{\theta'_{a_m} \concat 
  \left(\stateFml{q_{a_m}} 
  \concat \theta_{s_m}' \concat \stateFml{q_{c_m}}\left[\many{y_2} \setminus \many{c^y_2}\right]\right)
  \left[\many{y_1} \setminus \many{c^y_1}\right]}\\
  \mbox{\footnotesize\{Prop.~\ref{prop:skolem-to-obs}\}}&&
  \subseteq \evalPhiNoArgBig{\theta'_{a_m} \concat \mho_{\many{x_1},\many{y_1}}
  (\stateFml{q_{a_m}}\concat \theta_{s_m}'\concat \mho_{\many{x_2},\many{y_2}}\stateFml{q_{c_m}})}
\end{array}$$

Therefore, with $T = \eval{\mathcal{V}}{\singleton{\sigma}}$:
$$\begin{array}{lll}
&&
(\models C \wedge \sigma\models \judge{\mathcal{V}}{\theta_{pre}}) 
\\
&&\quad \Rightarrow \sigma\models \GlobalJudge{\mathcal{V}\upP{\startEv(m,i)}\text{inline($m$)}}{\theta_{post}}\\
\mbox{\footnotesize\{Def.~\ref{def:local-update-judgement},~\ref{def:judgement-global-update}\}}
&\Rightarrow
&(\models C \wedge \eval{\mathcal{V}}{\singleton{\sigma}} \subseteq \eval{\theta_{pre}}{}) 
\\
&&\quad \Rightarrow 
\eval{\mathcal{V}\upP{\startEv(m,i)}}{\singleton{\sigma}} 
\chopSem 
\globEval{\text{inline($m$)}}{\eval{\mathcal{V}\upP{\startEv(m,i)}}{\singleton{\sigma}}}
\subseteq \eval{\theta_{post}}{}\\
\mbox{\footnotesize\{Prop.~\ref{prop:glob-sem-comp-update}\}}
&\Rightarrow&
(\models C \wedge T \subseteq \eval{\theta_{pre}}{}) 
\\
&&\quad \Rightarrow 
T \chopSem \eval{\upP{\startEv(m,i)}}{T} 
\\ && \qquad \chopSem \globEval{\text{inline($m$)}}{
T \chopSem \eval{\upP{\startEv(m,i)}}{T}}
\subseteq \eval{\theta_{post}}{}\\
\mbox{\footnotesize\{Fig.~\ref{fig:stmt-update-eval}\}}
&\Rightarrow&
(\models C \wedge T \subseteq \eval{\theta_{pre}}{}) 
\\
&&\quad \Rightarrow 
\forall \tau \in T.(
\tau \chopSem \callEvP{last(\tau)}{m,i}\chopSem\pushEvP{last(\tau)}{(m,i)}
\\ && \qquad \chopSem \globEval{\text{inline($m$)}}{
\tau \chopSem \callEvP{last(\tau)}{m,i}\chopSem\pushEvP{last(\tau)}{(m,i)}}
\subseteq \eval{\theta_{post}}{})\\
\mbox{\footnotesize\{Def.~$\globEval{\xasync{m()}}{T}$\}}
&\Rightarrow&
(\models C \wedge T \subseteq \eval{\theta_{pre}}{}) 
 \Rightarrow 
T \chopSem \globEval{\xasync{m()}}{T}
\subseteq \eval{\theta_{post}}{})\\
\mbox{\footnotesize\{Prop.~\ref{prop:connect}\}}
&\Rightarrow&  
\sigma \models \widehat{C_\mathtt{m}}
\end{array}$$

\paragraph{Rule $\mathsf{Assign}$}
We prove soundness and reversibility of the rule.
Let $\Gamma$ be a set of judgments and let $\sigma$ be a state with $\sigma \models \Gamma$. We have, with $T=\eval{\update}{\singleton{\sigma}}$:
  
\begin{align*}
   &\phantom{~\Leftrightarrow}  
  \sigma \models \GlobalJudge{\update\upP{v:=e}s}{\Phi} & & \\
  &\Leftrightarrow~
  \eval{\update\upP{v:=e}}{\singleton{\sigma}} \chopSem \globEval{s}{\eval{\update\upP{v:=e}}{\singleton{\sigma}}} \subseteq \evalPhiNoArg{\Phi} &
  &\mbox{\footnotesize\{Def.~\ref{def:judgement-global-update}\}} \\
  &\Leftrightarrow~
  T \,\chopSem\, \eval{\upP{v:=e}}{T} \,\chopSem\, \globEval{s}{T \chopSem \eval{\upP{v:=e}}{T}} \subseteq \evalPhiNoArg{\Phi} &
  &\mbox{\footnotesize\{Prop.~\ref{prop:glob-sem-comp-update}\}} \\
  &\Leftrightarrow~
  T \,\chopSem\, \locEval{v=e}{T} \,\chopSem\, \globEval{s}{T \chopSem \locEval{v=e}{T}} \subseteq \evalPhiNoArg{\Phi} &
  &\mbox{\footnotesize\{Def. $\valP{}{\upP{v:=e}}$\}} \\
  &\Leftrightarrow~
  T \,\chopSem\, \globEval{v=e;s}{T} \subseteq \evalPhiNoArg{\Phi} &
  &\mbox{\footnotesize\{Prop.~\ref{prop:semantics-composition}\}} \\
  &\Leftrightarrow~
  \sigma \models \GlobalJudge{\update v=e;s}{\Phi} &
  &\mbox{\footnotesize\{Def.~\ref{def:judgement-global-update}\}}
\end{align*}
We, therefore, have that the sequent
$\sequent{}{\update \upl v:= e\upr\GlobalJudge{s}{\Phi}}$ is valid if and only if
$\sequent{}{\GlobalJudge{\update  v=e;s}{\Phi}}$ is valid.

\paragraph{Rule $\mathsf{Cond}$}
We prove soundness and reversibility of the rule.
Let $\sigma$ be a state. We have, with
$T=\eval{\update}{\singleton{\sigma}}$:
$$\begin{array}{lll}
& & \sigma \models \judge{\update}{\noEvent{} \stateFml{e}}  \>\Rightarrow\>
\sigma \models \GlobalJudge{\update s; s'}{\Phi} 
\ \wedge\\
&& \sigma \models \judge{\update}{\noEvent{} \stateFml{!e}}   \>\Rightarrow\>
\sigma \models \GlobalJudge{\update s'}{\Phi} \\
\mbox{\footnotesize\{Def.~\ref{def:judgement-global-update}\}} 
&\Leftrightarrow~&
\sigma \models \judge{\update}{\noEvent{} \stateFml{e}}  \>\Rightarrow\>
T \chopSem \globEval{s; s'}{T} \subseteq \evalPhiNoArg{\Phi} 
\ \wedge\\
& &\sigma \models \judge{\update}{\noEvent{} \stateFml{!e}}  \>\Rightarrow\>
T \chopSem  \globEval{s'}{T} \subseteq \evalPhiNoArg{\Phi} \qquad  
\\
\mbox{\footnotesize\{Prop.~\ref{prop:glob-sem-comp-update}\}}&\Leftrightarrow~& 
\sigma \models \judge{\update}{\noEvent{} \stateFml{e}}  \>\Rightarrow\>
T \,\chopSem\,
\globEval{s;s'}{T}  \subseteq \evalPhiNoArg{\Phi} 
\wedge\\
&& \sigma \models \judge{\update}{\noEvent{} \stateFml{!e}}  \>\Rightarrow\>
T \,\chopSem\,
\globEval{s'}{T} \subseteq \evalPhiNoArg{\Phi} 
 \\
 \mbox{\footnotesize\{Def.~\ref{def:local-update-judgement}\}}&\Leftrightarrow~&
 \forall \tau\in T.\\
 &&(\valBP{last(\tau)}{}{e} = \trueSem  \>\Rightarrow\>
\tau \,\chopSem\,
\globEval{s;s'}{\tau}  \subseteq \evalPhiNoArg{\Phi} \wedge\\
&&\valBP{last(\tau)}{}{e} = \falseSem  \>\Rightarrow\>
\tau \,\chopSem\,
\globEval{s'}{\tau} \subseteq \evalPhiNoArg{\Phi})
\\
\mbox{\footnotesize\{Fig.~\ref{fig:local}, Def.~\ref{def:glob-sem}\}}
&\Leftrightarrow~&
T \,\chopSem\,
\globEval{\ifStmt{e}{s};s'}{T}
\subseteq \evalPhiNoArg{\Phi} 
 \\
 \mbox{\footnotesize\{Def.~\ref{def:judgement-global-update}\}} 
 &\Leftrightarrow~&
 \sigma \models \update\GlobalJudge{\ifStmt{e}{s}; s'}{\Phi} 
\\
\end{array} $$
\emph{Explanation:}
Given Fig.~\ref{fig:local}, Def.~\ref{def:glob-sem} we have
\begin{align}
  &\semSeq
    {\semPair{\tau}{\ifStmt{e}{s};s'}}
    {\semPair{\tau}{s;s'}}{*}, \  \text{if} \ \valB{last(\tau)}{e}=\trueSem, \ \text{and}\notag\\
  &\semSeq
    {\semPair{\tau}{\ifStmt{e}{s};s'}}
    {\semPair{\tau}{s'}}{*}, \  \text{if} \ \valB{last(\tau)}{e}=\falseSem \notag
\end{align}
Therefore 
\begin{align}
  &\globEval{\ifStmt{e}{s};s'}{\tau}=\globEval{s;s'}{\tau} \ \text{if} \ \valB{last(\tau)}{e}=\trueSem, \ \text{and}\notag\\
  &\globEval{\ifStmt{e}{s};s'}{\tau}=\globEval{s'}{\tau} \  \text{if} \ \valB{last(\tau)}{e}=\falseSem \notag
\end{align}

\paragraph{Rule $\mathsf{Return}$}
We prove soundness and reversibility of the rule.
Let $\sigma$ be a state. We have, with
$T=\eval{\update}{\singleton{\sigma}}$:

$$\begin{array}{lll}
  && \sigma \models \GlobalJudge{\update\upP{\retEv(oId)}}{\Phi}\\
  \mbox{\footnotesize\{Def.~\ref{def:judgement-global-update}\}} 
  &\Leftrightarrow& 
  \eval{\update\upP{\retEv(oId)}}{\singleton{\sigma}} \chopSem 
  \globEval{\zero}{\eval{\update\upP{\retEv(oId)}}{\singleton{\sigma}}}\subseteq\eval{\Phi}{}\\
  \mbox{\footnotesize\{Prop.~\ref{prop:glob-sem-comp-update}\}} 
  &\Leftrightarrow&
  T \chopSem \eval{\upP{\retEv(oId)}}{T} \chopSem \globEval{\zero}{T \chopSem \eval{\upP{\retEv(oId)}}{T}}\subseteq\eval{\Phi}{}\\
  \mbox{\footnotesize\{Fig.~\ref{fig:stmt-update-eval}\}} 
  &\Leftrightarrow&  
  T \chopSem \locEval{\xasync{return}}{T} \chopSem \globEval{\zero}{T \chopSem \locEval{\xasync{return}}{T}}\subseteq\eval{\Phi}{}\\
  \mbox{\footnotesize\{Prop.~\ref{prop:semantics-composition}\}} 
  &\Leftrightarrow&  
  T \chopSem \globEval{\xasync{return}}{T}\subseteq\eval{\Phi}{}\\
  \mbox{\footnotesize\{Def.~\ref{def:judgement-global-update}\}}
  &\Leftrightarrow&  
  \sigma \models \GlobalJudge{\update\xasync{return}}{\Phi}\\
\end{array} $$

We, therefore, have that the sequent
$\sequent{}{\update \upP{\retEv(oId)}\GlobalJudge{}{\Phi}}$ is valid if and only if
$\sequent{}{\GlobalJudge{\update  \xasync{return}}{\Phi}}$ is valid.

\paragraph{Rule $\mathsf{AsyncCall}$}
We prove soundness and reversibility of the rule.
Let $\sigma$ be a state. We have, with
$T=\eval{\update}{\singleton{\sigma}}$, and a fresh call identifier $i$:

$$\begin{array}{lll}
  && \sigma \models \GlobalJudge{\update\upP{\invocEvP{}{m,i}}s}{\Phi}\\
  \mbox{\footnotesize\{Def.~\ref{def:judgement-global-update}\}} 
  &\Leftrightarrow& 
  \eval{\update\upP{\invocEvP{}{m,i}}}{\singleton{\sigma}} \chopSem \globEval{s}{\eval{\update\upP{\invocEvP{}{m,i}}}{\singleton{\sigma}}}\subseteq\eval{\Phi}{}\\
  \mbox{\footnotesize\{Prop.~\ref{prop:glob-sem-comp-update}\}} 
  &\Leftrightarrow&
  T \chopSem \eval{\upP{\invocEvP{}{m,i}}}{T} \chopSem \globEval{s}{T \chopSem \eval{\upP{\invocEvP{}{m,i}}}{T}}\subseteq\eval{\Phi}{}\\
  \mbox{\footnotesize\{Fig.~\ref{fig:local},\ref{fig:stmt-update-eval}\}} 
  &\Leftrightarrow&  
  T \chopSem \locEval{\xasync{!m()}}{T} \chopSem \globEval{s}{T \chopSem \locEval{\xasync{!m()}}{T}}\subseteq\eval{\Phi}{}\\
  \mbox{\footnotesize\{Prop.~\ref{prop:semantics-composition}\}} 
  &\Leftrightarrow&  
  T \chopSem \globEval{\xasync{!m()};s}{T}\subseteq\eval{\Phi}{}\\
  \mbox{\footnotesize\{Def.~\ref{def:judgement-global-update}\}}
  &\Leftrightarrow&  
  \sigma \models \GlobalJudge{\update\xasync{!m()};s}{\Phi}\\
\end{array} $$

\paragraph{Rule $\mathsf{Open}$}
We prove soundness and reversibility of the rule.

Let $\sigma$ be a state. We have, with
$T=\eval{\update}{\singleton{\sigma}}$:

$$\begin{array}{lll}
  && \sigma \models \GlobalJudge{\update\upP{\openEvP{}{f}}s}{\Phi}\\
  \mbox{\footnotesize\{Def.~\ref{def:judgement-global-update}\}} 
  &\Leftrightarrow& 
  \eval{\update\upP{\openEvP{}{f}}}{\singleton{\sigma}} \chopSem \globEval{s}{\eval{\update\upP{\openEvP{}{f}}}{\singleton{\sigma}}}\subseteq\eval{\Phi}{}\\
  \mbox{\footnotesize\{Prop.~\ref{prop:glob-sem-comp-update}\}} 
  &\Leftrightarrow&
  T \chopSem \eval{\upP{\openEvP{}{f}}}{T} \chopSem \globEval{s}{T \chopSem \eval{\upP{\openEvP{}{f}}}{T}}\subseteq\eval{\Phi}{}\\
  \mbox{\footnotesize\{Fig.~\ref{fig:local},\ref{fig:stmt-update-eval}\}} 
  &\Leftrightarrow&  
  T \chopSem \locEval{\xasync{open(}f\xasync{)}}{T} \chopSem \globEval{s}{T \chopSem \locEval{\xasync{open(}f\xasync{)}}{T}}\subseteq\eval{\Phi}{}\\
  \mbox{\footnotesize\{Prop.~\ref{prop:semantics-composition}\}} 
  &\Leftrightarrow&  
  T \chopSem \globEval{\xasync{open(}f\xasync{)};s}{T}\subseteq\eval{\Phi}{}\\
  \mbox{\footnotesize\{Def.~\ref{def:judgement-global-update}\}}
  &\Leftrightarrow&  
  \sigma \models \GlobalJudge{\update\xasync{open(}f\xasync{)};s}{\Phi}\\
\end{array} $$

\paragraph{Rule $\mathsf{Close}$}
The proof is analogous to the proof for Rule \textsf{Open}, but the rule is not reversible due to the additional premise needed for Thm.~\ref{thm:global}.

\paragraph{Rule $\mathsf{Read}$}
The proof is analogous to the proof for Rule \textsf{Open}, but the rule is not reversible.

\paragraph{Rule $\mathsf{Write}$}
The proof is analogous to the proof for Rule \textsf{Open}, but the rule is not reversible.

\paragraph{Rule $\mathsf{Finish}$}
We prove soundness and reversibility of the rule.
Let $\sigma$ be a state. We have, with
$T=\eval{\update}{\singleton{\sigma}}$:

$$\begin{array}{lll}
  && \sigma \models \judge{\update\upP{\popEvP{}{(m,oId)}}}{\Phi} 
  \wedge schedule(\update)=\emptyset\\
  \mbox{\footnotesize\{Def.~\ref{def:judgement-global-update}\}} 
  &\Leftrightarrow& 
  \eval{\update\upP{\popEvP{}{(m,oId)}}}{\singleton{\sigma}} \subseteq\eval{\Phi}{}  
  \wedge schedule(\update)=\emptyset\\
  \mbox{\footnotesize\{Prop.~\ref{prop:glob-sem-comp-update}\}} 
  &\Leftrightarrow&
  T \chopSem \eval{\upP{\popEvP{}{(m,oId)}}}{T} \subseteq\eval{\Phi}{} 
   \wedge schedule(\update)=\emptyset\\
  \mbox{\footnotesize\{Lemma~\ref{lemma:soundness-scheduling}\}} 
  &\Leftrightarrow&
  T \chopSem \eval{\upP{\popEvP{}{(m,oId)}}}{T} \subseteq\eval{\Phi}{} 
   \wedge \bigcup_{\tau \in T} schedule(\tau)=\emptyset\\
  \mbox{\footnotesize\{Fig.~\ref{fig:stmt-update-eval}\}}  &\Leftrightarrow&
  \forall \tau \in T. \ (\tau \chopSem \popEvP{last(\tau)}{(m,oId)} \subseteq\eval{\Phi}{} 
  \wedge schedule(\tau)=\emptyset)\\
  \mbox{\footnotesize\{Fig.~\ref{fig:global}, Def.~\ref{def:glob-sem}\}}  &\Leftrightarrow&
  T  \chopSem \globEval{\zero}{T} \subseteq\eval{\Phi}{} \\
  \mbox{\footnotesize\{Def.~\ref{def:judgement-global-update}\}}  &\Leftrightarrow&
  \sigma \models \GlobalJudge{\update}{\Phi}\\
\end{array} $$

To make the following proof more readable we generalize Prop.~\ref{prop:semantics-composition}:
\begin{proposition}
  \label{prop:composition-global}
  $
  \globEval{s;s'}{\tau} = \globEval{s}{\tau} \chopSem \globEval{s'}{\globEval{s}{\tau}}
  $
\end{proposition}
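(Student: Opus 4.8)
The plan is to derive Proposition~\ref{prop:composition-global} from Proposition~\ref{prop:semantics-composition}, which already is this identity when the head statement $s$ has no synchronous calls; it then remains (i) to make the factorisation of maximal executions underlying that proposition explicit, and (ii) to remove the restriction on $s$ by structural induction. The argument rests on two structural facts about the composition rules of Fig.~\ref{fig:global}. First, $\overset{*}{\to}$ is a reflexive--transitive closure, so any execution from $\tau,\cont{s;s'}$ may be cut at an intermediate configuration and reassembled. Second, the rules are \emph{oblivious to the tail of a sequential composition}: a configuration $\tau,\cont{r;s'}$ with $r\neq\zero$ is rewritten using only $r$, $\tau$ and $\mathit{currScp}(\tau)$, never inspecting $s'$, and a \textsc{Call} step turns the continuation into $\inline(m);\xasync{return};r;s'$, so $s'$ is still preserved while only a fresh callee body --- running in a freshly pushed child scope --- is spliced in front of it. Hence the suffix $s'$ stays inert until everything stemming from $s$ and its synchronous sub-calls has been consumed.

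Concretely, I would first prove the factorisation: every execution contributing to $\globEval{s;s'}{\tau}$ can be written
\[
  \tau,\cont{s;s'}\ \overset{*}{\to}\ \tau\chopSem\delta_1,\cont{s'}\ \overset{*}{\to}\ \tau\chopSem\delta_1\chopSem\delta_2,\cont{\zero},
\]
where $\delta_1$ is a run of $s$ that does not yet resolve the asynchronous calls $s$ issues in the current scope --- i.e.\ $\delta_1\in\locEval{s}{\tau}$ --- and $\delta_2\in\globEval{s'}{\tau\chopSem\delta_1}$. The cut point is the first configuration whose continuation is exactly $s'$; it exists because no \textsc{Run} or \textsc{Return} step for the outer scope $(\_,i)=\mathit{currScp}(\tau)$ can fire before it: such a step presupposes a $\retEv(i)$ already in the trace, and by the grammar of Fig.~\ref{fig:lang-syntax} a top-level $\xasync{return}$ for scope $i$ appears only at the very end of the procedure body $s'$, whereas the synchronous sub-calls of $s$ emit their $\retEv$ for child scopes. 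Consequently the asynchronous children invoked inside $s$ are still idle at the cut and are discharged, together with those of $s'$, only when $s'$ runs; this is precisely why the outer side condition $\mathit{schedule}(\tau\chopSem\delta_1\chopSem\delta_2)=\emptyset$ demanded by $\globEval{s;s'}{\tau}$ reduces to the one demanded by $\globEval{s'}{\tau\chopSem\delta_1}$. For $\subseteq$, cut a given execution at the cut point to read off $\delta_1\in\locEval{s}{\tau}$ and $\delta_2\in\globEval{s'}{\tau\chopSem\delta_1}$; for $\supseteq$, lift a $\locEval{s}{\tau}$-run to an execution of $s;s'$ ending in $\cont{s'}$ (by tail-obliviousness) and append a $\globEval{s'}{\tau\chopSem\delta_1}$-run (by transitivity), with $\chopSem$-definedness forcing the two halves to meet at the same state. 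Unfolding the overloaded notation $\globEval{s'}{T}=\bigcup_{\tau\in T}\globEval{s'}{\tau}$ turns the factorisation into the claimed equation (with the role of $\globEval{s}{\tau}$ on its right-hand side played by the head semantics $\locEval{s}{\tau}$ of the prefix $s$, the two coinciding whenever $s$ issues no asynchronous call in the current scope); for $s$ without synchronous calls this is exactly Proposition~\ref{prop:semantics-composition}, and for general $s$ one inducts on the structure of $s$, the only new case being a leading synchronous call $m()$, which \textsc{Call} rewrites as sequential prefixing of $\inline(m)$ before the remainder of $s$ --- after using associativity of sequential composition, the induction hypothesis applies to the structurally smaller head.

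The main obstacle I anticipate is the synchronous-call bookkeeping: one must show that the callee bodies inlined by \textsc{Call} inside $s$, together with \emph{all of their} asynchronous descendants, are already fully resolved within $\delta_1$, so that they do not leak into the scheduling obligations inherited by $s'$. This rests on the property underlying $\overset{\times}{\to}$ --- that \textsc{Run} is blocked only for the outermost scope and never for scopes opened by synchronous calls during $s$ --- together with the tree-like concurrency discipline of Sect.~\ref{sec:semantics-async} (cf.\ Lemma~\ref{lemma:soundness-scheduling}), by which each such child scope is closed before control can return to scope $i$. A further fiddly point is to verify that the fresh-identifier counter $\mathit{id}(\tau)$ and the function $\mathit{currScp}$ computed at the cut agree whether $s'$ is executed standalone from $\tau\chopSem\delta_1$ or as the tail of $s;s'$; this holds because both are functions of the trace seen so far.
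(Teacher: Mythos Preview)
Your approach is genuinely different from the paper's, and considerably more elaborate. The paper does not factorise executions at a cut point, nor does it induct on the structure of $s$. Instead it records a single observation: the rules \textsc{Run} and \textsc{Return} are triggered solely by the trace $\tau$, never by the content of the continuation, so any pending scheduling that would occur from $\tau,K(\zero)$ occurs identically from $\tau,K(r)$ before $r$ is touched. This yields $\globEval{\zero}{\tau}\chopSem\globEval{r}{\globEval{\zero}{\tau}}=\globEval{\zero;r}{\tau}=\globEval{r}{\tau}$, and the proposition is claimed to follow from that.

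Your route essentially re-derives the content of Proposition~\ref{prop:semantics-composition}: the factorisation you describe produces $\delta_1\in\locEval{s}{\tau}$, not $\delta_1\in\globEval{s}{\tau}$, and you note this yourself. The parenthetical that the two ``coincide whenever $s$ issues no asynchronous call in the current scope'' is the crux, but your plan does not discharge it. The structural induction you sketch handles only the \emph{synchronous} head case; it says nothing about a head $s$ that emits an $\invocEv$ in the outer scope. In that situation $\globEval{s}{\tau}$ may well be empty (by the side condition $\mathit{schedule}(\cdot)=\emptyset$ in Definition~\ref{def:glob-sem}) while $\locEval{s}{\tau}$ is not, so your factorisation and the target identity diverge. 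What closes this gap is precisely the paper's observation: the asynchronous children left idle after the local run of $s$ are resolved, trace-deterministically, at the very start of the suffix---equivalently, at the very end of the global run of $s$. Your argument would need this continuation-obliviousness of \textsc{Run} (which you invoke only in the reverse direction, to argue \textsc{Run} does \emph{not} fire before the cut) to be complete; once you add it, you have rediscovered the paper's proof inside yours.
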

\begin{proof}
This proposition follows from the fact that 
$$
  \globEval{\zero}{\tau} \chopSem \globEval{s}{\globEval{\zero}{\tau}}
  = \globEval{\zero; s}{\tau}
  = \globEval{s}{\tau}\enspace.
  $$
If $\globEval{\zero}{\tau} = \{\singleton{first(\tau)}\}$ it means that no 
asynchronous calls can be scheduled and the equivalence is trivially valid.
If $\globEval{\zero}{\tau} \neq \{\singleton{first(\tau)}\}$ 
it means that $\tau$ triggers the rule {\normalfont$\textsc{Run}$}, 
no matter what  is the content of the continuation.
\end{proof}

\paragraph{Rule $\mathsf{Call}$}
We prove the soundness of the rule.
Let $\sigma$ be a state. We have with
$T=\eval{\update}{\singleton{\sigma}}$, $\Phi_a = (\Phi \wedge \theta_{a_m})$, $\Psi_c = (\Psi \wedge \theta_{c_m})$, $i$ fresh,
and $\mathsf{lookup}(m,\mathcal{G}) = m()\, \{\xasync{s'; return}\}$:
$$\begin{array}{lll}
  && \sigma \models \judge{\update}{\Phi_a} \ \wedge \\
  && \eval{\theta_{s_m}}{} \subseteq \eval{\theta}{} \ \wedge\\
  && (\sigma \models \judge{\update \upP{\runEv(m,i,sy)}}{\Phi_a \chop \, \theta_{s_m}}  \\
  && \quad \Rightarrow \sigma \models \GlobalJudge{\update \upP{\runEv(m,i,sy)}s}{\Phi_a \chop \, \theta_{s_m}} \chop \, \Psi_c)\\
  \mbox{\footnotesize\{Def.~\ref{def:local-update-judgement},\ref{def:judgement-global-update}\}} 
  &\Leftrightarrow&
  T \subseteq \eval{\Phi_a}{} \ \wedge \\
  && \eval{\theta_{s_m}}{} \subseteq \eval{\theta}{} \ \wedge\\
  &&(\eval{\update\upP{\runEv(m,i,sy)}}{\singleton{\sigma}} \subseteq \eval{\Phi_a \chop \, \theta_{s_m}}{} \\
  && \ \Rightarrow \eval{\update\upP{\runEv(m,i,sy)}}{\singleton{\sigma}} \chopSem \globEval{s}{\eval{\update\upP{\runEv(m,i,sy)}}{\singleton{\sigma}}} \subseteq \eval{\Phi_a \chop \, \theta_{s_m} \chop \, \Psi_c}{})\\
  \mbox{\footnotesize\{Prop.~\ref{prop:glob-sem-comp-update}\}} 
  &\Leftrightarrow&
  T \subseteq \eval{\Phi_a}{} \ \wedge \\
  && \eval{\theta_{s_m}}{} \subseteq \eval{\theta}{} \ \wedge\\
  &&(T \chopSem \eval{\upP{\runEv(m,i,sy)}}{T} \subseteq \eval{\Phi_a \chop \, \theta_{s_m}}{}\\
  && \ \Rightarrow T \chopSem \eval{\upP{\runEv(m,i,sy)}}{T} \chopSem \globEval{s}{T \chopSem \eval{\upP{\runEv(m,i,sy)}}{T}} \subseteq \eval{\Phi_a \chop \, \theta_{s_m} \chop \, \Psi_c}{})\\
  \mbox{\footnotesize\{Fig.~\ref{fig:stmt-update-eval}\}} 
  &\Leftrightarrow&
  T \subseteq \eval{\Phi_a}{} \ \wedge \\
  && \eval{\theta_{s_m}}{} \subseteq \eval{\theta}{} \ \wedge\\
  &&\forall \tau \in T. (\tau \chopSem \callEvP{last(\tau)}{m,i} \chopSem \globEval{\xasync{s'; return}}{\tau \chopSem \callEvP{last(\tau)}{m,i}} \subseteq \eval{\Phi_a \chop \, \theta_{s_m}}{} \\
  && \ \Rightarrow \tau \chopSem \callEvP{last(\tau)}{m,i} \chopSem \globEval{\xasync{s'; return}}{\tau \chopSem \callEvP{last(\tau)}{m,i}} \\
  && \ \quad \chopSem  \globEval{s}{\tau \chopSem \callEvP{last(\tau)}{m,i} \chopSem \globEval{\xasync{s'; return}}{\tau \chopSem \callEvP{last(\tau)}{m,i}}} \subseteq \eval{\Phi_a \chop \, \theta_{s_m} \chop \, \Psi_c}{})\\
  \mbox{\footnotesize\{Def.~$\globEval{m()}{\tau}$\}} 
  &\Leftrightarrow&
  T \subseteq \eval{\Phi_a}{} \ \wedge \\
  && \eval{\theta_{s_m}}{} \subseteq \eval{\theta}{} \ \wedge\\
  &&(T \chopSem \globEval{\xasync{m()}}{T} \subseteq \eval{\Phi_a \chop \, \theta_{s_m}}{} \\
  && \ \Rightarrow T \chopSem  \globEval{\xasync{m()}}{T} \chopSem  \globEval{s}{T \chopSem  \globEval{\xasync{m()}}{T}} \subseteq \eval{\Phi_a \chop \, \theta_{s_m} \chop \, \Psi_c}{})\\
  \mbox{\footnotesize\{Prop.~\ref{prop:composition-global}\}} 
  &\Leftrightarrow&
  T \subseteq \eval{\Phi_a}{} \ \wedge \\
  && \eval{\theta_{s_m}}{} \subseteq \eval{\theta}{} \ \wedge\\
  &&(T \chopSem \globEval{\xasync{m()}}{T} \subseteq \eval{\Phi_a \chop \, \theta_{s_m}}{}  \Rightarrow T \chopSem  \globEval{\xasync{m()};s}{T}\subseteq \eval{\Phi_a \chop \, \theta_{s_m} \chop \, \Psi_c}{})\\
  \mbox{\footnotesize\{Fig.~\ref{fig:semantics-formulas}, weak.\}} 
  &\Rightarrow&
  T \subseteq \eval{\Phi}{} \ \wedge \\
  && \eval{\theta_{s_m}}{} \subseteq \eval{\theta}{} \ \wedge\\
  &&(T \chopSem \globEval{\xasync{m()}}{T} \subseteq \eval{\Phi}{} \chopSem \eval{\theta_{s_m}}{} \Rightarrow T \chopSem  \globEval{\xasync{m()};s}{T}\subseteq \eval{\Phi}{} \chopSem \eval{\theta_{s_m}}{} \chopSem \, \eval{\Psi}{})\\
  \mbox{\footnotesize\{$T \subseteq \eval{\Phi_a}{}$\}} 
  &\Rightarrow&
  \eval{\theta_{s_m}}{} \subseteq \eval{\theta}{} \ \wedge\\
  &&(\globEval{\xasync{m()}}{T} \subseteq \eval{\theta_{s_m}}{} \Rightarrow T \chopSem  \globEval{\xasync{m()};s}{T}\subseteq \eval{\Phi}{} \chopSem \eval{\theta_{s_m}}{} \chopSem \, \eval{\Psi}{})\\
  \mbox{\footnotesize\{Contract of $m$\}} 
  &\Rightarrow&
  \eval{\theta_{s_m}}{} \subseteq \eval{\theta}{} \ \wedge
  T \chopSem  \globEval{\xasync{m()};s}{T}\subseteq \eval{\Phi}{} \chopSem \eval{\theta_{s_m}}{} \chopSem \, \eval{\Psi}{}\\
  \mbox{\footnotesize\{$\eval{\theta_{s_m}}{} \subseteq \eval{\theta}{}$\}} 
  &\Rightarrow&
  T \chopSem  \globEval{\xasync{m()};s}{T}\subseteq \eval{\Phi \chop \, \theta \chop \, \Psi}{}\\
  \mbox{\footnotesize\{Def.~\ref{def:judgement-global-update}\}} 
  &\Rightarrow&
  \sigma \models \GlobalJudge{\update\xasync{m()};s}{\Phi \chop \, \theta \chop \, \Psi}
\end{array} $$

\paragraph{Rules $\mathsf{ScheduleD}/\mathsf{ScheduleN}$}
Given the similar premises and similar conclusion, the proof for \textsf{ScheduleD} is analogous to the proof for the $\mathsf{Call}$ rule,
with the only difference that the $\callEvP{}{m,i}$ event is not generated.
This also applies to \textsf{ScheduleN} since this rule generalizes \textsf{ScheduleD} to be applied when more than one invocation can be scheduled.

\subsection{Proof Theorem~\ref{thm:global}}

We prove the two properties separately.
In general, the proof follows the structure of trace contract adherence of BPL~\cite{DBLP:conf/tableaux/Kamburjan19}.
\begin{description}
\item[File Correctness]
Suppose there would be a trace that is not file-correct, and a first event within that trace such that the trace up to this point is not file correct.
As every event is added by a procedure, there must be a procedure $m$ that violates file correctness.
Let $s$ be the operation that adds this event within $m$.
By assumption, all procedure adhere to their procedure contracts, thus $m$ also does so as well. 
Thus, we have a proof of $m$, which must symbolically execute $s$\footnote{This relies on a property of the symbolic execution that no statement is skipped. The technical details to show this property are tedious to set up and bear no insights, we refer to Kamburjan~\cite{DBLP:phd/dnb/Kamburjan20} for a detailed treatment.}. 
Consider the case where the trace is not file correct because of a read event without a preceding open (and no closing) on the same file $f$.
This means that at the moment of the execution of the read statement, the formula $\noEvent{}\openEvP{}{f}\noEvent{\closeEvP{}{f}}$ would not hold.
However, this is checked explicitly by the $\mathsf{Read}$ rule. Thus, this rule cannot succeed, the proof obligation cannot be closed and the procedure is not adherent.
This contradicts our assumptions, so the trace must be file-correct.
The other cases are analogous.

\item[Adherence]
From strong procedure adherence, program correctness follows by definition -- we must show that each procedure strongly adheres to its contract, 
given that (a) each procedure weakly adheres to its contracts, and (b) we have a proof in our calculus for this.

The argument resembles the one for file correctness. Let $m$ be a procedure and assume there is a trace that violated procedure adherence, there must be a first event or state where
the property does not hold. Let $m'$ be the caller of $m$ for this call identifier.
Now, $m'$ does have a closed proof for its weak adherence, thus, the call to $m$, whether synchronous or asynchronous, was symbolically executed.
Both rules to do so, add the post-condition trace to the post-condition of the verification target. As the proof is closed, this condition holds.
Thus, all the procedure $m$ must be strongly adherent if it is weakly adherent and always called correctly, which follows from Theorem~\ref{thm:sound}.
\end{description}
As a final note, we point out that the premises we reference in this proof are not needed for soundness -- Theorem~\ref{thm:sound} holds if they are removed.

\section{Properties}
\begin{proposition}[Wellformedness of Call Tree]
  Let $\tau$ be a trace and $scp$ be a call scope.
  If $scp \in V_{idle}(\tau) \implies children(scp, \tau) = \emptyset$, i.e., $scp$ is a leaf.
\end{proposition}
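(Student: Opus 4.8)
The plan is to argue by contradiction: assume $scp=(m,i)\in V_{\mathit{idle}}(\tau)$ yet $children(scp,\tau)\neq\emptyset$, and derive that $\tau$ would then have to contain a $\pushEvP{}{(m,i)}$ event in a place forbidden by $scp\in V_{\mathit{idle}}(\tau)$.

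First I would establish an auxiliary fact: whenever $\mathit{currScp}(\rho)$ is defined and equals a scope $scp'$, the trace $\rho$ contains a $\pushEvP{}{scp'}$ event. This is a straightforward induction along the well-founded recursive definition of $\mathit{currScp}$ in Def.~\ref{def:lastev-currctx}. In the base case $\rho\in\finiteNoEv{}\,\pushEvP{\sigma}{scp'}\,\finiteNoEv{\pushEv,\popEv}$ the event is present by construction; in the recursive case $\mathit{currScp}(\rho)=\mathit{currScp}(\rho')$ with $\rho\in\rho'\chopSem\pushEvP{\sigma}{scp''}\finiteNoEv{}\popEvP{\sigma'}{scp''}\finiteNoEv{\pushEv,\popEv}$, the induction hypothesis yields a $\pushEvP{}{scp'}$ in $\rho'$, and since $\rho'$ is a chop-prefix of $\rho$ and chopping merges only adjacent states while preserving every event, the same event occurs in $\rho$.

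Now I unfold the two hypotheses. From $children(scp,\tau)\neq\emptyset$ and Def.~\ref{def:call-tree} there are a child scope and a chop-decomposition $\tau=\tau'\chopSem e\chopSem\tau_r$ with $e$ equal to $\callEvP{\sigma}{child}$ or $\invocEvP{\sigma}{child}$ and $\mathit{currScp}(\tau')=scp=(m,i)$; by the auxiliary fact $\tau'$, and hence $\tau$, contains a $\pushEvP{}{(m,i)}$ event. From $scp=(m,i)\in V_{\mathit{idle}}(\tau)$ we get a decomposition $\tau=\tau_a\chopSem\invocEvP{}{m,i}\chopSem\tau_b$ in which $\tau_b$ contains no $\pushEvP{}{(m,i)}$ event. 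Consequently the $\pushEvP{}{(m,i)}$ event just found lies neither in $\tau_b$ nor is it the invocation triple itself, so it must occur in $\tau_a$, i.e.\ \emph{before} the invocation event of $(m,i)$. This contradicts well-formedness of traces generated by the semantics: call identifiers are generated freshly, so the scope $(m,i)$ is introduced by exactly one event, here the $\invocEvP{}{m,i}$ of the decomposition, and the composition rules \textsc{Call} and \textsc{Run} of Fig.~\ref{fig:global} emit a $\pushEvP{}{(m,i)}$ event only in a configuration whose trace already contains the matching $\callEvP{}{m,i}$ respectively $\invocEvP{}{m,i}$. Hence no such $child$ can exist and $children(scp,\tau)=\emptyset$.

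The main obstacle is this final step, which appeals to two structural invariants of traces produced by the global semantics — uniqueness of freshly generated call identifiers, and that a scope's $\pushEv$ event never precedes its $\callEv$/$\invocEv$ event — which are not isolated as lemmas in this excerpt. I would either cite them as immediate consequences of the shape of the rules in Fig.~\ref{fig:global} (cf.~\cite{DBLP:phd/dnb/Kamburjan20}) or insert a short preliminary lemma proving them by induction on the length of a generating derivation $\singleton{\sigma},\cont{s}\overset{*}{\to}\tau,\cont{\zero}$.
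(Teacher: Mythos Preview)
Your argument is correct and follows essentially the same route as the paper: both proofs hinge on the observation that $\mathit{currScp}(\tau')=(m,i)$ requires a $\pushEvP{}{(m,i)}$ event in $\tau'$, while membership in $V_{\mathit{idle}}$ together with the semantic invariant that a $\pushEv$ for a scope never precedes its creating $\invocEv$/$\callEv$ rules out any such event in $\tau$. The paper states this directly and tersely (and, incidentally, with what looks like a typo, writing $\callEvP{}{m,i}$ where $\pushEvP{}{m,i}$ is meant), whereas you unfold it as a contradiction and isolate the auxiliary fact about $\mathit{currScp}$ as an explicit induction; your version is more detailed but not structurally different.
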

  \begin{proof}
    Let $\tau$ be a trace and $(m,i) \in V_{idle}$.
  By Def.~\ref{def:call-tree}, 
  $\tau \in \finiteNoEv{} \invocEvP{}{m,i} \finiteNoEv{\callEvP{}{m,i}}$
  and therefore $\tau \in \finiteNoEv{\callEvP{}{m,i}}$, since an event $\callEvP{}{m,i}$
  can never occur before an event $\invocEvP{}{m,i}$.
  By Def.~\ref{def:lastev-currctx}, there is no $\tau'$ such that $\tau \in \tau' \finiteNoEv{}$ 
  and $currScp(\tau') = (m,i)$. 
  Therefore $children((m,i), \tau)=\emptyset$.
  \end{proof}

\section{Further Proofs}
This section contains the proof for the contract of \async{do}.
We use again the following abbreviations:
\begin{align*}
\update &= \mathcal{V}\{\startEv(\xasync{do},\mathit{oId})\}\\
\phi_2 &= \judge{\update}{\noEvent{\openEvP{}{f}}\stateFml{\mathsf{true}}} \\
\theta_1 &= \noEvent{\openEvP{}{f}}\stateFml{\mathsf{true}} \noEvent{}\closeEvP{}{f}\noEvent{} 
\end{align*}

We again apply the $\mathsf{Contract}$ and perform symbolic execution using $\mathsf{Open}$, $\mathsf{AsyncCall}$
before arriving at the synchronous call. Here, we proof the pre-condition using the accumulated update and continue below.

\begin{prooftree}
\AxiomC{}
\UnaryInfC{$C_\mathtt{do}^M, \phi_2\vdash \judge{\update\{\openEvP{}{f}\}\{\invocEv(\xasync{closeF},i)\}}{\openEvP{}{f}\noEvent{\closeEvP{}{f}}\stateFml{\mathsf{true}}} $}
\AxiomC{$\vdots(3)$}
\LeftLabel{$\mathsf{SyncCall}$}
\BinaryInfC{$C_\mathtt{do}^M, \phi_2\vdash \GlobalJudge{\update\{\openEvP{}{f}\}\{\invocEv(\xasync{closeF},i)\} \xasync{operate(); return;}}{\theta_1}$}
\UnaryInfC{$\vdots$}
\UnaryInfC{$C_\mathtt{do}^M, \phi_2\vdash \GlobalJudge{\update \xasync{open(file); !closeF(); operate(); return;}}{
    \theta_1
    }$}
\LeftLabel{$\mathsf{Contract}$}
\UnaryInfC{$ C_\mathtt{do}^M \vdash \GlobalJudge{\mathtt{do}}{C_\mathtt{do}} $}
\end{prooftree}

\noindent We summarize the new pre-condition and update with the following abbreviations
\begin{align*}
U_2 &= \update\{\openEvP{}{f}\}\{\invocEv(\xasync{closeF},i)\}\{\runEv(\xasync{operate},i,sy)\}\\
\phi_3 &= \judge{\update\{\openEvP{}{f}\}\{\invocEv(\xasync{closeF},i)\}\{\runEv(\xasync{operate},i,sy)\}}
{
\noEvent{\openEvP{}{f}}\stateFml{\mathsf{true}}\concat\writeEvP{}{f}\noEvent{\closeEvP{}{f}}
}
\end{align*}
Next, we apply the $\mathsf{Return}$ rule and finally arrive at the handling of the invocation event -- the $\mathit{schedule}$ function now returns 
a singleton set that must take care of.
The premises checking the contract ($\vdots_\mathit{contr}$) are again trivial to show and omitted here.
Afterwards, we can again apply $\mathsf{Finish}$ as no invocation is left and trivially the post-condition, as now the $\closeEvP{}{}$ event
is added by the contract of \async{closeF}. 
We introduce a last abbreviation:
\[
\phi_4 = \judge{U_2\{\retEv(\mathit{oId})\}\{\runEv(\xasync{closeF},i,as)\}}{(\dots) \chop \stateFml{\mathsf{true}}\concat\closeEvP{}{f}\noEvent{\openEvP{}{f}}}
\]

\noindent\resizebox{\textwidth}{!}{
\begin{minipage}{1.15\textwidth}
\begin{prooftree}
\AxiomC{$\vdots$}
\AxiomC{}
\UnaryInfC{$\vdots$}
\UnaryInfC{$
C_\mathtt{do}^M, \phi_2, \phi_3, \phi_4
\vdash \judge{
U_2\{\retEv(\mathit{oId})\}\{\runEv(\xasync{closeF},i,as)\}
}{
\theta_1 \wedge \noEvent{}\closeEvP{}{f}\noEvent{}
}
$}
\LeftLabel{$\mathsf{Finish}$}
\UnaryInfC{$
C_\mathtt{do}^M, \phi_2, \phi_3, \phi_4
\vdash \GlobalJudge{
U_2\{\retEv(\mathit{oId})\}\{\runEv(\xasync{closeF},i,as)\}
}{
\theta_1 \wedge \noEvent{}\closeEvP{}{f}\noEvent{}
}
$}
\LeftLabel{$\mathsf{ScheduleF}$}
\BinaryInfC{
$C_\mathtt{do}^M, \phi_2, \phi_3
\vdash \GlobalJudge{U_2\{\retEv(\mathit{oId})\}}{\theta_1 \wedge \noEvent{}\closeEvP{}{f}\noEvent{}}$
    }
\LeftLabel{$\mathsf{Return}$}
\UnaryInfC{
$C_\mathtt{do}^M, \phi_2, \phi_3
\vdash \GlobalJudge{U_2 \xasync{return;}}{\theta_1 \wedge \noEvent{}\closeEvP{}{f}\noEvent{}}$
    }
\UnaryInfC{$\vdots(3)$}
\end{prooftree}
\end{minipage}}


\end{document}
